\definecolor{thelightblue}{RGB}{0,191,255}
\newcommand{\mychoose}[2]{\left( \begin{smallmatrix} #1 \\ #2 \end{smallmatrix} \right)} 
\global\let\tikz@ensure@dollar@catcode=\relax
\definecolor{thelightblue}{RGB}{0,191,255}
\definecolor{theblue}{RGB}{0,0,180}
\renewcommand*\env@matrix[1][*\c@MaxMatrixCols c]{
\hskip -\arraycolsep
\let\@ifnextchar\new@ifnextchar
\array{#1}}
\definecolor{mydarkblue}{RGB}{0, 20, 159} 
\definecolor{mydarkblue}{rgb}{0,0.08,0.45} 
\DeclareSymbolFont{cmbrightop}{OT1}{cmbr}{m}{n}
\DeclareMathSymbol{\sfPsi}{\mathalpha}{cmbrightop}{9}
\definecolor{gray}{RGB}{150,150,150}
\definecolor{theblue}{RGB}{0, 20, 159} 
\definecolor{myyellow}{RGB}{255,255,204}
\definecolor{myred}{RGB}{255,204,204}
\definecolor{myblue}{RGB}{0,200,255}
\definecolor{mygreen}{RGB}{80,220,80}
\newcommand{\etal}{\emph{et al.}}
\newcommand{\eg}{\emph{e.g.}}
\newcommand{\ie}{\emph{i.e.}}
\newcommand{\wrt}{\emph{w.r.t.}\ }
\newtheorem{thm}{Theorem}
\newtheorem{cor}{Corollary}
\newtheorem{Property}{Property}
\newtheorem{Proposition}{Proposition}
\newtheorem{Definition}{Definition}
\newcolumntype{H}{>{\setbox0=\hbox\bgroup}c<{\egroup}@{}}
\algrenewcommand{\alglinenumber}[1]{\fontsize{6.5}{7}\selectfont#1}
\algrenewcommand{\alglinenumber}[1]{\scriptsize#1:}
\newcommand{\setAlgFontSize}{\fontsize{8pt}{9pt}\selectfont}
\newcommand{\multilinenospaceD}[1]{\State \parbox[t]{\dimexpr0.96 \linewidth-\algorithmicindent}{\begin{spacing}{1.1}\setAlgFontSize#1\strut \end{spacing}}}
\newcommand{\hash}{\ensuremath{c}} 
\definecolor{red}{RGB}{0,0,0} 
\begin{document}
\title{Heterogeneous Network Motifs}

\settopmatter{authorsperrow=4}
\author{Ryan A. Rossi}
\orcid{1234-5678-9012-3456}
\affiliation{
\institution{Adobe Research}
}
\author{Nesreen K. Ahmed}
\affiliation{
\institution{Intel Labs}
}
\author{Aldo Carranza}
\affiliation{
\institution{Stanford University}
}
\author{David Arbour}
\affiliation{
\institution{Adobe Research}
}
\author{Anup Rao}
\affiliation{
\institution{Adobe Research}
}
\author{Sungchul Kim}
\affiliation{
\institution{Adobe Research}
}
\author{Eunyee Koh}
\affiliation{
\institution{Adobe Research}
}
\email{}

\renewcommand{\shortauthors}{R.~A.~Rossi et al.}

\begin{abstract}
Many real-world applications give rise to \emph{large heterogeneous networks} where nodes and edges can be of any arbitrary type (\eg, user, web page, location).
Special cases of such heterogeneous graphs include homogeneous graphs, bipartite, k-partite, signed, labeled graphs, among many others.
In this work, we generalize the notion of network motifs to heterogeneous networks.
In particular, small induced typed subgraphs called \emph{typed graphlets} (heterogeneous network motifs) are introduced and shown to be the fundamental building blocks of complex heterogeneous networks.
Typed graphlets are a powerful generalization of the notion of graphlet (network motif) to heterogeneous networks as they capture both the induced subgraph of interest and the types associated with the nodes in the induced subgraph.
To address this problem, we propose a fast, parallel, and space-efficient framework for counting typed graphlets in large networks.
We discover the existence of non-trivial combinatorial relationships between lower-order ($k\!-\!1$)-node typed graphlets and leverage them for deriving many of the \emph{$k$-node typed graphlets} in $o(1)$ constant time.
Thus, we avoid explicit enumeration of those typed graphlets.
Notably, the time complexity matches the best untyped graphlet counting algorithm.
The experiments demonstrate the effectiveness of the proposed framework in terms of runtime, space-efficiency, parallel speedup, and scalability as it is able to handle large-scale networks.
\end{abstract}

\begin{CCSXML}
<ccs2012>
<concept>
<concept_id>10010147.10010178</concept_id>
<concept_desc>Computing methodologies~Artificial intelligence</concept_desc>
<concept_significance>500</concept_significance>
</concept>
<concept>
<concept_id>10010147.10010257</concept_id>
<concept_desc>Computing methodologies~Machine learning</concept_desc>
<concept_significance>500</concept_significance>
</concept>
<concept>
<concept_id>10002950.10003624.10003633.10010917</concept_id>
<concept_desc>Mathematics of computing~Graph algorithms</concept_desc>
<concept_significance>500</concept_significance>
</concept>
<concept>
<concept_id>10002950.10003624.10003633.10010918</concept_id>
<concept_desc>Mathematics of computing~Approximation algorithms</concept_desc>
<concept_significance>500</concept_significance>
</concept>
<concept>
<concept_id>10002950.10003624.10003625</concept_id>
<concept_desc>Mathematics of computing~Combinatorics</concept_desc>
<concept_significance>300</concept_significance>
</concept>
<concept>
<concept_id>10002950.10003624.10003633</concept_id>
<concept_desc>Mathematics of computing~Graph theory</concept_desc>
<concept_significance>300</concept_significance>
</concept>
<concept>
<concept_id>10002951.10003227.10003351</concept_id>
<concept_desc>Information systems~Data mining</concept_desc>
<concept_significance>500</concept_significance>
</concept>
<concept>
<concept_id>10003752.10003809.10003635</concept_id>
<concept_desc>Theory of computation~Graph algorithms analysis</concept_desc>
<concept_significance>500</concept_significance>
</concept>
<concept>
<concept_id>10003752.10003809.10010055</concept_id>
<concept_desc>Theory of computation~Streaming, sublinear and near linear time algorithms</concept_desc>
<concept_significance>500</concept_significance>
</concept>
<concept>
<concept_id>10003752.10003809.10010170</concept_id>
<concept_desc>Theory of computation~Parallel algorithms</concept_desc>
<concept_significance>500</concept_significance>
</concept>
<concept>
<concept_id>10010147.10010257.10010293.10010297</concept_id>
<concept_desc>Computing methodologies~Logical and relational learning</concept_desc>
<concept_significance>500</concept_significance>
</concept>
</ccs2012>
\end{CCSXML}

\ccsdesc[500]{Computing methodologies~Artificial intelligence}
\ccsdesc[500]{Computing methodologies~Machine learning}
\ccsdesc[500]{Mathematics of computing~Graph algorithms}
\ccsdesc[500]{Mathematics of computing~Approximation algorithms}
\ccsdesc[300]{Mathematics of computing~Combinatorics}
\ccsdesc[300]{Mathematics of computing~Graph theory}
\ccsdesc[500]{Information systems~Data mining}
\ccsdesc[500]{Theory of computation~Graph algorithms analysis}
\ccsdesc[500]{Theory of computation~Streaming, sublinear and near linear time algorithms}
\ccsdesc[500]{Theory of computation~Parallel algorithms}
\ccsdesc[500]{Computing methodologies~Logical and relational learning}

\keywords{Heterogeneous network motifs,
typed motifs,
heterogeneous graphlets, typed graphlets, 
colored graphlets, colored motifs,
network motifs,
heterogeneous networks, 
labeled graphs,
large networks
}

\maketitle

\section{Introduction} \label{sec:intro}
Higher-order connectivity patterns such as small induced subgraphs called graphlets (network motifs)\footnote{The terms graphlet, network motif, and induced subgraph are used interchangeably.} are known to be the fundamental building blocks of simple homogeneous networks~\cite{Milo2002} and are essential for modeling and understanding the fundamental components of these networks~\cite{pgd,pgd-kais,benson2016higher}.
Furthermore, graphlets are important for many predictive and descriptive modeling application tasks~\cite{zhang-image-categorization-via-graphlets,vishwanathan2010graph,shervashidze2009efficient,Milo2002,prvzulj2004modeling,milenkovic2008uncovering,hayes2013graphlet,ahmed17streams,lichtenwalter2012vertex} such as 
image processing and computer vision~\cite{zhang-image-categorization-via-graphlets,zhang2013probabilistic}, 
network alignment~\cite{koyuturk2006pairwise,prvzulj2007biological,milenkovic2008uncovering,crawford2015great}, 
classification~\cite{vishwanathan2010graph,shervashidze2009efficient}, 
visualization and sensemaking~\cite{pgd,pgd-kais}, 
dynamic network analysis~\cite{kovanen2011temporal,hulovatyy2015exploring}, 
community detection~\cite{radicchi2004defining,palla2005uncovering,solava2012graphlet,benson2016higher}, 
role discovery~\cite{ahmed17aaai,role2vec}, 
anomaly detection~\cite{noble2003graph,akoglu2015graph}, 
and link prediction~\cite{Rossi2018a}. 

\definecolor{typeOneColor}{RGB}{8,81,156} 
\definecolor{typeTwoColor}{RGB}{222,45,38} 
\definecolor{typeThreeColor}{RGB}{49,163,84} 
\definecolor{typeFourColor}{RGB}{117,107,177} 

\makeatletter
\global\let\tikz@ensure@dollar@catcode=\relax
\makeatother
\tikzstyle{every node}=[font=\large,line width=1.5pt]
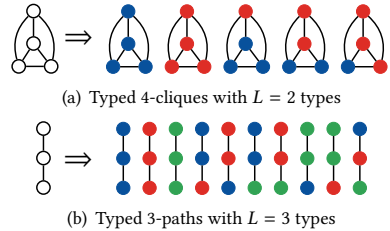
\begin{figure}[t!]

\subfigure[Typed 4-cliques with $L=2$ types]{
\scalebox{0.70}{
\scalebox{0.28}{
\begin{tikzpicture}[-,>=latex,auto,node distance=2.2cm,thick,
main node/.style={circle,draw=black,fill=white,draw,font=\sffamily\Huge\bfseries,text=black,minimum width=0.9cm, line width=1mm},
]

\node[main node] (1) {}; 
\node[main node] (2) [right of=1] {}; 
\node[main node] (3) [above right of=1, left=1.5pt] {}; 
\node[main node] (4) [above of=3] {}; 

\tikzstyle{LabelStyle}=[below=3pt]
\path[every node/.style={font=\sffamily}] 
	(1) edge [bend left,line width=1.0mm] node[above right] {} (4)
	(2) edge [bend right,line width=1.0mm] node[above right] {} (4)
(1) edge [line width=1.0mm, left] node [above left] {} (2) 
	(1)  edge [line width=1.0mm, right] node[below right] {} (3)
(2) edge [line width=1.0mm, left] node[below left] {} (3)
	(3) edge [line width=1.0mm,right] node[above right] {} (4);
\end{tikzpicture}
}
\hspace{-1.4mm}
\scalebox{0.28}{
\begin{tikzpicture}[-,>=latex,auto,node distance=2.0cm,thick,
main node/.style={circle,draw=black,fill=white,draw,font=\sffamily\Huge\bfseries,text=black,minimum width=0.9cm},
white node/.style={draw=white,draw,font=\sffamily\Huge\bfseries,text=black,minimum width=0.9cm}
]
\node[white node] (1) {};
\node[white node] (2) [below of=1,left=1.5pt] {};
\node[white node] (3) [below of=2] {};
\node[white node] (4) [right of=2, left=60pt, below=10pt, above=-0.05pt] {\vspace{-2mm}\fontsize{56}{56}\selectfont $\Rightarrow$};
\end{tikzpicture}
}
\hspace{0mm}
\scalebox{0.28}{
\begin{tikzpicture}[-,>=latex,auto,node distance=2.2cm,thick,
main node/.style={circle,draw=black,fill=black,draw,font=\sffamily\Huge\bfseries,text=white,minimum width=0.9cm},
white/.style={circle,draw=white,fill=white,draw,font=\sffamily,text=white,minimum width=0.001cm},
typeOne node/.style={circle,draw=typeOneColor,fill=typeOneColor,draw,font=\sffamily\Huge\bfseries,text=white,minimum width=0.9cm},
typeTwo node/.style={circle,draw=typeTwoColor,fill=typeTwoColor,draw,font=\sffamily\Huge\bfseries,text=white,minimum width=0.9cm},
]

\node[typeOne node] (1) {}; 
\node[typeOne node] (2) [right of=1] {}; 
\node[typeOne node] (3) [above right of=1, left=1.5pt] {}; 
\node[typeOne node] (4) [above of=3] {}; 

\tikzstyle{LabelStyle}=[below=3pt]
\path[every node/.style={font=\sffamily}] 
	(1) edge [bend left,line width=1.0mm] node[above right] {} (4)
	(2) edge [bend right,line width=1.0mm] node[above right] {} (4)
(1) edge [line width=1.0mm, left] node [above left] {} (2) 
	(1)  edge [line width=1.0mm, right] node[below right] {} (3)
(2) edge [line width=1.0mm, left] node[below left] {} (3)
	(3) edge [line width=1.0mm,right] node[above right] {} (4);
\end{tikzpicture}
}
\hspace{0.3mm}
\scalebox{0.28}{
\begin{tikzpicture}[-,>=latex,auto,node distance=2.2cm,thick,
main node/.style={circle,draw=black,fill=black,draw,font=\sffamily\Huge\bfseries,text=white,minimum width=0.9cm},
white/.style={circle,draw=white,fill=white,draw,font=\sffamily,text=white,minimum width=0.001cm},
typeOne node/.style={circle,draw=typeOneColor,fill=typeOneColor,draw,font=\sffamily\Huge\bfseries,text=white,minimum width=0.9cm},
typeTwo node/.style={circle,draw=typeTwoColor,fill=typeTwoColor,draw,font=\sffamily\Huge\bfseries,text=white,minimum width=0.9cm},
]

\node[typeTwo node] (1) {}; 
\node[typeTwo node] (2) [right of=1] {}; 
\node[typeTwo node] (3) [above right of=1, left=1.5pt] {}; 
\node[typeTwo node] (4) [above of=3] {}; 

\tikzstyle{LabelStyle}=[below=3pt]
\path[every node/.style={font=\sffamily}] 
	(1) edge [bend left,line width=1.0mm] node[above right] {} (4)
	(2) edge [bend right,line width=1.0mm] node[above right] {} (4)
(1) edge [line width=1.0mm, left] node [above left] {} (2) 
	(1)  edge [line width=1.0mm, right] node[below right] {} (3)
(2) edge [line width=1.0mm, left] node[below left] {} (3)
	(3) edge [line width=1.0mm,right] node[above right] {} (4);
\end{tikzpicture}
}
\hspace{0.3mm}
\scalebox{0.28}{
\begin{tikzpicture}[-,>=latex,auto,node distance=2.2cm,thick,
main node/.style={circle,draw=black,fill=black,draw,font=\sffamily\Huge\bfseries,text=white,minimum width=0.9cm},
white/.style={circle,draw=white,fill=white,draw,font=\sffamily,text=white,minimum width=0.001cm},
typeOne node/.style={circle,draw=typeOneColor,fill=typeOneColor,draw,font=\sffamily\Huge\bfseries,text=white,minimum width=0.9cm},
typeTwo node/.style={circle,draw=typeTwoColor,fill=typeTwoColor,draw,font=\sffamily\Huge\bfseries,text=white,minimum width=0.9cm},
]

\node[typeOne node] (1) {}; 
\node[typeOne node] (2) [right of=1] {}; 
\node[typeOne node] (3) [above right of=1, left=1.5pt] {}; 
\node[typeTwo node] (4) [above of=3] {}; 

\tikzstyle{LabelStyle}=[below=3pt]
\path[every node/.style={font=\sffamily}] 
	(1) edge [bend left,line width=1.0mm] node[above right] {} (4)
	(2) edge [bend right,line width=1.0mm] node[above right] {} (4)
(1) edge [line width=1.0mm, left] node [above left] {} (2) 
	(1)  edge [line width=1.0mm, right] node[below right] {} (3)
(2) edge [line width=1.0mm, left] node[below left] {} (3)
	(3) edge [line width=1.0mm,right] node[above right] {} (4);
\end{tikzpicture}
}
\hspace{0.3mm}
\scalebox{0.28}{
\begin{tikzpicture}[-,>=latex,auto,node distance=2.2cm,thick,
main node/.style={circle,draw=black,fill=black,draw,font=\sffamily\Huge\bfseries,text=white,minimum width=0.9cm},
white/.style={circle,draw=white,fill=white,draw,font=\sffamily,text=white,minimum width=0.001cm},
typeOne node/.style={circle,draw=typeOneColor,fill=typeOneColor,draw,font=\sffamily\Huge\bfseries,text=white,minimum width=0.9cm},
typeTwo node/.style={circle,draw=typeTwoColor,fill=typeTwoColor,draw,font=\sffamily\Huge\bfseries,text=white,minimum width=0.9cm},
]

\node[typeOne node] (1) {}; 
\node[typeOne node] (2) [right of=1] {}; 
\node[typeTwo node] (3) [above right of=1, left=1.5pt] {}; 
\node[typeTwo node] (4) [above of=3] {}; 

\tikzstyle{LabelStyle}=[below=3pt]
\path[every node/.style={font=\sffamily}] 
	(1) edge [bend left,line width=1.0mm] node[above right] {} (4)
	(2) edge [bend right,line width=1.0mm] node[above right] {} (4)
(1) edge [line width=1.0mm, left] node [above left] {} (2) 
	(1)  edge [line width=1.0mm, right] node[below right] {} (3)
(2) edge [line width=1.0mm, left] node[below left] {} (3)
	(3) edge [line width=1.0mm,right] node[above right] {} (4);
\end{tikzpicture}
}
\hspace{0.3mm}
\scalebox{0.28}{
\begin{tikzpicture}[-,>=latex,auto,node distance=2.2cm,thick,
main node/.style={circle,draw=black,fill=black,draw,font=\sffamily\Huge\bfseries,text=white,minimum width=0.9cm},
white/.style={circle,draw=white,fill=white,draw,font=\sffamily,text=white,minimum width=0.001cm},
typeOne node/.style={circle,draw=typeOneColor,fill=typeOneColor,draw,font=\sffamily\Huge\bfseries,text=white,minimum width=0.9cm},
typeTwo node/.style={circle,draw=typeTwoColor,fill=typeTwoColor,draw,font=\sffamily\Huge\bfseries,text=white,minimum width=0.9cm},
]

\node[typeOne node] (1) {}; 
\node[typeTwo node] (2) [right of=1] {}; 
\node[typeTwo node] (3) [above right of=1, left=1.5pt] {}; 
\node[typeTwo node] (4) [above of=3] {}; 

\tikzstyle{LabelStyle}=[below=3pt]
\path[every node/.style={font=\sffamily}] 
	(1) edge [bend left,line width=1.0mm] node[above right] {} (4)
	(2) edge [bend right,line width=1.0mm] node[above right] {} (4)
(1) edge [line width=1.0mm, left] node [above left] {} (2) 
	(1)  edge [line width=1.0mm, right] node[below right] {} (3)
(2) edge [line width=1.0mm, left] node[below left] {} (3)
	(3) edge [line width=1.0mm,right] node[above right] {} (4);
\end{tikzpicture}
}
\vspace{-15mm}
\label{fig:typed-4-cliques-with-2-types}
}
} 

\vspace{-2mm}
\subfigure[\vspace{-2mm}Typed 3-paths with $L=3$ types]{
\scalebox{0.70}{
\scalebox{0.28}{

\begin{tikzpicture}[-,>=latex,auto,node distance=2.0cm,thick,
main node/.style={circle,draw=black,fill=white,draw,font=\sffamily\Huge\bfseries,text=black,minimum width=0.9cm, line width=1mm},
]

\node[main node] (1) {}; 
\node[main node] (2) [below of=1] {}; 
\node[main node] (3) [below of=2] {}; 

\tikzstyle{LabelStyle}=[below=3pt]
\path[every node/.style={font=\sffamily}] 
(1) edge [line width=1.0mm, left] node [above left] {} (2) 
(2) edge [line width=1.0mm, left] node[below left] {} (3);
\end{tikzpicture}
}
\scalebox{0.28}{

\begin{tikzpicture}[-,>=latex,auto,node distance=2.0cm,thick,
main node/.style={circle,draw=black,fill=white,draw,font=\sffamily\Huge\bfseries,text=black,minimum width=0.9cm},
white node/.style={draw=white,draw,font=\sffamily\Huge\bfseries,text=black,minimum width=0.9cm}
]
\node[white node] (1) {};
\node[white node] (2) [below of=1] {};
\node[white node] (3) [below of=2] {};
\node[white node] (4) [right of=2, left=30pt, below=10pt, above=0.05pt] {\vspace{2mm}\fontsize{56}{56}\selectfont $\Rightarrow$};
\end{tikzpicture}
}
\hspace{2mm}
\scalebox{0.28}{

\begin{tikzpicture}[-,>=latex,auto,node distance=2.0cm,thick,
typeOne node/.style={circle,draw=typeOneColor,fill=typeOneColor,draw,font=\sffamily\Huge\bfseries,text=white,minimum width=0.9cm},
typeTwo node/.style={circle,draw=typeTwoColor,fill=typeTwoColor,draw,font=\sffamily\Huge\bfseries,text=white,minimum width=0.9cm},
typeThree node/.style={circle,draw=typeThreeColor,fill=typeThreeColor,draw,font=\sffamily\Huge\bfseries,text=white,minimum width=0.9cm},
]

\node[typeOne node] (1) {}; 
\node[typeOne node] (2) [below of=1] {}; 
\node[typeOne node] (3) [below of=2] {}; 

\tikzstyle{LabelStyle}=[below=3pt]
\path[every node/.style={font=\sffamily}] 
(1) edge [line width=1.0mm, left] node [above left] {} (2) 
(2) edge [line width=1.0mm, left] node[below left] {} (3);
\end{tikzpicture}
}
\label{fig:typed-3-path-homo-typeOne-3colors}
\hspace{0.3mm}
\scalebox{0.28}{
\begin{tikzpicture}[-,>=latex,auto,node distance=2.0cm,thick,
typeOne node/.style={circle,draw=typeOneColor,fill=typeOneColor,draw,font=\sffamily\Huge\bfseries,text=white,minimum width=0.9cm},
typeTwo node/.style={circle,draw=typeTwoColor,fill=typeTwoColor,draw,font=\sffamily\Huge\bfseries,text=white,minimum width=0.9cm},
typeThree node/.style={circle,draw=typeThreeColor,fill=typeThreeColor,draw,font=\sffamily\Huge\bfseries,text=white,minimum width=0.9cm},
]

\node[typeTwo node] (1) {}; 
\node[typeTwo node] (2) [below of=1] {}; 
\node[typeTwo node] (3) [below of=2] {}; 

\tikzstyle{LabelStyle}=[below=3pt]
\path[every node/.style={font=\sffamily}] 
(1) edge [line width=1.0mm, left] node [above left] {} (2) 
(2) edge [line width=1.0mm, left] node[below left] {} (3);
\end{tikzpicture}
}
\hspace{0.3mm}
\scalebox{0.28}{
\begin{tikzpicture}[-,>=latex,auto,node distance=2.0cm,thick,
typeOne node/.style={circle,draw=typeOneColor,fill=typeOneColor,draw,font=\sffamily\Huge\bfseries,text=white,minimum width=0.9cm},
typeTwo node/.style={circle,draw=typeTwoColor,fill=typeTwoColor,draw,font=\sffamily\Huge\bfseries,text=white,minimum width=0.9cm},
typeThree node/.style={circle,draw=typeThreeColor,fill=typeThreeColor,draw,font=\sffamily\Huge\bfseries,text=white,minimum width=0.9cm},
]

\node[typeThree node] (1) {}; 
\node[typeThree node] (2) [below of=1] {}; 
\node[typeThree node] (3) [below of=2] {}; 

\tikzstyle{LabelStyle}=[below=3pt]
\path[every node/.style={font=\sffamily}] 
(1) edge [line width=1.0mm, left] node [above left] {} (2) 
(2) edge [line width=1.0mm, left] node[below left] {} (3);
\end{tikzpicture}
}
\hspace{0.3mm}
\scalebox{0.28}{

\begin{tikzpicture}[-,>=latex,auto,node distance=2.0cm,thick,
typeOne node/.style={circle,draw=typeOneColor,fill=typeOneColor,draw,font=\sffamily\Huge\bfseries,text=white,minimum width=0.9cm},
typeTwo node/.style={circle,draw=typeTwoColor,fill=typeTwoColor,draw,font=\sffamily\Huge\bfseries,text=white,minimum width=0.9cm},
typeThree node/.style={circle,draw=typeThreeColor,fill=typeThreeColor,draw,font=\sffamily\Huge\bfseries,text=white,minimum width=0.9cm},
]

\node[typeOne node] (1) {}; 
\node[typeOne node] (2) [below of=1] {}; 
\node[typeTwo node] (3) [below of=2] {}; 

\tikzstyle{LabelStyle}=[below=3pt]
\path[every node/.style={font=\sffamily}] 
(1) edge [line width=1.0mm, left] node [above left] {} (2) 
(2) edge [line width=1.0mm, left] node[below left] {} (3);
\end{tikzpicture}
}
\hspace{0.3mm}
\scalebox{0.28}{

\begin{tikzpicture}[-,>=latex,auto,node distance=2.0cm,thick,
typeOne node/.style={circle,draw=typeOneColor,fill=typeOneColor,draw,font=\sffamily\Huge\bfseries,text=white,minimum width=0.9cm},
typeTwo node/.style={circle,draw=typeTwoColor,fill=typeTwoColor,draw,font=\sffamily\Huge\bfseries,text=white,minimum width=0.9cm},
typeThree node/.style={circle,draw=typeThreeColor,fill=typeThreeColor,draw,font=\sffamily\Huge\bfseries,text=white,minimum width=0.9cm},
]

\node[typeTwo node] (1) {}; 
\node[typeTwo node] (2) [below of=1] {}; 
\node[typeOne node] (3)  [below of=2] {}; 

\tikzstyle{LabelStyle}=[below=3pt]
\path[every node/.style={font=\sffamily}] 
(1) edge [line width=1.0mm, left] node [above left] {} (2) 
(2) edge [line width=1.0mm, left] node[below left] {} (3);
\end{tikzpicture}
}
\hspace{0.3mm}
\scalebox{0.28}{

\begin{tikzpicture}[-,>=latex,auto,node distance=2.0cm,thick,
typeOne node/.style={circle,draw=typeOneColor,fill=typeOneColor,draw,font=\sffamily\Huge\bfseries,text=white,minimum width=0.9cm},
typeTwo node/.style={circle,draw=typeTwoColor,fill=typeTwoColor,draw,font=\sffamily\Huge\bfseries,text=white,minimum width=0.9cm},
typeThree node/.style={circle,draw=typeThreeColor,fill=typeThreeColor,draw,font=\sffamily\Huge\bfseries,text=white,minimum width=0.9cm},
]

\node[typeOne node] (1) {}; 
\node[typeOne node] (2) [below of=1] {}; 
\node[typeThree node] (3)  [below of=2] {}; 

\tikzstyle{LabelStyle}=[below=3pt]
\path[every node/.style={font=\sffamily}] 
(1) edge [line width=1.0mm, left] node [above left] {} (2) 
(2) edge [line width=1.0mm, left] node[below left] {} (3);
\end{tikzpicture}
}
\hspace{0.3mm}
\scalebox{0.28}{

\begin{tikzpicture}[-,>=latex,auto,node distance=2.0cm,thick,
typeOne node/.style={circle,draw=typeOneColor,fill=typeOneColor,draw,font=\sffamily\Huge\bfseries,text=white,minimum width=0.9cm},
typeTwo node/.style={circle,draw=typeTwoColor,fill=typeTwoColor,draw,font=\sffamily\Huge\bfseries,text=white,minimum width=0.9cm},
typeThree node/.style={circle,draw=typeThreeColor,fill=typeThreeColor,draw,font=\sffamily\Huge\bfseries,text=white,minimum width=0.9cm},
]

\node[typeTwo node] (1) {}; 
\node[typeTwo node] (2) [below of=1] {}; 
\node[typeThree node] (3)  [below of=2] {}; 

\tikzstyle{LabelStyle}=[below=3pt]
\path[every node/.style={font=\sffamily}] 
(1) edge [line width=1.0mm, left] node [above left] {} (2) 
(2) edge [line width=1.0mm, left] node[below left] {} (3);
\end{tikzpicture}
}
\hspace{0.3mm}
\scalebox{0.28}{

\begin{tikzpicture}[-,>=latex,auto,node distance=2.0cm,thick,
typeOne node/.style={circle,draw=typeOneColor,fill=typeOneColor,draw,font=\sffamily\Huge\bfseries,text=white,minimum width=0.9cm},
typeTwo node/.style={circle,draw=typeTwoColor,fill=typeTwoColor,draw,font=\sffamily\Huge\bfseries,text=white,minimum width=0.9cm},
typeThree node/.style={circle,draw=typeThreeColor,fill=typeThreeColor,draw,font=\sffamily\Huge\bfseries,text=white,minimum width=0.9cm},
]

\node[typeThree node] (1) {}; 
\node[typeThree node] (2) [below of=1] {}; 
\node[typeOne node] (3)  [below of=2] {}; 

\tikzstyle{LabelStyle}=[below=3pt]
\path[every node/.style={font=\sffamily}] 
(1) edge [line width=1.0mm, left] node [above left] {} (2) 
(2) edge [line width=1.0mm, left] node[below left] {} (3);
\end{tikzpicture}
}
\hspace{0.3mm}
\scalebox{0.28}{

\begin{tikzpicture}[-,>=latex,auto,node distance=2.0cm,thick,
typeOne node/.style={circle,draw=typeOneColor,fill=typeOneColor,draw,font=\sffamily\Huge\bfseries,text=white,minimum width=0.9cm},
typeTwo node/.style={circle,draw=typeTwoColor,fill=typeTwoColor,draw,font=\sffamily\Huge\bfseries,text=white,minimum width=0.9cm},
typeThree node/.style={circle,draw=typeThreeColor,fill=typeThreeColor,draw,font=\sffamily\Huge\bfseries,text=white,minimum width=0.9cm},
]

\node[typeThree node] (1) {}; 
\node[typeThree node] (2) [below of=1] {}; 
\node[typeTwo node] (3)  [below of=2] {}; 

\tikzstyle{LabelStyle}=[below=3pt]
\path[every node/.style={font=\sffamily}] 
(1) edge [line width=1.0mm, left] node [above left] {} (2) 
(2) edge [line width=1.0mm, left] node[below left] {} (3);
\end{tikzpicture}
}
\hspace{0.3mm}
\scalebox{0.28}{

\begin{tikzpicture}[-,>=latex,auto,node distance=2.0cm,thick,
typeOne node/.style={circle,draw=typeOneColor,fill=typeOneColor,draw,font=\sffamily\Huge\bfseries,text=white,minimum width=0.9cm},
typeTwo node/.style={circle,draw=typeTwoColor,fill=typeTwoColor,draw,font=\sffamily\Huge\bfseries,text=white,minimum width=0.9cm},
typeThree node/.style={circle,draw=typeThreeColor,fill=typeThreeColor,draw,font=\sffamily\Huge\bfseries,text=white,minimum width=0.9cm},
]

\node[typeOne node] (1) {}; 
\node[typeTwo node] (2) [below of=1] {}; 
\node[typeThree node] (3)  [below of=2] {}; 

\tikzstyle{LabelStyle}=[below=3pt]
\path[every node/.style={font=\sffamily}] 
(1) edge [line width=1.0mm, left] node [above left] {} (2) 
(2) edge [line width=1.0mm, left] node[below left] {} (3);
\end{tikzpicture}
}
\label{fig:typed-3-path-with-3-types}
}
} 

\vspace{-4mm}
\caption{
Examples of heterogeneous network motifs
}
\label{fig:typed-motifs-3colors}
\vspace{-4mm}
\end{figure}

However, such (untyped) graphlets are \emph{unable} to capture the rich (typed) connectivity patterns in more complex networks such as those that are heterogeneous (which includes signed, labeled, bipartite, k-partite, and attributed graphs, among others).
In heterogeneous graphs, nodes and edges can be of different types and explicitly modeling such types is crucial.
Such heterogeneous graphs (networks) arise ubiquitously in the natural world where nodes and edges of multiple types are observed, \eg, 
between humans~\cite{kong2013inferring-heterSoc}, 
neurons~\cite{bullmore2009complex,bassett2006small}, 
routers and autonomous systems (ASes)~\cite{rossi2013topology}, 
web pages~\cite{yin2009exploring}, 
devices \& sensors~\cite{eagle2006reality}, 
infrastructure (roads, airports, power stations)~\cite{powergrid}, 
economies~\cite{schweitzer2009economic}, 
vehicles (cars, satelites, UAVs)~\cite{hung2008mobility-heter}, 
and information in general~\cite{yu2014personalized,sun2011pathsim,rossi16collective-factor}.

In this work, we generalize the notion of network motifs to heterogeneous networks.
In particular, we introduce the notion of a small induced typed subgraph called \emph{typed graphlet}.\footnote{The terms typed, colored, labeled, and heterogeneous graphlet (network motif) are used interchangeably.}
Typed graphlets are a powerful generalization of the notion of graphlet as they capture both the induced subgraph of interest and the types associated to the nodes in the induced subgraph (Figure~\ref{fig:typed-motifs-3colors}).
These small induced typed subgraphs are the fundamental \emph{building blocks of rich heterogeneous networks}.
Typed graphlets naturally capture the higher-order typed connectivity patterns in bipartite, k-partite, signed, labeled, k-star, attributed graphs, and more generally heterogeneous networks (Figure~\ref{fig:heter-special-cases}).
As such, typed graphlets are useful for a wide variety of predictive and descriptive modeling applications in these rich complex networks.

Despite their fundamental and practical importance, counting typed graphlets in large graphs remains a challenging and unsolved problem.
To address this problem, we propose a fast, parallel, and space-efficient framework for counting typed graphlets in large networks.
The time complexity is provably optimal as it matches the best untyped graphlet counting algorithm.
Using non-trivial combinatorial relationships between lower-order ($k\!-\!1$)-node typed graphlets, we derive equations that allow us to compute many of the $k$-node typed graphlet counts in $o(1)$ constant time.
Thus, we avoid explicit enumeration of many typed graphlets by simply computing the count exactly using the discovered combinatorial relationships.
For every edge, we count a few typed graphlets and obtain the exact counts of the remaining typed graphlets in $o(1)$ constant time.
Furthermore, we store only the nonzero typed graphlet counts for every edge.
To better handle large-scale heterogeneous networks with an arbitrary number of types, we propose an efficient parallel algorithm for typed graphlet counting that scales almost linearly as the number of processing units increase.
While we primarily focus on the more challenging problem of local typed graphlet counting, we also demonstrate how to solve the global typed graphlet counting problem.
As an aside, the proposed approach is the first such method that counts colored graphlets on the edges (or more generally, between any arbitrary pair of nodes).
Nevertheless, the proposed framework overcomes the computational issues with previous work that severely limited its use in practice, \eg, all existing methods are only able to compute typed graphlets on extremely small and unrealistic graphs with hundreds or thousands of nodes/edges.

Theoretically, we show that typed graphlets are more powerful and encode more information than untyped graphlets.
In addition, we theoretically demonstrate the worst-case time and space complexity of the proposed framework.
It is notably much faster and significantly more space-efficient than the existing state-of-the-art algorithms.
In fact, the time complexity of the proposed approach is shown to be equivalent to the best untyped graphlet counting algorithm.
Furthermore, unlike previous work, we derive many of the typed graphlets directly in $o(1)$ constant time using counts of lower-order $(k\!-\!1)$-node typed graphlets.

Empirically, the proposed approach is shown to be orders of magnitude faster than existing state-of-the-art methods. 
In particular, we observe between 89 and 10,981 times speedup in runtime performance compared to the best existing method.
Notably, on graphs of even moderate size (thousands of nodes/edges), existing approaches fail to finish in a reasonable amount of time (24 hours).
In terms of space, the proposed approach uses between 42x and 776x less space than existing methods.
We also demonstrate the parallel scaling of the parallel algorithm and observe nearly linear speedups as the number of processing units increases.
In addition to real-world graphs from a wide range of domains, we also show results on a number of synthetically generated graphs from a variety of synthetic graph models.
Finally, we demonstrate the utility of typed graphlets for exploratory network analysis and understanding using a variety of well-known networks 
and make a number of new and important observations/findings.

Compared to the homogeneous/untyped graphlet counting problem (which has found many important applications~\cite{koyuturk2006pairwise,prvzulj2007biological,vishwanathan2010graph,shervashidze2009efficient,solava2012graphlet,benson2016higher,ahmed17aaai,role2vec,noble2003graph,akoglu2015graph,Rossi2018a}), 
heterogeneous graphlets are vastly more powerful containing a significant amount of additional information. 
We show this formally using information theory and demonstrate the importance of heterogeneous graphlets empirically using real-world graphs.
In real-world graphs, we observe that only a handful of the possible heterogeneous graphlets actually occur.
Furthermore, we discover that 99\% of the overall counts in a number of real-world networks are from only a few such heterogeneous graphlets.
This observation indicates a power-law relationship between the counts of the various heterogeneous graphlets.
The rare heterogeneous graphlets (\ie, those that rarely occur) also contain a lot of useful information and these forbidden heterogeneous motifs may indicate anomalies/outliers or simply unique structural behaviors that are fundamentally important but extremely difficult to identify using traditional methods.
Moreover, the heterogeneous graphlets found to be important are easily interpretable and provide key insights into the structure and underlying phenomena governing the formation of the complex network that would otherwise be hidden using traditional methods.

\begin{figure}[t!]
\vspace{4mm}
\begin{minipage}{0.54\linewidth}
\centering
\hspace{-2mm}
\includegraphics[width=0.90\linewidth]{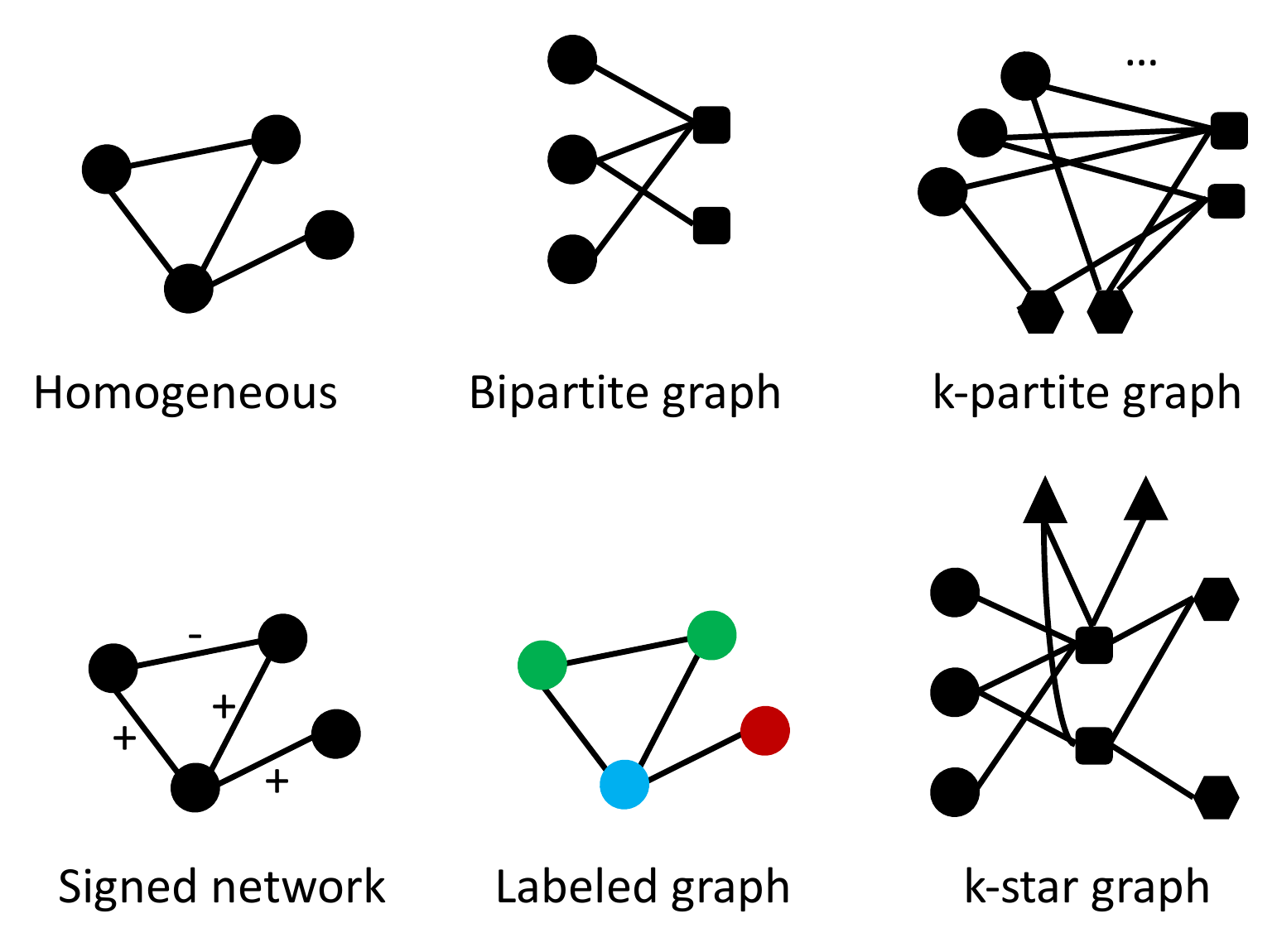} 
\end{minipage}
\hfill
\begin{minipage}{0.45\linewidth}
\centering
\scalebox{0.80}{
\small
\fontsize{8.5}{9}\selectfont
\begin{tabularx}{1.05\linewidth}{rH cc HH}
\toprule
\textbf{Graph Type} &&  $|\mathcal{T}_V|$ & $|\mathcal{T}_E|$ \\
\midrule
\textsc{Homogeneous} && $1$ & $1$ \\
\textsc{Bipartite} && $2$ & $1$ \\
\textsc{K-partite} && $k$ & $k-1$ \\
\textsc{Signed} && $1$ & $2$ \\
\textsc{Labeled} && $k$ & $\ell$ \\
\textsc{Star} && $k$ & $k-1$ \\
\bottomrule
\end{tabularx}
}
\end{minipage}
\caption{\emph{Heterogeneous network motifs} are useful for a wide variety of graphs.
These are only a few examples of the different graphs that are naturally supported by the proposed framework.
}
\label{fig:heter-special-cases}
\end{figure}

This work generalizes the notion of network motif to heterogeneous networks and 
describes a framework for finding all such \emph{heterogeneous network motifs}.
The proposed framework has the following desired properties:

\smallskip
\begin{compactenum}[$\bullet$]
\item \textbf{Fast}: 
The approach is fast for large graphs by leveraging \emph{non-trivial combinatorial relationships} to derive many of the typed graphlets in $o(1)$ constant time. 
Thus, the worst-case time complexity matches that of the best untyped graphlet algorithm.
As shown in Table~\ref{table:runtime-perf}, the approach is orders of magnitude faster than the state-of-the-art. 

\item \textbf{Space-Efficient}: The approach is space-efficient by hashing and storing only the typed motif counts that appear on a given edge.

\item \textbf{Parallel}: 
The typed graphlet counting approach lends itself to an efficient and scalable parallel implementation.
We observe near-linear parallel scaling results in Section~\ref{sec:exp-parallel-scaling}. 
This indicates the feasibility of the approach for large real-world networks.

\item \textbf{Scalable for Large Networks}:
The proposed approach is scalable for large heterogeneous networks.
In particular, the approach scales nearly linearly as the size of the graph increases.

\item \textbf{Effectiveness}: Typed graphlet counting is shown to be effective for understanding and mining large heterogeneous networks from a variety of different application domains.
\end{compactenum}
\smallskip

\begin{table}[t!]
\caption{Summary of notation. Matrices are bold upright roman letters; vectors are bold lowercase letters.}
\vspace{-3mm}
\renewcommand{\arraystretch}{1.25} 
\scalebox{0.90}{
\centering 
\fontsize{8}{8.5}\selectfont
\setlength{\tabcolsep}{6pt} 
\label{table:notation}
\hspace*{-2.5mm}
\begin{tabularx}{1.10\linewidth}{@{}r X@{}} 
\toprule
$G$ & graph \\ 
$V(G)$ & node set of $G$ \\
$E(G)$ & edge set of $G$ \\
$H,F$ & graphlet of $G$ \\
$I_G(H)$ & set of unique instances of $H$ in $G$ \\

$N, M$ & number of nodes $N = |V|$ and edges $M = |E|$ in the graph \\
$K$ & size of a network motif ($\#$ nodes) \\
$L$ & number of types (\ie, colors, labels) \\

$\mathcal{H}$ & set of all untyped motifs in $G$ \\
$\mathcal{H}_T$ & set of all typed motifs in $G$ \\

$T$ & \# of different typed motifs $T = |\mathcal{H}_T|$ observed in $G$ with $L$ types \\ 
$T_{\max}$ & total \# of possible typed motifs with $L$ types, hence $T \leq T_{\max}$ \\ 
$T_H$ & \# of different typed motifs for a particular motif $H \in \mathcal{H}$ \\

$\mathcal{T}_V$ & set of node types in $G$ \\
$\mathcal{T}_E$ & set of edge types in $G$ \\
$\phi$ & type function $\phi : V \rightarrow \mathcal{T}_V$ \\
$\xi$ & type function $\xi : E \rightarrow \mathcal{T}_E$ \\

$\vt$ & $K$-dimensional type vector $\vt = \big[\,\! \phi_{w_1} \;\! \cdots \; \phi_{w_K} \,\!\big]$ \\

$f_{ij}(H, \vt)$ & \# of instances of motif $H$ that contain nodes $i$ and $j$ with type vector $\vt$ \\

$\mathbb{F}$ & an arbitrary typed motif hash function (Section~\ref{sec:typed-motif-hash-function}) \\
$\hash$ & hash value from $\mathbb{F}$ representing a unique id for an arbitrary typed motif \\

$\Delta$ & maximum degree of a node in $G$ \\
$\Gamma_{i}^{t}$ & set of neighbors of node $i$ with type $t$ \\
$d_{i}^{t}$ & degree of node $i$ with type $t$, $d_{i}^{t} = |\Gamma_{i}^{t}|$ \\
$T_{ij}^{t}$ & set of nodes of type $t$ that form typed triangles with $i$ and $j$ \\
$S_{i}^{t}$, $S_{j}^{t}$ & set of nodes of type $t$ that form typed 3-node stars centered at $i$ (or $j$) \\
$I$ & set of non-adjacent nodes with respect to a pair of nodes $(i,j)$ \\

$\mathcal{M}$ & set of typed motif ids that appear in $G$ \\
$\mathcal{M}_{ij}$ & set of typed motif ids for a given pair of nodes $(i,j)$ \\
$\mathcal{X}_{ij}$ & nonzero (typed-motif, count) pairs for edge $(i,j) \in E$ \\

$\Psi$ & hash table for checking whether a node is connected to $i$ or $j$ and its ``relationship'' (\eg, $\lambda_{1}$, $\lambda_{2}$, $\lambda_{3}$) in constant time \\

\bottomrule
\end{tabularx}}
\end{table}

\section{Heterogeneous Network Motifs}
\label{sec:typed-network-motifs}
This section introduces a generalization of graphlets (network motifs) called \emph{heterogeneous network motifs} (or simply \emph{typed graphlets}).
In particular, Section~\ref{sec:heter-graph-model} describes the heterogeneous graph model and highlights a number of important properties of it.
Next, Section~\ref{sec:typed-graphlet-formulation} formally introduces the notion of \emph{typed graphlet} proposed in this work and discusses its implications.
Finally, Section~\ref{sec:special-cases-heter-generalization} discusses a few important special cases of heterogeneous graphs that are naturally supported by the proposed framework in Section~\ref{sec:framework}.

\subsection{Heterogeneous Graph Model} \label{sec:heter-graph-model}
In this section, we introduce a general heterogeneous graph model and discuss different problem settings and graph types where typed graphlets are likely to be useful (Figure~\ref{fig:heter-special-cases}). 
We use the following heterogeneous graph formulation:
\begin{Definition}[Heterogeneous network] \label{def:heter-network}
A heterogeneous network is defined as $G=(V,E)$ consisting of a set of node objects $V$ and a set of edges $E$ connecting the nodes in $V$.
A heterogeneous network also has a \emph{node type mapping function} $\,\phi : V \rightarrow \mathcal{T}_V$ 
and an \emph{edge type mapping function} defined as $\,\xi : E \rightarrow \mathcal{T}_E$ 
where $\mathcal{T}_V$ and $\mathcal{T}_E$ denote the set of node object types and edge types, respectively.
The type of node $i$ is denoted as $\phi_i$ 
whereas the type of edge $e = (i,j) \in E$ is denoted as $\xi_{ij}=\xi_e$.
\end{Definition}\noindent
A homogeneous graph is a special case of a heterogeneous graph where $|\mathcal{T}_V|=|\mathcal{T}_E|=1$. 
Other special cases of the proposed framework include 
bipartite graphs, 
signed networks with $\{+,-\}$ types,
and more generally labeled graphs.
As such, in general, a heterogeneous network can be represented as a series of matrices and tensors that are coupled (\ie, the tensors and matrices all share at least one mode with each other)~\cite{acar2011all,rossi16collective-factor}.
A few other special cases are discussed further in Section~\ref{sec:special-cases-heter-generalization}.
See Table~\ref{table:notation} for a summary of key notation.

\subsection{Typed Graphlet Generalization} \label{sec:typed-graphlet-formulation}
In this section, we introduce a more general notion of graphlet called \emph{typed graphlet} that naturally extends to both homogeneous and general heterogeneous networks.
We use $G$ to represent a graph and $H$ or $F$ to represent graphlets.

\subsubsection{Untyped Graphlets}
We begin by defining untyped graphlets for graphs with a single type.

\begin{Definition}[\sc Untyped Graphlet] \label{def:graphlet}
An untyped graphlet $H$ is a connected induced subgraph of $G$.
\end{Definition} \noindent

Given a graphlet in some graph, it may be the case that we can find other topologically identical ``appearances" of this structure in that graph. 
We call these ``appearances" \textit{graphlet instances}.

\begin{Definition}[\scshape Untyped Graphlet Instance]\label{def:graphlet-instance}
An instance of an untyped graphlet $H$ in graph $G$ is an untyped graphlet $F$ in $G$ that is isomorphic to $H$.
\end{Definition}

\subsubsection{Typed Graphlets}
In heterogeneous graphs, nodes/edges can be of many different types and so explicitly and jointly modeling such types is essential (Figure~\ref{fig:heter-special-cases}).
In this work, we introduce the notion of a \emph{typed graphlet} that explicitly captures both the connectivity pattern of interest and the types.
Notice that typed graphlets are a generalization of graphlets to heterogeneous networks.

\begin{Definition}[\scshape Typed Graphlet]\label{def:typed-graphlet}
A typed graphlet of a graph $G=(V,E,\phi,\xi)$ is a connected induced heterogeneous subgraph $H=(V',E',\phi',\xi')$ of $G$ such that
\begin{compactenum}
\item $(V',E')$ is a graphlet of $(V,E)$,
\item $\phi'=\phi|_{V'}$, that is, $\phi'$ is the restriction of $\phi$ to $V'$,
\item $\xi'=\xi|_{E'}$, that is, $\xi'$ is the restriction of $\xi$ to $E'$.
\end{compactenum}
\end{Definition}\noindent
The terms typed graphlet, colored graphlet, and heterogeneous network motif (graphlet) are used interchangeably.

We can consider the presence of topologically identical ``appearances" of a typed graphlet in a graph.
\begin{Definition}[\scshape Typed Graphlet Instance]\label{def:typed-graphlet-instance}
An instance of a typed graphlet $H=(V',E',\phi',\xi')$ of graph $G$ is a typed graphlet $F=(V'',E'',\phi'',\xi'')$ of $G$ such that
\begin{compactenum}
\item $(V'',E'')$ is isomorphic to $(V',E')$, 
\item $\mathcal{T}_{V''}=\mathcal{T}_{V'}$ and $\mathcal{T}_{E''}=\mathcal{T}_{E'}$, that is, the multisets of node and edge types are correspondingly equal.
\end{compactenum}
The set of unique typed graphlet instances of $H$ in $G$ is denoted as $I_G(H)$.
\end{Definition}\noindent
Comparing the above definitions of graphlet and typed graphlet, we see at first glance that typed graphlets are nontrivial extensions of their homogeneous counterparts.
The ``position'' of an edge (node) in a typed graphlet is often topologically important, \eg, an edge at the end of the 4-path (Figure~\ref{fig:4-path-edge-orbit-Si}) vs. an edge at the center of a 4-path (Figure~\ref{fig:4-path-center-orbit}).
These topological differences of a typed graphlet are called (automorphism) \emph{typed orbits} since they take into account ``symmetries'' between edges (nodes) of a graphlet.
Typed graphlet orbits are a generalization of (homogeneous) graphlet orbits~\cite{prvzulj2007biological}.

\subsection{Number of Typed Graphlets}
\noindent
For a single $K$-node untyped motif (\eg, $K$-clique), the number of \emph{typed motifs} with $L$ types is:
\begin{equation} \label{eq:num-typed-graphlets-for-motif}
\left( \binom{L}{K} \right) = \binom{L+K-1}{K}
\end{equation}\noindent
where $L=$ number of types (colors) and $K=$ size of the network motif ($\#$ of nodes).
Table~\ref{table:typed-graphlets-example} shows the number of \emph{typed network motifs} that arise from a single motif $H \in \mathcal{H}$ of size $K\in \{2,\ldots,4\}$ nodes as the number of types varies from $L=1,2,\ldots,9$.
Notice that Table~\ref{table:typed-graphlets-example} is for a single $K$-node motif $H \in \mathcal{H}$ and therefore the total number of typed network motifs for all $K$-node motifs is a multiple of the amounts given in Table~\ref{table:typed-graphlets-example}.
For instance, the total number of typed motif orbits with 4 nodes that arise from 7 types is $10 \cdot 210 = 2100$ since there are 10 connected 4-node homogeneous motif orbits.
See Figure~\ref{fig:typed-motifs-3colors} for other examples.
Unlike homogeneous motifs, it is obviously impossible to show all the heterogeneous motifs counted by the proposed approach since it works for general heterogeneous graphs with any arbitrary number of types $L$ and structure.

\begin{table}[h!]
\renewcommand{\arraystretch}{0.95} 
\caption{Number of \emph{typed graphlets} (for a single untyped graphlet) as the size of the graphlet ($K$) and types ($L$) varies.}
\label{table:typed-graphlets-example}
\vspace{-2mm}
\scalebox{0.90}{
\begin{tabularx}{1.0\linewidth}{HX XXXXX XXX X}
\toprule
&& \multicolumn{9}{c}{\bf Types $L$} \\
\cmidrule(l{3pt}r{7pt}){3-11}

&& \textbf{1} & \textbf{2} & \textbf{3} & \textbf{4}  & \textbf{5} & \textbf{6}  & \textbf{7} & \textbf{8} & \textbf{9}  \\
\midrule
& \textbf{K=2} & 1  & 3  & 6  & 10  & 15  & 21  & 28 & 36  & 45 \\
\multirow{2}{*}{\bf K} &
\textbf{K=3} & 1  & 4  & 10  & 20  & 35  & 56  & 84   & 120 & 165 \\
& \textbf{K=4} & 1  & 5  & 15 & 35  & 70  & 126  & 210   & 330 & 495  \\
\bottomrule
\end{tabularx}
}
\end{table}

\subsection{Generalization to Other Graphs} \label{sec:special-cases-heter-generalization}
The proposed notion of \emph{heterogeneous network motifs} can be used for applications on bipartite, k-partite, signed, labeled, attributed, and more generally heterogeneous networks.
A few examples of such graphs are shown in Figure~\ref{fig:heter-special-cases}.
The proposed framework naturally handles general heterogeneous graphs with arbitrary structure and an arbitrary number of types.
It is straightforward to see that homogeneous, bipartite, k-partite, signed, labeled, and star graphs are all special cases of heterogeneous graphs.
Therefore, the framework for deriving heterogeneous motifs can easily support such networks.
See Figure~\ref{fig:heter-special-cases} for a few of the popular special cases of heterogeneous graphs that are naturally supported in the proposed framework.
In the case of attributed graphs with $D>1$ attributes/features, the attributes of a node or edge can be mapped to types using any arbitrary approach such as~\cite{role2vec} or~\cite{wl}.
For instance, given an attribute vector $\vx \in \RR^{D}$ (which may contain intrinsic attributes or structural features), we derive a type $y$ via $f : \vx \rightarrow y$ where $f$ is a function that maps $\vx$ to a single type $y$.
Notice that attributed graphs are a generalization of labeled graphs.
Another simple approach is to count typed motifs for each of the $D$ attributes.

\section{Framework} \label{sec:framework}
This section describes the general framework for counting typed graphlets.
The typed graphlet framework can be used for counting typed graphlets locally for every edge in $G$ as well as the global typed graphlet counting problem (Definition~\ref{def:global-typed-graphlet-counting}) that focuses on computing the total frequency of a subset or all typed graphlets up to a certain size. 
This paper mainly focuses on the harder local typed graphlet counting problem defined as:
\begin{Definition}[Local Typed Graphlet Counting] \label{def:local-typed-graphlet-counting}
Given a graph $G$ and an edge $(i, j) \in E$, 
the local typed graphlet counting problem is to find the set of all typed graphlets that contain nodes $i$ and $j$ and their corresponding frequencies.
This work focuses on computing typed graphlet counts for every edge in $G$.
\end{Definition}
This problem is obviously more difficult than the global typed graphlet counting problem.
Nevertheless, we discuss the global typed graphlet counting problem in Section~\ref{sec:global-typed-graphlet-counting} and mention a number of improvements and optimization's that can be leveraged if we are only concerned with the total frequency of the different typed graphlets in $G$.

Counting \emph{typed k-node motifs} consists of two parts.
The first is the actual structure of the typed k-node motif and the second part is the type configuration of the motif.
In this work, we propose combinatorial relationships for typed graphlets and use them to derive the counts of many typed graphlet directly in $o(1)$ constant time.
Thus, the proposed approach avoids explicit enumeration of the nodes involved in those typed graphlets.
Algorithm~\ref{alg:typed-motifs-exact} shows the general approach for counting typed network motifs.
Note that we do not make any restriction or assumption on the number of node or edge types. 
The algorithm naturally handles heterogeneous graphs with arbitrary number of types and structure.
See Table~\ref{table:notation} for a summary of notation.

{
\algblockdefx[parallel]{parfor}{endpar}[1][]{$\textbf{parallel for}$ #1 $\textbf{do}$}{$\textbf{end parallel}$}
\algrenewcommand{\alglinenumber}[1]{\fontsize{6.5}{7}\selectfont#1}
\begin{figure}[h!]
\begin{center}
\begin{algorithm}[H]
\caption{\,Heterogeneous Network Motifs}
\label{alg:typed-motifs-exact}
\begin{spacing}{1.1}
\fontsize{8}{9}\selectfont
\begin{algorithmic}[1]
\Require a graph $G=(V,E,\Phi,\xi)$ 
\smallskip

\State Initialize $\Psi$ to all zeros
\label{algline:init-arrays}

\parfor[{\bf each} $(i,j) \in E$ in order] \label{algline:graphlet-for}
	
	\State Set $\vx = \mathbf{0}$ to a vector of all zeros \emph{and} $\mathcal{X}_{ij} = \emptyset$
	
	\State $\mathcal{M}_{ij} = \emptyset$ \Comment{set of typed motif ids occurring between $i$ and $j$}
	
\smallskip
	\For {$k \in \Gamma_i$} 
\label{algline:graphlet-create-hash}
	    \textbf{if} $k \not= j$ \textbf{then} $\Psi(k) = \lambda_1$ \label{algline:graphlet-mark-neigh-of-v}
	\EndFor
	
	\For {$k \in \Gamma_j$} \label{algline:triangles-and-wedges}
		\If{$k = i$} \textbf{continue} \EndIf \label{algline:graphlet-tri-w=v} 
		\If{$\Psi(k) = \lambda_1$} \label{algline:triangle} \Comment{triangle motif}
			\State $T_{ij} \leftarrow T_{ij} \cup \{k\}$, $|T_{ij}^{\phi_k}\!|=|T_{ij}^{\phi_k}\!|+1$, and set $\Psi(k)=\lambda_3$ \label{algline:triangle-found}  
			
\State $\langle \vx, \mathcal{M}_{ij} \rangle = \textsc{Update}(\vx,\mathcal{M}_{ij},\mathbb{F}(g_2, \Phi_i, \Phi_j, \Phi_{k}, 0))$  \label{algline:typed-tri-motif-hash}

		\Else \Comment{\emph{typed 3-path} centered at node $j$} \label{algline:otherwise-3-path-centered-at-j}
			\State $S_{j} \leftarrow S_{j} \cup \{k\}$,  $|S_{j}^{\phi_k}\!|=|S_{j}^{\phi_k}\!|+1$, and set $\Psi(k)=\lambda_2$ \label{algline:3-path-centered-at-j} 
		    
		    \State $\langle \vx, \mathcal{M}_{ij} \rangle = \textsc{Update}(\vx,\mathcal{M}_{ij},\mathbb{F}(g_1, \Phi_i, \Phi_j, \Phi_{k}, 0))$ \label{algline:typed-3-path-centered-j-motif-hash}
		    
		\EndIf
	\EndFor
	
	\For {$k \in (\Gamma_i \setminus T_{ij})$} \label{algline:S_i} \Comment{Set of 3-path nodes centered at $i$}
			\State $S_{i} \leftarrow S_{i} \cup \{k\}$ and $|S_{i}^{\phi_k}\!|=|S_{i}^{\phi_k}\!|+1$ \label{algline:3-path-centered-at-i} 

		    \State $\langle \vx, \mathcal{M}_{ij} \rangle = \textsc{Update}(\vx,\mathcal{M}_{ij},\mathbb{F}(g_1, \Phi_i, \Phi_j, \Phi_{k}, 0))$ \label{algline:typed-3-path-centered-i-motif-hash}
		    
	\EndFor
	
	\State Given $S_i$ and $S_j$, derive typed path-based motifs via Algorithm~\ref{alg:typed-path-based-motifs-exact} \label{algline:main-alg-call-typed-path-based-motifs-exact}
	\smallskip
	
	\State Given $T_{ij}$, derive typed triangle-based motifs via Algorithm~\ref{alg:typed-triangle-based-motifs-exact} \label{algline:main-alg-call-typed-triangle-based-motifs-exact}
\smallskip

	\For{$t, t^{\prime} \in \{1,\ldots, L\}$ such that $t \leq t^{\prime}$} \label{algline:main-alg-for-type-pair}
		\multilinenospaceD{Derive remaining typed graphlet orbits in constant time via Eq.~\ref{eq:typed-4-path-center-orbit}-\ref{eq:typed-4-chordal-cycle-center-orbit} and update counts $\vx$ and set of motifs $\mathcal{M}_{ij}$ if needed}
		\label{algline:main-alg-derive-remaining-typed-graphlet-orbits-constant-time}
	\EndFor
	
\For{$\hash \in \mathcal{M}_{ij}$} \Comment{unique typed graphlets between node $i$ and $j$} \label{algline:for-each-unique-motif-id-for-given-edge}
		\State $\mathcal{X}_{ij} = \mathcal{X}_{ij} \cup \{(\hash,\vx_{\hash})\}$ \Comment{store nonzero typed graphlet counts} 
			\label{algline:add-sparse-typed-motif-id-and-count-pairs}
\EndFor

	\State Reset $\Psi$ to all zero  \label{algline:unmark-graphlet-create-hash}
	
\endpar
\State Merge all local typed motifs found by each worker to obtain $\mathcal{M}$ 
\State {\bf return} $\mathcal{X}$ \emph{and} the set of motifs $\mathcal{M}$ occurring in $G$
\smallskip
\end{algorithmic}
\end{spacing}
\vspace{-0.mm}
\end{algorithm}
\end{center}
\vspace{-4.mm}
\end{figure}
}

\subsection{Counting Typed 3-Node Motifs} \label{sec:algorithm-3-node-typed-motifs}
We begin by introducing the notion of a typed neighborhood \emph{and} typed degree of a node.
These are then used as a basis for deriving all typed 3-node motif counts in worst-case $\mathcal{O}(\Delta)$ time (Theorem~\ref{lem:time-complexity-3-node-graphlets}).
\begin{Definition}[Typed Neighborhood]\label{def:typed-neighborhood}
Given an arbitrary node $i$ in $G$, the \emph{typed neighborhood} $\Gamma_{i}^{t}$ is the set of nodes with type $t$ that are reachable by following edges originating from $i$ within $1$-hop distance.
More formally, 
\begin{equation}\label{eq:typed-neighborhood}
\Gamma_{i}^{t} = \{ j \in V \, | \, (i,j) \in E \wedge \phi_j=t \}
\end{equation}\noindent
Intuitively, a node $j \in \Gamma_{i}^{t}$ iff there exists an edge $(i,j) \in E$ between node $i$ and $j$ and the type of node $j$ denoted as $\phi_j$ is $t$.
\end{Definition}\noindent
\begin{Definition}[Typed Degree] \label{def:typed-degree}
The \emph{typed-degree} $d_{i}^{t}$ of node $i$ with type $t$ is defined as $d_{i}^{t} = |\Gamma_{i}^{t}|$ where $d_{i}^{t}$ is the number of nodes connected to node $i$ with type $t$.
\end{Definition}

Using these notions as a basis, we can define $S_{i}^{t}$, $S_{j}^{t}$, and $T_{ij}^{t}$ for $t=1,\ldots,L$ (Figure~\ref{fig:typed-lower-order-sets}).
Obtaining these sets is equivalent to computing all $3$-node typed motif counts.
These sets are all defined with respect to a given edge $(i,j) \in E$ between node $i$ and $j$ with types $\phi_i$ and $\phi_j$.
Since typed graphlets are counted for each edge $(i,j) \in E$, the types $\phi_i$ and $\phi_j$ are fixed ahead of time. 
Thus, there is only one remaining type to select for $3$-node typed motifs.
\begin{Definition}[Typed Triangle Nodes] \label{def:typed-triangle-node}
Given an edge $(i,j) \in E$ between node $i$ and $j$ with types $\phi_i$ and $\phi_j$, 
let $T_{ij}^{t}$ denote the set of nodes of type $t$ that complete a typed triangle with node $i$ and $j$ defined as:
\begin{align} \label{eq:typed-triangle}
T_{ij}^{t} = \Gamma_{i}^{t} \cap \Gamma_j^{t}
\end{align}\noindent
where $|T_{ij}^{t}|$ denotes the number of nodes that form triangles with node $i$ and $j$ of type $t$.
Furthermore, every node $k \in T_{ij}^{t}$ is of type $t$ and thus completes a typed triangle with node $i$ and $j$ consisting of types $\phi_i$, $\phi_j$, and $\phi_k=t$.
\end{Definition}
\begin{Definition}[Typed 3-Star Nodes Centered at $i$] \label{def:typed-3-star-node-centered-at-i}
Given an edge $(i,j) \in E$ between node $i$ and $j$ with types $\phi_i$ and $\phi_j$.
Let $S_{i}^{t}$ denote the set of nodes of type $t$ that form 3-node stars (or equivalently 3-node paths) centered at node $i$ (and not including $j$).
More formally, 
\begin{align} \label{eq:3-node-typed-star-centered-at-node-i}
S_{i}^{t} &= \big\lbrace k \in (\Gamma_{i}^{t} \setminus \{j\}) \; \big| \; k \notin \Gamma_{j}^{t} \big\rbrace \\
&= \Gamma_{i}^{t} \setminus \big(\Gamma_{j}^{t} \cup \{j\}\big) = \Gamma_{i}^{t} \setminus T_{ij}^{t}
\end{align}\noindent
where $|S_{i}^{t}|$ denotes the number of nodes of type $t$ that form 3-stars centered at node $i$ (not including $j$).
\end{Definition}
\begin{Definition}[Typed 3-Star Nodes Centered at $j$] \label{def:typed-3-star-node-centered-at-j}
Given an edge $(i,j) \in E$ between node $i$ and $j$ with types $\phi_i$ and $\phi_j$, 
let $S_{j}^{t}$ denote the set of nodes of type $t$ that form 3-node stars centered at node $j$ (and not including $i$) defined formally as:
\begin{align} \label{eq:3-node-typed-star-centered-at-node-j}
S_{j}^{t} &= \big\lbrace k \in (\Gamma_{j}^{t} \setminus \{i\}) \; \big| \; k \notin \Gamma_{i}^{t} \big\rbrace \\
&= \Gamma_{j}^{t} \setminus \big(\Gamma_{i}^{t} \cup \{i\}\big) = \Gamma_{j}^{t} \setminus T_{ij}^{t}
\end{align}\noindent
where $|S_{j}^{t}|$ denotes the number of nodes of type $t$ that form 3-stars centered at node $j$ (not including $i$).
\end{Definition}
\begin{Property} \label{prop:relationship-between-typed-sets-and-untyped-sets}
\begin{align} 
T_{ij} = \bigcup_{t=1}^{L} T_{ij}^{t},\quad\;\;\;
S_{i} = \bigcup_{t=1}^{L} S_{i}^{t},\quad\;\;\;
S_{j} = \bigcup_{t=1}^{L} S_{j}^{t}
\end{align}
\end{Property}
This property is shown in Figure~\ref{fig:typed-lower-order-sets}.
These lower-order 3-node typed motif counts are used to derive many higher-order typed motif counts in $o(1)$ constant time (Section~\ref{sec:combinatorial-relationships}).
\begin{Definition}[Typed 3-Stars (for an edge)] \label{def:typed-3-star-for-edge}
Given an edge $(i,j) \in E$ between node $i$ and $j$ with types $\phi_i$ and $\phi_j$, 
the number of typed 3-node stars that contain $(i,j) \in E$ with types $\phi_i$, $\phi_j$, $t$ is: 
\begin{align} \label{eq:3-node-typed-star-for-edge}
|S_{ij}^{t}| = |S_{i}^{t}|+|S_{j}^{t}|
\end{align}\noindent
where $|S_{ij}^{t}|$ denotes the number of typed 3-stars that contain nodes $i$ and $j$ with types $\phi_i$, $\phi_j$, $t$.
\end{Definition}
Moreover, the number of typed triangles centered at $(i,j) \in E$ with types $\phi_i$, $\phi_j$, $t$ is simply $|T_{ij}^{t}|$ (Definition~\ref{def:typed-triangle-node}) whereas the number of typed 3-node stars that contain $(i,j) \in E$ with types $\phi_i$, $\phi_j$, $t$ is $|S_{ij}^{t}| = |S_{i}^{t}|+|S_{j}^{t}|$ (Definition~\ref{def:typed-3-star-for-edge}).
We do not need to actually store the sets $S_{i}^{t}$, $S_{j}^{t}$, and $T_{ij}^{t}$ for every type $t = 1, \ldots, L$.
We only need to store the \emph{size/cardinality} of the sets (as shown in Algorithm~\ref{alg:typed-motifs-exact}). 
For convenience, we denote the size of those sets as $|S_{i}^{t}|$, $|S_{j}^{t}|$, and $|T_{ij}^{t}|$ for all $t=1,\ldots,L$, respectively.

{
\algblockdefx[parallel]{parfor}{endpar}[1][]{$\textbf{parallel for}$ #1 $\textbf{do}$}{$\textbf{end parallel}$}
\algrenewcommand{\alglinenumber}[1]{\fontsize{6.5}{7}\selectfont#1}
\begin{figure}[h!]
\begin{center}
\begin{algorithm}[H]
\caption{\,
Heterogeneous Path-based Network Motifs
}
\label{alg:typed-path-based-motifs-exact}
\begin{spacing}{1.2}
\fontsize{8}{9}\selectfont
\begin{algorithmic}[1]
\Require a graph $G=(V,E,\Phi,\xi)$, 
an edge $(i,j)$, 
sets of nodes $S_i$ and $S_j$ that form 3-paths centered at $i$ and $j$, respectively, 
a typed motif count vector $\vx$ for $(i,j)$, 
and 
set $\mathcal{M}_{ij}$ of unique typed motifs for $i$ and $j$.
\smallskip

	\For{{\bf each} $w_k \in S_i$ in {\bf order} $w_1, w_2, \cdots$ of $S_{i}$}
	\label{algline:Si-three-paths-centered-at-i}
	    
	    \For{$w_r \in \Gamma_{w_k}$}
	        \State \textbf{if} $(w_r=i) \vee (w_r=j)$ \textbf{continue}

\If{$w_r \not\in (\Gamma_i \cup \Gamma_j)$} \Comment{$\Psi(w_r)=0$, then \emph{4-path-edge} orbit} 

\State $\langle \vx, \mathcal{M}_{ij} \rangle = \textsc{Update}(\vx,\mathcal{M}_{ij},\mathbb{F}(g_3, \Phi_i, \Phi_j, \Phi_{w_k}, \Phi_{w_r}))$

\ElsIf{$w_r \in S_i \wedge w_r \leq w_k$} 
\Comment{$\Psi(w_r)$=$\lambda_1$, tailed-tri (tail)} 
\label{algline:tailed-tri-edge-tail-edge-orbit-Si}

		   \State $\langle \vx, \mathcal{M}_{ij} \rangle = \textsc{Update}(\vx,\mathcal{M}_{ij},\mathbb{F}(g_7, \Phi_i, \Phi_j, \Phi_{w_k}, \Phi_{w_r}))$

\EndIf
	    \EndFor 
	    
	\EndFor \label{algline:Si-three-paths-centered-at-i-endfor}
	
	\For{{\bf each} $w_k \in S_j$ in {\bf order} $w_1, w_2, \cdots$ of $S_{j}$} \label{algline:Sj-three-paths-centered-at-j}
	    
	    \For{$w_r \in \Gamma_{w_k}$}
	        \State \textbf{if} $(w_r=i) \vee (w_r=j)$ \textbf{continue}

\If{$w_r \not\in (\Gamma_i \cup \Gamma_j)$} \Comment{$\Psi(w_r)=0$, \emph{typed 4-path-edge} orbit}
\State $\langle \vx, \mathcal{M}_{ij} \rangle = \textsc{Update}(\vx,\mathcal{M}_{ij},\mathbb{F}(g_3, \Phi_i, \Phi_j, \Phi_{w_k}, \Phi_{w_r}))$

\ElsIf{$w_r \in S_j \wedge w_r \leq w_k$} 
\Comment{$\Psi(w_r)$=$\lambda_2$, \emph{tailed-tri} (tail)} 
\label{algline:tailed-tri-edge-tail-edge-orbit-Sj}
\State $\langle \vx, \mathcal{M}_{ij} \rangle = \textsc{Update}(\vx,\mathcal{M}_{ij},\mathbb{F}(g_7, \Phi_i, \Phi_j, \Phi_{w_k}, \Phi_{w_r}))$

\ElsIf{$w_r \in S_i$} \label{algline:typed-4-cycle} \Comment{$\Psi(w_r)=\lambda_1$, \emph{typed 4-cycle}}

\State $\langle \vx, \mathcal{M}_{ij} \rangle = \textsc{Update}(\vx,\mathcal{M}_{ij},\mathbb{F}(g_6, \Phi_i, \Phi_j, \Phi_{w_k}, \Phi_{w_r}))$
\EndIf	
	    \EndFor
	    
\EndFor \label{algline:Sj-three-paths-centered-at-j-endfor}
	
\State {\bf return} set of typed motifs $\mathcal{M}_{ij}$ occurring between $i$ and $j$ and $\vx$
\smallskip
\end{algorithmic}
\end{spacing}
\vspace{-1.mm}
\end{algorithm}
\end{center}
\vspace{-8.mm}
\end{figure}
}

{
\algblockdefx[parallel]{parfor}{endpar}[1][]{$\textbf{parallel for}$ #1 $\textbf{do}$}{$\textbf{end parallel}$}
\algrenewcommand{\alglinenumber}[1]{\fontsize{6.5}{7}\selectfont#1}
\begin{figure}[h!]
\begin{center}
\begin{algorithm}[H]
\caption{\,
Heterogeneous Triangle-based Network Motifs
}
\label{alg:typed-triangle-based-motifs-exact}
\begin{spacing}{1.2}
\fontsize{8}{9}\selectfont
\begin{algorithmic}[1]
\Require a graph $G=(V,E,\Phi,\xi)$, 
an edge $(i,j)$, 
set of nodes $T_{ij}$ that form triangles with $i$ and $j$,
sets of nodes $S_i$ and $S_j$ that form 3-paths centered at $i$ and $j$, respectively, 
a typed motif count vector $\vx$ for $(i,j)$, 
and 
set $\mathcal{M}_{ij}$ of unique typed motifs for $i$ and $j$.
\smallskip

	\For{{\bf each} $w_k \in T_{ij}$ in {\bf order} $w_1, w_2, \cdots$ of $T_{ij}$}\label{algline:T-triangles} 

	    \For{$w_r \in \Gamma_{w_k}$}

\If{$w_r \in T_{ij} \wedge w_r \leq w_k$} \Comment{$\Psi(w_r)=\lambda_3$, \emph{typed 4-clique}} \label{algline:typed-4-clique}
\State $\langle \vx, \mathcal{M}_{ij} \rangle = \textsc{Update}(\vx,\mathcal{M}_{ij},\mathbb{F}(g_{12}, \Phi_i, \Phi_j, \Phi_{w_k}, \Phi_{w_r}))$

\ElsIf{$w_r \! \in \!(S_i \!\cup \!S_j)$} 
\Comment{$\Psi(w_r)$=$\lambda_1\!\!\vee\!\!\lambda_2$, \emph{chordal-cycle-edge}}
\label{algline:typed-4-chordal-cycle-edge-orbit}
\State \textbf{if} $(w_r=i) \vee (w_r=j)$ \textbf{continue}
\State $\langle \vx, \mathcal{M}_{ij} \rangle = \textsc{Update}(\vx,\mathcal{M}_{ij},\mathbb{F}(g_{10}, \Phi_i, \Phi_j, \Phi_{w_k}, \Phi_{w_r}))$

\ElsIf{$w_r \not\in (\Gamma_i \cup \Gamma_j)$} \Comment{\emph{tailed-tri-center}}
\label{algline:typed-4-tailed-tri-center-orbit}
\State \textbf{if} $(w_r=i) \vee (w_r=j)$ \textbf{continue}
\State $\langle \vx, \mathcal{M}_{ij} \rangle = \textsc{Update}(\vx,\mathcal{M}_{ij},\mathbb{F}(g_{8}, \Phi_i, \Phi_j, \Phi_{w_k}, \Phi_{w_r}))$
\EndIf
\EndFor
	\EndFor
\State {\bf return} set of typed motifs $\mathcal{M}_{ij}$ occurring between $i$ and $j$ and $\vx$
\smallskip
\end{algorithmic}
\end{spacing}
\vspace{-1.mm}
\end{algorithm}
\end{center}
\vspace{-8.mm}
\end{figure}
}

\makeatletter
\global\let\tikz@ensure@dollar@catcode=\relax
\makeatother
\tikzstyle{every node}=[font=\large,line width=1.5pt]
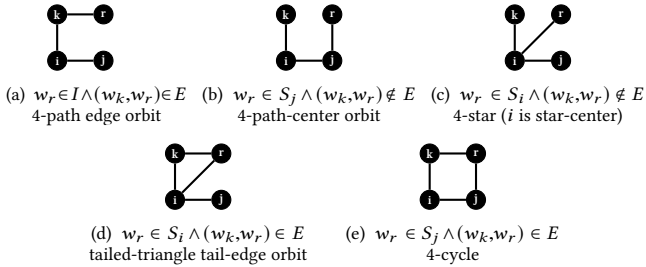
\begin{figure}[t!]
\centering
\begin{center}

\subfigure[$w_r\!\!\in \! I \!\wedge\! (w_k,\!\!w_r) \!\!\in\! E$ \centerline{4-path edge orbit}]{
\scalebox{0.28}{
\centering
\begin{tikzpicture}[-,>=latex,auto,node distance=2.2cm,thick,main node/.style={circle,draw=black,fill=black,draw,font=\sffamily\Huge\bfseries,text=white,
minimum width=0.9cm
}]

\node[main node] (1) {$\mathbf{i}$};
\node[main node] (2) [right of=1] {$\mathbf{j}$};
\node[main node] (3) [above of=1] {$\mathbf{k}$};
\node[main node] (4) [above of=2] {$\mathbf{r}$};
\node[main node] (5) [above right of=2,fill=white,draw=white,text=white]{};
\node[main node] (6) [above left of=1,fill=white,draw=white,text=white]{};
\node[main node] (8) [above right of=5,fill=white,draw=white,text=white]{};

\tikzstyle{LabelStyle}=[below=3pt]
\path[every node/.style={font=\sffamily}] 
(1) edge [line width=1.0mm, left] node [above left] {} (2) 
	(1)  edge [line width=1.0mm, right] node[below right] {} (3)
	(3) edge [line width=1.0mm,right] node[above right] {} (4);
\end{tikzpicture}
}
\label{fig:4-path-edge-orbit-Si}
}
\hfill
\subfigure[$w_r \!\!\in \!S_j \!\wedge\!  (w_k,\!\!w_r) \!\!\not\in\! E$ \centerline{4-path-center orbit}]{
\scalebox{0.28}{
\centering
\begin{tikzpicture}[-,>=latex,auto,node distance=2.2cm,thick,main node/.style={circle,draw=black,fill=black,draw,font=\sffamily\Huge\bfseries,text=white,
minimum width=0.9cm
}]

\node[main node] (1) {$\mathbf{i}$};
\node[main node] (2) [right of=1] {$\mathbf{j}$};
\node[main node] (3) [above of=2] {$\mathbf{r}$};
\node[main node] (4) [above of=1] {$\mathbf{k}$};
\node[main node] (5) [above right of=2,fill=white,draw=white,text=white]{};
\node[main node] (6) [above left of=1,fill=white,draw=white,text=white]{};
\node[main node] (7) [above left of=6,fill=white,draw=white,text=white]{};
\node[main node] (8) [above right of=5,fill=white,draw=white,text=white]{};

\tikzstyle{LabelStyle}=[below=3pt]
\path[every node/.style={font=\sffamily}] 
(1) edge [line width=1.0mm, left] node [above left] {} (2) 
	(2)  edge [line width=1.0mm, right] node[below right] {} (3)
	(1) edge [line width=1.0mm,right] node[above right] {} (4);
\end{tikzpicture}
}
\label{fig:4-path-center-orbit}
}
\hfill
\subfigure[$w_r \!\in \!S_i \!\wedge\! (w_k,\!\!w_r) \!\not\in\! E$ \centerline{4-star ($i$ is star-center)}]{
\scalebox{0.28}{
\centering
\begin{tikzpicture}[-,>=latex,auto,node distance=2.2cm,thick,main node/.style={circle,draw=black,fill=black,draw,font=\sffamily\Huge\bfseries,text=white,
minimum width=0.9cm
}]

\node[main node] (1) {$\mathbf{i}$};
\node[main node] (2) [right of=1] {$\mathbf{j}$};
\node[main node] (3) [above of=1] {$\mathbf{k}$};
\node[main node] (4) [above of=2] {$\mathbf{r}$};
\node[main node] (5) [above right of=2,fill=white,draw=white,text=white]{};
\node[main node] (6) [above left of=1,fill=white,draw=white,text=white]{};
\node[main node] (7) [above left of=6,fill=white,draw=white,text=white]{};
\node[main node] (8) [above right of=5,fill=white,draw=white,text=white]{};

\tikzstyle{LabelStyle}=[below=3pt]
\path[every node/.style={font=\sffamily}] 
(1) edge [line width=1.0mm, left] node [above left] {} (2) 
	(1)  edge [line width=1.0mm, right] node[below right] {} (3)
	(1) edge [line width=1.0mm,right] node[above right] {} (4);
\end{tikzpicture}
}
\label{fig:4-star-node-i-is-star-center}
}

\vspace{-4mm}
\subfigure[$w_r \!\!\in \! S_i \!\wedge\! (w_k,\!\!w_r) \!\!\in\! E$ \centerline{tailed-triangle tail-edge orbit}]{
\scalebox{0.28}{
\centering
\begin{tikzpicture}[-,>=latex,auto,node distance=2.2cm,thick,main node/.style={circle,draw=black,fill=black,draw,font=\sffamily\Huge\bfseries,text=white,
minimum width=0.9cm
}]

\node[main node] (1) {$\mathbf{i}$};
\node[main node] (2) [right of=1] {$\mathbf{j}$};
\node[main node] (3) [above of=1] {$\mathbf{k}$};
\node[main node] (4) [above of=2] {$\mathbf{r}$};
\node[main node] (5) [above right of=2,fill=white,draw=white,text=white]{};
\node[main node] (6) [above left of=1,fill=white,draw=white,text=white]{};
\node[main node] (7) [above left of=6,fill=white,draw=white,text=white]{};
\node[main node] (8) [above right of=5,fill=white,draw=white,text=white]{};

\tikzstyle{LabelStyle}=[below=3pt]
\path[every node/.style={font=\sffamily}] 
(1) edge [line width=1.0mm, left] node [above left] {} (2) 
	(1)  edge [line width=1.0mm, right] node[below right] {} (3)
(1) edge [line width=1.0mm, left] node[below left] {} (4)
	(3) edge [line width=1.0mm,right] node[above right] {} (4);
\end{tikzpicture}
}
\label{fig:tailed-triangle-tail-orbit-Si}
}
\hspace{4mm}
\subfigure[$w_r \!\in \!S_j \!\wedge\!  (w_k,\!\!w_r) \!\in\! E$ \centerline{4-cycle}]{
\scalebox{0.28}{
\centering
\begin{tikzpicture}[-,>=latex,auto,node distance=2.2cm,thick,main node/.style={circle,draw=black,fill=black,draw,font=\sffamily\Huge\bfseries,text=white,
minimum width=0.9cm
}]

\node[main node] (1) {$\mathbf{i}$};
\node[main node] (2) [right of=1] {$\mathbf{j}$};
\node[main node] (3) [above of=2] {$\mathbf{r}$};
\node[main node] (4) [above of=1] {$\mathbf{k}$};
\node[main node] (5) [above right of=2,fill=white,draw=white,text=white]{};
\node[main node] (6) [above left of=1,fill=white,draw=white,text=white]{};
\node[main node] (7) [above left of=6,fill=white,draw=white,text=white]{};
\node[main node] (8) [above right of=5,fill=white,draw=white,text=white]{};

\tikzstyle{LabelStyle}=[below=3pt]
\path[every node/.style={font=\sffamily}] 
(1) edge [line width=1.0mm, left] node [above left] {} (2) 
	(2)  edge [line width=1.0mm, right] node[below right] {} (3)
(3) edge [line width=1.0mm, left] node[below left] {} (4)
	(1) edge [line width=1.0mm,right] node[above right] {} (4);
\end{tikzpicture}
}
\label{fig:4-cycle}
}

\end{center}

\vspace{-4mm}
\caption{Path-based 4-node orbits derived from lower-order ($k\!-\!1$)-node graphlets in Algorithm~\protect\ref{alg:typed-path-based-motifs-exact}.
For each motif orbit above, we provide the equation for counting the motif orbit.
Since $w_k \in S_{i}$ holds in (a)-(e), it was removed from the equations above for brevity.
}
\label{fig:typed-path-based-4-node-motifs}
\vspace{-3mm}
\end{figure}

\subsection{Counting Typed 4-Node Motifs} \label{sec:algorithm-4-node-typed-motifs}
To derive $k$-node typed graphlets, the framework leverages the lower-order ($k\!-\!1$)-node typed graphlets.
Therefore, $4$-node typed graphlets are derived by leveraging the sets $T_{ij} = \Gamma_i \cup \Gamma_j$, 
$S_j = \Gamma_j \setminus T_{ij}$,
and $S_i = \Gamma_i \setminus T_{ij}$ 
computed from the lower-order $3$-node typed graphlets along with the set $I$ of non-adjacent nodes $\wrt$ $(i,j) \in E$ defined formally as follows: 
\begin{align}\label{eq:indep-node-set}
I &= V \setminus (\Gamma_i \cup \Gamma_j \cup \{i,j\}) \\
&= V \setminus (T_{ij} \cup S_i \cup S_j \cup \{i,j\}). \nonumber
\end{align}
\begin{Property} \label{prop:set-I}
\begin{equation}
|V| = |I| + |S_i| + |S_j| + |T_{ij}| + |\{i,j\}|
\end{equation}\noindent
\end{Property}\noindent
The proof is straightforward by Eq.~\ref{eq:indep-node-set} and applying 
the principle of inclusion-exclusion~\cite{pgd-kais}.

\definecolor{gray}{RGB}{100,100,100}
\definecolor{darkgray}{RGB}{150,150,150}

\definecolor{theblue}{RGB}{0, 20, 159} 
\definecolor{thelightblue}{RGB}{0,191,255}
\definecolor{thelightred}{RGB}{255,191,0}
\definecolor{thecrimson}{RGB}{	153, 0, 0}

\makeatletter
\global\let\tikz@ensure@dollar@catcode=\relax
\makeatother
\tikzstyle{every node}=[font=\large,line width=1.5pt]
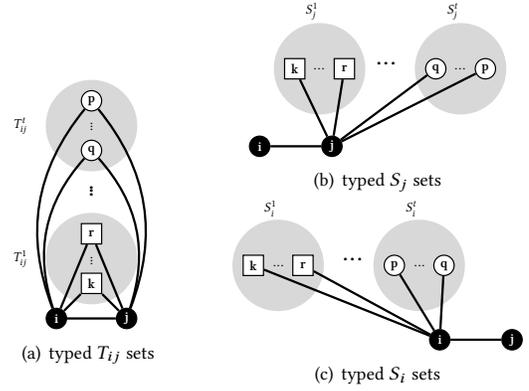
\begin{figure}[t!]
\centering

\tikzstyle{background-type1}=[circle,
fill=gray!25, 
inner sep=0.3cm,
rounded corners=4mm]

\tikzstyle{background-type2}=[circle,
fill=gray!25,
inner sep=0.3cm,
rounded corners=4mm]

\tikzstyle{background-type3}=[rectangle,
fill=green!15,
inner sep=0.3cm,
rounded corners=4mm]

\tikzstyle{background-type4}=[rectangle,
fill=gray!25,
inner sep=0.3cm,
rounded corners=4mm]                                                

\tikzstyle{background}=[rectangle,
fill=purple!10,
inner sep=0.3cm,
rounded corners=4mm]

\tikzstyle{background-white}=[rectangle,
fill=white,
inner sep=0cm,
rounded corners=0mm]                                                

\begin{minipage}[b]{0.38\linewidth}
\centering

\subfigure[typed $T_{ij}$ sets]{
\scalebox{0.3}{
\centering
\begin{tikzpicture}[-,>=latex,auto,node distance=2.2cm,thick,main node/.style={circle,draw=black,fill=black,draw,font=\sffamily\Huge\bfseries,text=white,
},
type0 node/.style={circle,draw=black,fill=white,draw,font=\sffamily\Huge\bfseries,text=black,minimum width=0.9cm,minimum height=0.9cm},
type1 node/.style={rectangle,draw=black,fill=white,draw,text=black,font=\sffamily\Huge\bfseries,minimum height=0.9cm},
type2 node/.style={diamond,draw=black,fill=white,draw,font=\sffamily\Huge\bfseries,text=black,minimum width=0.9cm,minimum height=0.9cm},
type3 node/.style={circle,draw=gray!80,fill=white,draw,font=\sffamily\Huge\bfseries,text=black},
type4 node/.style={circle,draw=green!80,fill=white,draw,font=\sffamily\Huge\bfseries,text=black},
white node/.style={circle,draw=white,fill=white,draw,font=\sffamily\Huge\bfseries,text=white,minimum width=0.1cm},
green node/.style={circle,draw=gray!25,fill=gray!25,draw,font=\sffamily\Huge\bfseries,text=black,minimum width=0.1cm},
blue node/.style={circle,draw=gray!25,fill=gray!25,draw,font=\sffamily\Huge\bfseries,text=black,minimum width=0.1cm},
text=black,minimum width=0.9cm,font=\sffamily\Huge\bfseries]

\node[main node] (1) {$\mathbf{i}$};
\node[type1 node] (3) [above right of=1] {$\mathbf{k}$};
\node[main node] (2) [below right of=3] {$\mathbf{j}$};
\node[type1 node] (4) [above of=3] {$\mathbf{r}$};
\node[white node] (5) [above of=4,below=15pt] {};
\node[type0 node] (6) [above of=5] {$\mathbf{q}$};
\node[type0 node] (7) [above of=6] {$\mathbf{p}$};
\node[green node] (10) [above of=3,below=18pt] {$\vdots$};
\node[blue node] (11) [above of=6,below=16pt] {$\vdots$};

\tikzstyle{LabelStyle}=[below=3pt]
\path[every node/.style={font=\sffamily}] 
(1) edge [line width=1.0mm, left] node [above left] {} (2) 
	(1)  edge [line width=1.0mm, right] node[below right] {} (3)
(2) edge [line width=1.0mm, left] node[below left] {} (3)
(1) edge [line width=1.0mm, left] node[below left] {} (4)
	(2) edge [line width=1.0mm,right] node[above right] {} (4)
	    (1) edge [bend left, line width=1.0mm, left] node[below left] {} (6)
	(2) edge [bend right, line width=1.0mm,right] node[above right] {} (6)
	    (1) edge [bend left, line width=1.0mm, left] node[below left] {} (7)
	(2) edge [bend right,line width=1.0mm,right] node[above right] {} (7);
	
\begin{pgfonlayer}{background}
\node [background-type1,
fit=(3) (4), font=\sffamily\Huge\bfseries,label=left:\Huge $T_{ij}^{1}\;\;\;\;$] {};
\node [background-type2, fit=(6) (7),font=\sffamily\Huge\bfseries,label=left:\Huge $\Huge T_{ij}^{t}\;\;\;\;$] {};
\node [background-white, fit=(5),font=\sffamily\Huge\bfseries,label=mid:\Huge {\fontsize{32}{32}\selectfont $\mathrm{\vdots}$}] {};
\end{pgfonlayer}
\end{tikzpicture}
}
\label{fig:typed-triangle-nodes}
}
\vspace{3mm}
\end{minipage}
\begin{minipage}[b]{0.50\linewidth}
\centering
\subfigure[typed $S_j$ sets]{
\scalebox{0.3}{
\centering
\begin{tikzpicture}[-,>=latex,auto,node distance=2.2cm,thick,main node/.style={circle,draw=black,fill=black,draw,font=\sffamily\Huge\bfseries,text=white,
},
type0 node/.style={circle,draw=black,fill=white,draw,font=\sffamily\Huge\bfseries,text=black,minimum width=0.9cm,minimum height=0.9cm},
type1 node/.style={rectangle,draw=black,fill=white,draw,text=black,font=\sffamily\Huge\bfseries,minimum height=0.9cm},
type2 node/.style={diamond,draw=black,fill=white,draw,font=\sffamily\Huge\bfseries,text=black,minimum width=0.9cm,minimum height=0.9cm},
type3 node/.style={circle,draw=gray!80,fill=white,draw,font=\sffamily\Huge\bfseries,text=black},
type4 node/.style={circle,draw=green!80,fill=white,draw,font=\sffamily\Huge\bfseries,text=black},
white node/.style={circle,draw=white,fill=white,draw,font=\sffamily\Huge\bfseries,text=white,minimum width=0.1cm},
green node/.style={circle,draw=gray!25,fill=gray!25,draw,font=\sffamily\Huge\bfseries,text=black,minimum width=0.1cm},
blue node/.style={circle,draw=gray!25,fill=gray!25,draw,font=\sffamily\Huge\bfseries,text=black,minimum width=0.1cm},
text=black,minimum width=0.9cm,font=\sffamily\Huge\bfseries]

\node[main node] (1) {$\mathbf{i}$};
\node[type1 node] (3) [above right of=1, above=40pt] {$\mathbf{k}$};
\node[main node] (2) [right of=1,right=15pt] {$\mathbf{j}$};
\node[type1 node] (4) [right of=3] {$\mathbf{r}$};
\node[white node] (5) [right of=4, left=5pt] {};
\node[type0 node] (6) [right of=5] {$\mathbf{q}$};
\node[type0 node] (7) [right of=6] {$\mathbf{p}$};
\node[green node] (10) [right of=3,left=20pt] {$...$};
\node[blue node] (11) [right of=6,left=20pt] {$...$};

\tikzstyle{LabelStyle}=[below=3pt]
\path[every node/.style={font=\sffamily}] 
(1) edge [line width=1.0mm, left] node [above left] {} (2) 
(2) edge [line width=1.0mm, left] node[below left] {} (3)
	(2) edge [line width=1.0mm,right] node[above right] {} (4)
	(2) edge [line width=1.0mm,right] node[above right] {} (6)
	(2) edge [line width=1.0mm,right] node[above right] {} (7);
	
\begin{pgfonlayer}{background}
\node [background-type1,
fit=(3) (4), font=\sffamily\Huge\bfseries,label=above:\Huge $S_{j}^{1}\;\;\;\;$] {};
\node [background-type2, fit=(6) (7),font=\sffamily\Huge\bfseries,label=above:\Huge $\Huge S_{j}^{t}\;\;\;\;$] {};
\node [background-white, fit=(5),font=\sffamily\Huge\bfseries,label=mid:\Huge {\fontsize{32}{32}\selectfont $\mathrm{...}$}] {};
\end{pgfonlayer}
\end{tikzpicture}
}
\label{fig:typed-3-path-nodes-centered-at-j}
}

\vspace{-3mm}
\subfigure[typed $S_i$ sets]{
\scalebox{0.3}{
\centering
\begin{tikzpicture}[-,>=latex,auto,node distance=2.2cm,thick,main node/.style={circle,draw=black,fill=black,draw,font=\sffamily\Huge\bfseries,text=white,
},
type0 node/.style={circle,draw=black,fill=white,draw,font=\sffamily\Huge\bfseries,text=black,minimum width=0.9cm,minimum height=0.9cm},
type1 node/.style={rectangle,draw=black,fill=white,draw,text=black,font=\sffamily\Huge\bfseries,minimum height=0.9cm},
type2 node/.style={diamond,draw=black,fill=white,draw,font=\sffamily\Huge\bfseries,text=black,minimum width=0.9cm,minimum height=1.2cm},
type3 node/.style={circle,draw=gray!80,fill=white,draw,font=\sffamily\Huge\bfseries,text=black},
type4 node/.style={circle,draw=green!80,fill=white,draw,font=\sffamily\Huge\bfseries,text=black},
white node/.style={circle,draw=white,fill=white,draw,font=\sffamily\Huge\bfseries,text=white,minimum width=0.1cm},
green node/.style={circle,draw=gray!25,fill=gray!25,draw,font=\sffamily\Huge\bfseries,text=black,minimum width=0.1cm},
blue node/.style={circle,draw=gray!25,fill=gray!25,draw,font=\sffamily\Huge\bfseries,text=black,minimum width=0.1cm},
text=black,minimum width=0.9cm,font=\sffamily\Huge\bfseries]

\node[main node] (1) {$\mathbf{i}$};
\node[type0 node] (3) [above left of=1, above right=40pt] {$\mathbf{q}$};
\node[main node] (2) [right of=1,right=15pt] {$\mathbf{j}$};
\node[type0 node] (4) [left of=3] {$\mathbf{p}$};
\node[white node] (5) [left of=4, right=5pt] {};
\node[type1 node] (6) [left of=5] {$\mathbf{r}$};
\node[type1 node] (7) [left of=6] {$\mathbf{k}$};
\node[green node] (10) [left of=3,right=20pt] {$...$};
\node[blue node] (11) [left of=6,right=20pt] {$...$};

\tikzstyle{LabelStyle}=[below=3pt]
\path[every node/.style={font=\sffamily}] 
(1) edge [line width=1.0mm, left] node [above left] {} (2) 
(1) edge [line width=1.0mm, left] node[below left] {} (3)
	(1) edge [line width=1.0mm,right] node[above right] {} (4)
	(1) edge [line width=1.0mm,right] node[above right] {} (6)
	(1) edge [line width=1.0mm,right] node[above right] {} (7);
	
\begin{pgfonlayer}{background}
\node [background-type1,
fit=(3) (4), font=\sffamily\Huge\bfseries,label=above:\Huge $S_{i}^{t}\;\;\;\;$] {};
\node [background-type2, fit=(6) (7),font=\sffamily\Huge\bfseries,label=above:\Huge $\Huge S_{i}^{1}\;\;\;\;$] {};
\node [background-white, fit=(5),font=\sffamily\Huge\bfseries,label=mid:\Huge {\fontsize{32}{32}\selectfont $\mathrm{...}$}] {};
\end{pgfonlayer}
\end{tikzpicture}
}
\label{fig:typed-3-path-nodes-centered-at-i}
}
\hspace{6mm}
\end{minipage}
\hfill

\vspace{-3mm}
\caption{Typed lower-order sets used to derive many higher-order graphlets in constant time.
Note node $i$ and $j$ can be of arbitrary types.
}
\label{fig:typed-lower-order-sets}
\vspace{-3mm}
\end{figure}

\subsubsection{General Principle for Counting Typed Graphlets} \label{sec:general-principle}
We now introduce a general typed graphlet formulation.
Let $f_{ij}(H, \vt)$ denote the number of distinct k-node typed graphlet orbits of $H$ with the type vector $\vt$ that contain edge $(i,j) \in E$ 
and have properties $P,Q \in \{S_i, S_j, T_{ij}, I\}$ defined formally as:
\begin{align} \label{eq:general-typed-graphlet-formulation} 
f_{ij}(H, \vt) = \Big|\Big\{ \{i,j,w_k,w_r\} \,\big|\,
& w_k \!\in P \wedge 
w_r \!\in Q \wedge \\ \nonumber
&\mathbb{I}\{(w_k, w_r) \!\in E\} \wedge
w_r \not= w_k \wedge \\  \nonumber
&\vt = \big[ \phi_i \;\; \phi_j \;\; \phi_{w_k} \; \phi_{w_r}\big]
\Big\}\Big|
\end{align}\noindent
where $\mathbb{I}\{(w_k, w_r) \!\in E\} = 1$ if $(w_k, w_r) \!\in E$ holds and $0$ otherwise (\ie, $\mathbb{I}\{(w_k, w_r) \!\in E\} = 0$ if $(w_k, w_r) \!\not\in E$).
For clarity and simplicity, $(w_k,w_r) \in E$ or $(w_k,w_r) \not\in E$ is sometimes used (\eg, Table~\ref{table:typed-graphlet-equations}) as opposed to $\mathbb{I}\{(w_k, w_r) \!\in E\}=1$ or $\mathbb{I}\{(w_k, w_r) \!\in E\}=0$.

The equations for deriving every typed graphlet orbit are provided in Table~\ref{table:typed-graphlet-equations}.
Notice that all typed graphlets with $k$-nodes are formulated with respect to the node sets $\{S_i,S_j,T_{ij},I\}$ derived from the typed graphlets with ($k\!-\!1$)-nodes.
Hence, the higher-order typed graphlets with order $k$ are derived from the lower-order ($k\!-\!1$)-node typed graphlets.

We classify typed motifs as path-based or triangle-based.
Typed path-based motifs are the typed 4-node motifs derived from the sets $S_i$ and $S_j$ of nodes that form 3-node paths centered at node $i$ and $j$, respectively.
Conversely, typed triangle-based motifs are the typed 4-node motifs derived from the set $T_{ij}$ of nodes that form triangles (3-cliques) with node $i$ and $j$.
Naturally, typed path-based motifs are the least dense (motifs with fewest edges) whereas the typed triangle-based motifs are the most dense.
Recall $T_{ij} = \Gamma_i \cap \Gamma_j$, $S_j = \Gamma_j \setminus T_{ij}$, $S_i= \Gamma_i \setminus T_{ij}$, and $I = V \setminus (\Gamma_i \cup \Gamma_j \cup \{i,j\}) = V \setminus (T_{ij} \cup S_i \cup S_j \cup \{i,j\})$.
Therefore, $|\Gamma_i| \geq |T_{ij}|$ and $|\Gamma_i| \geq |S_i|$.
Further, if $|\Gamma_i| = |T_{ij}|$ then $|S_i|=0$ and conversely if $|\Gamma_i| = |S_{i}|$ then $|T_{ij}|=0$.

In this work, we derive equations using new non-trivial combinatorial relationships between lower-order typed ($k\!-\!1$)-graphlets that allow us to derive many of the $k$-node typed graphlets in $o(1)$ constant time.
Notably, we make no assumptions about the number of types, their distribution among the nodes and edges, or any other additional information.
On the contrary, the framework is extremely general for arbitrary heterogeneous graphs (see Figure~\ref{fig:heter-special-cases} for a number of popular special cases covered by the framework).
In addition, we also avoid a lot of computations by symmetry breaking techniques, and other conditions to avoid unnecessary work.
Typed graphlets that are computed in constant time include
typed 4-path-center orbit,
typed 4-node stars,
typed chordal-cycle-center orbit,
and typed tailed-triangle-tri-edge orbit. 
Notice that two of the typed tailed-triangle orbits including the tri-center and tail-edge orbit are 
essentially derived for free while computing the typed graphlets using $S_i$ (typed 4-cycles) and $S_j$.

\subsubsection{Directed Typed Graphlets}
The approach is also straightforward to adapt for directed typed motifs.
In particular, we simply replace $\Gamma_i^{t}$ with $\Gamma_i^{t,+}$ and $\Gamma_i^{t,-}$ for typed out-neighbors and typed in-neighbors, respectively.
Thus, we also have $T_{ij}^{t,+}$, $T_{ij}^{t,-}$, $S_j^{t,+}$, $S_j^{t,-}$, $S_i^{t,+}$, and $S_i^{t,-}$.
Now it is just a matter of enumerating all combinations of these sets with the out/in-neighbor sets as well.
That is, we essentially have two additional versions of Algorithm~\ref{alg:typed-motifs-exact} and Algorithm~\ref{alg:typed-path-based-motifs-exact}-\ref{alg:typed-triangle-based-motifs-exact} for each in and out set (\wrt to the main for loop).
The other trivial modification is to ensure each directed typed motif is assigned a unique id (this is the same modification required for typed orbits).
The time and space complexity remains the same since all we did is split the set of neighbors (and the other sets) into two smaller sets by partitioning the nodes in $\Gamma_i^{t}$ into $\Gamma_i^{t,+}$ and $\Gamma_i^{t,-}$.
Similarly, for $T_{ij}^{t}$, $S_j^{t}$, and $S_i^{t}$.

\begin{table*}[t!]
\caption{
Typed Graphlet Orbit Equations.
All typed graphlet orbits with $4$-nodes are formulated with respect to the node sets $\{S_i,S_j,T_{ij},I\}$ derived from the typed $3$-node graphlets.
Recall $T_{ij} = \Gamma_i \cap \Gamma_j$, $S_j = \Gamma_j \setminus T_{ij}$, $S_i= \Gamma_i \setminus T_{ij}$, and $I = V \setminus (\Gamma_i \cup \Gamma_j \cup \{i,j\}) = V \setminus (T_{ij} \cup S_i \cup S_j \cup \{i,j\})$.
Therefore, $|\Gamma_i| \geq |T_{ij}|$ and $|\Gamma_i| \geq |S_i|$.
Further, if $|\Gamma_i| = |T_{ij}|$ then $|S_i|=0$ and conversely if $|\Gamma_i| = |S_{i}|$ then $|T_{ij}|=0$.
In all cases, $w_r \not= w_k$. 
Note $\rho(H)$ is density.
}
\vspace{-3mm}
\centering 
\fontsize{8}{8.5}\selectfont
\setlength{\tabcolsep}{3pt} 
\label{table:typed-graphlet-equations}
\hspace*{-2.5mm}
\begin{tabularx}{1.00\linewidth}{rl c c XH@{}} 
\toprule
\textsc{Motif} $H$ & 
\textsc{Orbit} & $|E(H)|$ & $\rho(H)$ & 
$
f_{ij}(H, \vt) = \Big|\Big\{ \{i,j,w_k,w_r\} \,\big|\,
w_k \!\in P \wedge 
w_r \!\in Q \wedge 
\mathbb{I}\{(w_k, w_r) \!\in E\} \wedge
w_r \not= w_k \wedge
\vt = \big[ \phi_i \;\; \phi_j \;\; \phi_{w_k} \; \phi_{w_r}\big]
\Big\}\Big|
$
\\
\midrule

\textbf{4-path} & \text{edge} & 3 & 0.50 &
$
f_{ij}(g_3, \vt) = \Big|\Big\{ \{i,j,w_k,w_r\} \,\big|\,
w_k \!\in S_i \wedge 
w_r \!\in I \wedge 
(w_k, w_r) \!\in E 
\wedge
\vt = \big[ \phi_i \;\; \phi_j \;\; \phi_{w_k} \; \phi_{w_r}\big]
\Big\}\Big|
$
\\

& \text{center} & 3 & 0.50 &
$
f_{ij}(g_4, \vt) = \Big|\Big\{ \{i,j,w_k,w_r\} \,\big|\,
w_k \!\in \!S_j \wedge w_r \!\in \!S_i \wedge  (w_k,\!w_r) \!\not\in\! E
\wedge
\vt = \big[ \phi_i \;\; \phi_j \;\; \phi_{w_k} \; \phi_{w_r}\big]
\Big\}\Big|
$
\\
\midrule

\textbf{4-star} &  & 3 & 0.50 &
$
f_{ij}(g_5, \vt) = \Big|\Big\{ \{i,j,w_k,w_r\} \,\big|\,
w_k \!\in S_i \wedge 
w_r \!\in S_i \wedge 
(w_k, w_r) \!\not\in E 
\wedge
\vt = \big[ \phi_i \;\; \phi_j \;\; \phi_{w_k} \; \phi_{w_r}\big]
\Big\}\Big|
$
\\
\midrule

\textbf{4-cycle} & & 4 & 0.67 &
$
f_{ij}(g_6, \vt) = \Big|\Big\{ \{i,j,w_k,w_r\} \,\big|\,
w_k \!\in S_j \wedge 
w_r \!\in S_i \wedge 
(w_k, w_r) \!\in E 
\wedge
\vt = \big[ \phi_i \;\; \phi_j \;\; \phi_{w_k} \; \phi_{w_r}\big]
\Big\}\Big|
$
\\
\midrule

\textbf{tailed-triangle} 
& \text{tail-edge} 
& 4 & 0.67 &
$
f_{ij}(g_7, \vt) = \Big|\Big\{ \{i,j,w_k,w_r\} \,\big|\,
w_k \!\in S_i \wedge 
w_r \!\in S_i \wedge 
w_r \not= w_k \wedge 
(w_k, w_r) \!\in E 
\wedge
\vt = \big[ \phi_i \;\; \phi_j \;\; \phi_{w_k} \; \phi_{w_r}\big]
\Big\}\Big|
$
\\

& \text{center} 
& 4 & 0.67 &
$
f_{ij}(g_8, \vt) = \Big|\Big\{ \{i,j,w_k,w_r\} \,\big|\,
w_k \!\in T_{ij} \wedge 
w_r \!\in I \wedge
(w_k, w_r) \!\in E 
\wedge
\vt = \big[ \phi_i \;\; \phi_j \;\; \phi_{w_k} \; \phi_{w_r}\big]
\Big\}\Big|
$
\\

& \text{tri-edge} 
& 4 & 0.67 &
$
f_{ij}(g_9, \vt) = \Big|\Big\{ \{i,j,w_k,w_r\} \,\big|\,
w_k \!\in T_{ij} \wedge 
w_r \!\in S_i \wedge 
(w_k, w_r) \!\not\in E 
\wedge
\vt = \big[ \phi_i \;\; \phi_j \;\; \phi_{w_k} \; \phi_{w_r}\big]
\Big\}\Big|
$
\\
\midrule

\textbf{chordal-cycle} 
& \text{edge} & 5 & 0.83 &
$
f_{ij}(g_{10}, \vt) = \Big|\Big\{ \{i,j,w_k,w_r\} \,\big|\,
w_k \!\in T_{ij} \wedge 
w_r \!\in \big(S_i \cup S_j\big) \wedge 
w_r \not= w_k \wedge 
(w_k, w_r) \!\in E 
\wedge
\vt = \big[ \phi_i \;\; \phi_j \;\; \phi_{w_k} \; \phi_{w_r}\big]
\Big\}\Big|
$
\\

& \text{center} & 5 & 0.83 &
$
f_{ij}(g_{11}, \vt) = \Big|\Big\{ \{i,j,w_k,w_r\} \,\big|\,
w_k \!\in T_{ij} \wedge 
w_r \!\in T_{ij} \wedge 
w_r \not= w_k \wedge 
(w_k, w_r) \!\not\in E 
\wedge
\vt = \big[ \phi_i \;\; \phi_j \;\; \phi_{w_k} \; \phi_{w_r}\big]
\Big\}\Big|
$
\\
\midrule

\textbf{4-clique} 
& & 6 & 1.00 &
$
f_{ij}(g_{12}, \vt) = \Big|\Big\{ \{i,j,w_k,w_r\} \,\big|\,
w_k \!\in T_{ij} \wedge 
w_r \!\in T_{ij} \wedge 
w_r \not= w_k \wedge 
(w_k, w_r) \!\in E 
\wedge
\vt = \big[ \phi_i \;\; \phi_j \;\; \phi_{w_k} \; \phi_{w_r}\big]
\Big\}\Big|
$
\\

\bottomrule
\end{tabularx}
\end{table*}

\makeatletter
\global\let\tikz@ensure@dollar@catcode=\relax
\makeatother
\tikzstyle{every node}=[font=\large,line width=1.5pt]
\begin{figure}[t!]
\centering
\begin{center}

\subfigure[$w_r \!\!\in \! I \!\,\wedge\,\! (w_k,\!\!w_r) \!\!\in\! E$ \centerline{tailed-triangle tri-center orbit}]{
\scalebox{0.3}{
\centering
\begin{tikzpicture}[-,>=latex,auto,node distance=2.2cm,thick,main node/.style={circle,draw=black,fill=black,draw,font=\sffamily\Huge\bfseries,text=white,
minimum width=0.9cm
}]

\node[main node] (1) {$\mathbf{i}$};
\node[main node] (3) [above right of=1] {$\mathbf{k}$};
\node[main node] (2) [below right of=3] {$\mathbf{j}$};
\node[main node] (4) [above of=3] {$\mathbf{r}$};
\node[main node] (5) [above right of=2,fill=white,draw=white,text=white]{};
\node[main node] (6) [above left of=1,fill=white,draw=white,text=white]{};

\tikzstyle{LabelStyle}=[below=3pt]
\path[every node/.style={font=\sffamily}] 
(1) edge [line width=1.0mm, left] node [above left] {} (2) 
	(1)  edge [line width=1.0mm, right] node[below right] {} (3)
(2) edge [line width=1.0mm, left] node[below left] {} (3)
	(3) edge [line width=1.0mm,right] node[above right] {} (4);
\end{tikzpicture}
}
\label{fig:tailed-triangle-tri-center-orbit}
}
\hspace{4mm}
\subfigure[$w_r \!\!\in\! S_i \! \wedge \! (w_k,\!w_r) \!\! \not\in \! E$ \centerline{tailed-triangle $\text{tri-edge orbit}$}]{
\scalebox{0.3}{
\centering
\begin{tikzpicture}[-,>=latex,auto,node distance=2.2cm,thick,main node/.style={circle,draw=black,fill=black,draw,font=\sffamily\Huge\bfseries,text=white,
minimum width=0.9cm
}]

\node[main node] (1) {$\mathbf{i}$};
\node[main node] (3) [above right of=1] {$\mathbf{k}$};
\node[main node] (2) [below right of=3] {$\mathbf{j}$};
\node[main node] (4) [above left of=1] {$\mathbf{r}$};
\node[main node] (5) [above right of=2,fill=white,draw=white,text=white]{};
\node[main node] (6) [left of=4,fill=white,draw=white,text=white]{};

\tikzstyle{LabelStyle}=[below=3pt]
\path[every node/.style={font=\sffamily}] 
(1) edge [line width=1.0mm, left] node [above left] {} (2) 
	(1)  edge [line width=1.0mm, right] node[below right] {} (3)
(2) edge [line width=1.0mm, left] node[below left] {} (3)
	(1) edge [line width=1.0mm,right] node[above right] {} (4);
\end{tikzpicture}
}
\label{fig:tailed-tri-tri-edge-orbit-Si}
}

\subfigure[$\!w_r \!\!\in \! S_i \! \wedge \! (w_k,\!\!w_r) \!\!\in \! E$ 
chordal-cycle edge orbit]{ 
\scalebox{0.3}{
\centering
\begin{tikzpicture}[-,>=latex,auto,node distance=2.2cm,thick,
main node/.style={circle,draw=black,fill=black,draw,font=\sffamily\Huge\bfseries,text=white,minimum width=0.9cm},
white/.style={circle,draw=white,fill=white,draw,font=\sffamily,text=white,minimum width=0.001cm},
]

\node[main node] (1) {$\mathbf{i}$};
\node[main node] (3) [above right of=1] {$\mathbf{k}$};
\node[main node] (2) [below right of=3] {$\mathbf{j}$};
\node[main node] (4) [above of=3] {$\mathbf{r}$};
\node[main node] (5) [above right of=2,fill=white,draw=white,text=white,right=3pt]{};
\node[main node] (6) [above left of=1,fill=white,draw=white,text=white,left=3pt]{};

\tikzstyle{LabelStyle}=[below=3pt]
\path[every node/.style={font=\sffamily}] 
	(1) edge [bend left,line width=1.0mm] node[above right] {} (4)
(1) edge [line width=1.0mm, left] node [above left] {} (2) 
	(1)  edge [line width=1.0mm, right] node[below right] {} (3)
(2) edge [line width=1.0mm, left] node[below left] {} (3)
	(3) edge [line width=1.0mm,right] node[above right] {} (4);
\end{tikzpicture}
}
\label{fig:chordal-cycle-edge-orbit-Si}
}
\hfill
\subfigure[$w_r \!\!\in \!T_{ij} \! \wedge \! (w_k,\!\!w_r) \!\!\not\in \! E$ \centerline{chordal-cycle-center orbit}]{
\scalebox{0.3}{
\centering
\begin{tikzpicture}[-,>=latex,auto,node distance=2.2cm,thick,
main node/.style={circle,draw=black,fill=black,draw,font=\sffamily\Huge\bfseries,text=white,minimum width=0.9cm},
white/.style={circle,draw=white,fill=white,draw,font=\sffamily,text=white,minimum width=0.001cm},
]

\node[main node] (1) {$\mathbf{i}$};
\node[main node] (3) [above right of=1] {$\mathbf{k}$};
\node[main node] (2) [below right of=3] {$\mathbf{j}$};
\node[main node] (4) [above of=3] {$\mathbf{r}$};
\node[main node] (5) [above right of=2,fill=white,draw=white,text=white]{};
\node[main node] (6) [above left of=1,fill=white,draw=white,text=white]{};
\node[white] (7) [above left of=6,fill=white,draw=white,text=white,right=20pt]{};
\node[white] (8) [above right of=5,fill=white,draw=white,text=white,left=20pt]{};

\tikzstyle{LabelStyle}=[below=3pt]
\path[every node/.style={font=\sffamily}] 
	(1) edge [bend left,line width=1.0mm] node[above right] {} (4)
	(2) edge [bend right,line width=1.0mm] node[above right] {} (4)
(1) edge [line width=1.0mm, left] node [above left] {} (2) 
	(1)  edge [line width=1.0mm, right] node[below right] {} (3)
(2) edge [line width=1.0mm, left] node[below left] {} (3);
\end{tikzpicture}
}
\label{fig:chordal-cycle-center-orbit}
}
\hfill
\subfigure[$w_r \!\in \! T_{ij} \,\!\! \wedge \! (w_k,\!\!w_r) \!\!\in \! E$ \centerline{4-clique}]{
\scalebox{0.3}{
\centering
\begin{tikzpicture}[-,>=latex,auto,node distance=2.2cm,thick,
main node/.style={circle,draw=black,fill=black,draw,font=\sffamily\Huge\bfseries,text=white,minimum width=0.9cm},
white/.style={circle,draw=white,fill=white,draw,font=\sffamily,text=white,minimum width=0.001cm},
]

\node[main node] (1) {$\mathbf{i}$};
\node[main node] (3) [above right of=1] {$\mathbf{k}$};
\node[main node] (2) [below right of=3] {$\mathbf{j}$};
\node[main node] (4) [above of=3] {$\mathbf{r}$};
\node[main node] (5) [above right of=2,fill=white,draw=white,text=white]{};
\node[main node] (6) [above left of=1,fill=white,draw=white,text=white]{};
\node[white] (7) [above left of=6,fill=white,draw=white,text=white,right=20pt]{};
\node[white] (8) [above right of=5,fill=white,draw=white,text=white,left=20pt]{};

\tikzstyle{LabelStyle}=[below=3pt]
\path[every node/.style={font=\sffamily}] 
	(1) edge [bend left,line width=1.0mm] node[above right] {} (4)
	(2) edge [bend right,line width=1.0mm] node[above right] {} (4)
(1) edge [line width=1.0mm, left] node [above left] {} (2) 
	(1)  edge [line width=1.0mm, right] node[below right] {} (3)
(2) edge [line width=1.0mm, left] node[below left] {} (3)
	(3) edge [line width=1.0mm,right] node[above right] {} (4);
\end{tikzpicture}
}
\label{fig:4-clique}
}
\end{center}

\vspace{-3mm}
\caption{Triangle-based 4-node orbits derived from lower-order ($k\!-\!1$)-node graphlets in Algorithm~\protect\ref{alg:typed-triangle-based-motifs-exact}.
For each motif orbit above, we provide the equation used to count it.
Since $w_k \in T_{ij}$ holds for all triangle-based 4-node orbits, it was removed from the equations above for brevity.
}
\label{fig:triangle-based-4-node-motifs}
\vspace{-3mm}
\end{figure}
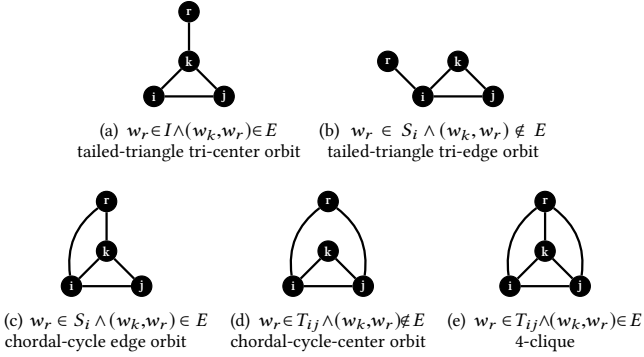

\subsection{Combinatorial Relationships} \label{sec:combinatorial-relationships}
Using new combinatorial relationships between lower-order typed graphlets, 
we derive all remaining typed graphlet orbits in $o(1)$ constant time via Eq.~\ref{eq:typed-4-path-center-orbit}-\ref{eq:typed-4-chordal-cycle-center-orbit} (See Line~\ref{algline:main-alg-for-type-pair}-\ref{algline:main-alg-derive-remaining-typed-graphlet-orbits-constant-time}).
Since we derive all typed graphlet counts for a given edge $(i,j) \in E$ between node $i$ and $j$, we already have two types $\phi_i$ and $\phi_j$.
Thus, these types are fixed ahead of time.
In the case of 4-node typed graphlets, there are two remaining types that need to be selected.
Notice that for typed graphlet orbits, we must solve $\frac{L(L-1)}{2}+L$ equations in the worst-case.
The counts of all remaining typed graphlets are derived in $o(1)$ constant time using the counts of the lower-order ($k\!-\!1$)-node typed graphlets and a few other counts from the $k$-node typed graphlets.
After deriving the exact count of each remaining graphlet with types $\phi_i$, $\phi_j$, $t$, and $t^{\prime}$ for every $t, t^{\prime} \in \{1, \ldots, L\}$ such that $t\leq t^{\prime}$ (Line~\ref{algline:main-alg-for-type-pair}-\ref{algline:main-alg-derive-remaining-typed-graphlet-orbits-constant-time}), if such count is nonzero, we compute a motif hash $\hash = \mathbb{F}(g, \phi_i, \phi_j, t, t^{\prime})$ for graphlet orbit $g$, set $\mathcal{M}_{ij} \leftarrow \mathcal{M}_{ij} \cup \{\hash\}$, and then set the count of that typed motif in $\vx_{\hash}$ to the count derived in constant $o(1)$ time.

\medskip\noindent\textbf{Typed 4-Path Center Orbit Count (Figure~\ref{fig:4-path-center-orbit})}: 
To count the typed 4-path center orbits for a given edge $(i,j) \in E$ with types $\phi_i$ and $\phi_j$, we simply select the remaining two types denoted as $t$ and $t^{\prime}$ to obtain the 4-dimensional type vector $\vt = \big[\, \phi_i \;\, \phi_j \;\; t \;\; t^{\prime} \,\big]$ and derive the count in $o(1)$ constant time as follows:
\begin{equation} \label{eq:typed-4-path-center-orbit}
f_{ij}(g_4,\vt) = 
\begin{cases}
(|S_{i}^{t}| \cdot |S_{j}^{t}|) - f_{ij}(g_{6}, \vt) 							 										 & \text{if } t=t^{\prime} \\[5pt]
(|S_{i}^{t}| \cdot |S_{j}^{t^{\prime}}|) + (|S_{i}^{t^{\prime}}| \cdot |S_{j}^{t}|) - f_{ij}(g_{6}, \vt)	 & \text{otherwise} \\[2pt]
\end{cases}
\end{equation}\noindent
where $f_{ij}(g_{6},\vt)$ is the typed 4-cycle count for edge $(i,j) \in E$ with type vector $\vt$.

\medskip\noindent\textbf{Typed 4-Star Count (Figure~\ref{fig:4-star-node-i-is-star-center})}: 
To count the typed 4-stars for a given edge $(i,j) \in E$ with types $\phi_i$ and $\phi_j$, we simply select the remaining two types denoted as $t$ and $t^{\prime}$ to obtain the 4-dimensional type vector $\vt = \big[\, \phi_i \;\, \phi_j \;\; t \;\; t^{\prime} \,\big]$.
We derive the typed 4-star counts with the type vector $\vt$ for edge $(i,j) \in E$ in constant time as follows: 
\begin{equation} \label{eq:typed-4-star}
f_{ij}(g_5,\vt) = 
\begin{cases}
\mychoose{|S_{i}^{t}|}{2} + \mychoose{|S_{j}^{t}|}{2} - f_{ij}(g_{7}, \vt) 							 & \text{if } t=t^{\prime} \\
(|S_{i}^{t}| \cdot |S_{i}^{t^{\prime}}|) + (|S_{j}^{t}| \cdot |S_{j}^{t^{\prime}}|) - f_{ij}(g_{7}, \vt)	 & \text{otherwise} \\[2pt]
\end{cases}
\end{equation}\noindent
where $f_{ij}(g_{7},\vt)$ is the tailed-triangle tail-edge orbit count for edge $(i,j) \in E$ with type vector $\vt$.
The only path-based typed graphlet containing a triangle is the tailed-triangle tail-edge orbit shown in Figure~\ref{fig:tailed-triangle-tail-orbit-Si}. 
As an aside, Figure~\ref{fig:tailed-triangle-tail-orbit-Si} is for the tailed-triangle tail-edge orbits centered at node $i$; however, the other tailed-triangle tail-edge orbit centered at node $j$ is also computed.
Observe that this is the only orbit needed to derive the typed 4-stars in constant time as shown in Eq.~\ref{eq:typed-4-star}.

\medskip\noindent\textbf{Typed Tailed-Triangle Tri-Edge Orbit Count (Figure~\ref{fig:tailed-tri-tri-edge-orbit-Si})}: 
\begin{equation} \label{eq:typed-4-tailed-triangle-triangle-edge-orbit}
f_{ij}(g_9,\vt) = 
\begin{cases}
\big(|T_{ij}^{t}| \cdot (|S_{i}^{t}| + |S_{j}^{t}|)\big) - f_{ij}(g_{10}, \vt) 							 			 & \text{if } t=t^{\prime} \\[5pt]
\big(|T_{ij}^{t}| \cdot (|S_{i}^{t^{\prime}}| + |S_{j}^{t^{\prime}}|)\big) \; + &  \text{otherwise} \\
\big(|T_{ij}^{t^{\prime}}| \cdot (|S_{i}^{t}| + |S_{j}^{t}|)\big) - f_{ij}(g_{10}, \vt)	 & \\[2pt]
\end{cases}
\end{equation}\noindent
where $f_{ij}(g_{10},\vt)$ is the chordal-cycle edge orbit count for edge $(i,j) \in E$ with type vector $\vt$.

\medskip\noindent\textbf{Typed Chordal-Cycle Center Orbit Count (Figure~\ref{fig:chordal-cycle-center-orbit})}: 
\begin{equation} \label{eq:typed-4-chordal-cycle-center-orbit}
f_{ij}(g_{11},\vt) = 
\begin{cases}
\mychoose{|T_{ij}^{t}|}{2} - f_{ij}(g_{12}, \vt) 				 			 	& \text{if } t=t^{\prime} \\[5pt]
\big(|T_{ij}^{t}| \cdot |T_{ij}^{t^{\prime}}|\big) - f_{ij}(g_{12}, \vt)	 	& \text{otherwise} \\[2pt]
\end{cases}
\end{equation}\noindent
where $f_{ij}(g_{12},\vt)$ is the typed 4-clique count for edge $(i,j) \in E$ with type vector $\vt$.

\medskip\noindent\textbf{Discussion}:
Many other combinatorial relationships can be derived in a similar fashion.
The equations shown above are the ones required such that the worst-case time complexity matches that of the best known untyped graphlet counting algorithm (See Section~\ref{sec:complexity-analysis}).

\subsection{From Typed Orbits to Graphlets}
Counts of the \emph{typed graphlets} for each edge $(i,j) \in E$ can be derived from the \emph{typed graphlet orbits} using the following equations:
\begin{align}
& f_{ij}(h_{3}, \vt) = f_{ij}(g_{3}, \vt) + f_{ij}(g_{4}, \vt) \\
& f_{ij}(h_{4}, \vt) = f_{ij}(g_{5}, \vt) \\
& f_{ij}(h_{5}, \vt) = f_{ij}(g_{6}, \vt) \\
& f_{ij}(h_{6}, \vt) = f_{ij}(g_{7}, \vt) + f_{ij}(g_{8}, \vt) + f_{ij}(g_{9}, \vt) \\
& f_{ij}(h_{7}, \vt) = f_{ij}(g_{10}, \vt) + f_{ij}(g_{11}, \vt) \\
& f_{ij}(h_{8}, \vt) = f_{ij}(g_{12}, \vt) 
\end{align}\noindent
where $h_{}$ is the graphlet without considering the orbit (Table~\ref{table:typed-graphlet-equations}).

{
\algblockdefx[parallel]{parfor}{endpar}[1][]{$\textbf{parallel for}$ #1 $\textbf{do}$}{$\textbf{end parallel}$}
\algrenewcommand{\alglinenumber}[1]{\fontsize{6.5}{7}\selectfont#1}
\begin{figure}[h!]
\begin{center}
\begin{algorithm}[H]
\caption{\,\small
Update Typed Graphlets.
Add typed graphlet (with id $\hash$) to $\mathcal{M}_{ij}$ if $\hash \not\in \mathcal{M}_{ij}$ and increment $\vx_{\hash}$ (frequency of that typed graphlet for a given edge).
}
\label{alg:updated-typed-motifs}
\begin{spacing}{1.2}
\fontsize{8}{9}\selectfont
\begin{algorithmic}[1]
\Procedure {Update}{$\vx$, $\mathcal{M}_{ij}$, $\hash = \mathbb{F}\,(g, \Phi_i, \Phi_j, \Phi_k, \Phi_r)$}
\If{$\hash \not\in \mathcal{M}_{ij}$} \label{algline:check-if-typed-motif-already-added}
\State $\mathcal{M}_{ij} \leftarrow \mathcal{M}_{ij} \cup \{\hash\}$
\State $\vx_{\hash} = 0$
\EndIf
\State $\vx_{\hash} = \vx_{\hash} + 1$ \label{algline:update-typed-motif-frequency}

\State {\bf return} updated set of typed graphlets $\mathcal{M}_{ij}$ and their counts $\vx$
\EndProcedure
\smallskip
\end{algorithmic}
\end{spacing}
\vspace{-1.mm}
\end{algorithm}
\end{center}
\vspace{-5.mm}
\end{figure}
}

\subsection{Typed Motif Hash Functions} \label{sec:typed-motif-hash-function}
Given a general heterogeneous graph with $L$ unique types such that $L<10$, then a simple and efficient typed motif hash function $\mathbb{F}$ is defined as follows:
\begin{equation}\label{eq:simple-typed-motif-hash-function}
\mathbb{F}(g,\vt) = g10^4 + t_1 10^3 + t_2 10^2 + t_3 10^1 + t_4
\end{equation}\noindent
where $g$ encodes the $k$-node motif orbit (\eg, 4-path center)
and $t_1$, $t_2$, $t_3$, $t_4$ encode the type of the nodes in $H \in \mathcal{H}$ with type vector $\vt = \big[ t_1 \; t_2 \; t_3 \; t_4 \big]$.
Since the maximum hash value resulting from Eq.~\ref{eq:simple-typed-motif-hash-function} is small (and fixed for any arbitrarily large graph $G$), we can leverage a perfect hash table to allow for fast $o(1)$ constant time lookups to determine if a typed motif was previously found or not as well as updating the typed motif count in $o(1)$ constant time.
For $k$-node motifs where $k<4$, we simply set the last $4-k$ types to $0$.
Note the simple typed motif hash function defined above can be extended trivially to handle graphs with $L \geq 10$ types as follows:
\begin{equation} \label{eq:simple-typed-motif-hash-function-10-or-more}
\mathbb{F}(g,\vt) = g10^8 + t_1 10^6 + t_2 10^4 + t_3 10^2 + t_4
\end{equation}
In general, any non-cryptographic hash function $\mathbb{F}$ can be used (see~\cite{chi2017hashing} for some other possibilities).
Thus, the approach is independent of $\mathbb{F}$ and can always leverage the best known hash function.
The only requirement of the hash function is that it is invertible $\mathbb{F}^{-1}$.

Thus far we have not made any assumption on the ordering of types in $\vt$.
As such, the hash function $\mathbb{F}$ discussed above can be used directly in the framework for counting typed graphlets such that the type structure and position are preserved.
However, since we are interested in counting all typed graphlets $\wrt$ Definition~\ref{def:typed-graphlet-instance}, 
then we map all such orderings of the types in $\vt$ to the same hash value using a precomputed hash table.
This allows us to obtain the unique hash value in $o(1)$ constant time for any ordering of the types in $\vt$.
In our implementation, we compute $s = t_1 10^3 + t_2 10^2 + t_3 10^1 + t_4$ and then use $s$ as an index into the precomputed hash table to obtain the unique hash value $c$ in $o(1)$ constant time.

\begin{figure}[h!]
\centering
\includegraphics[width=0.6\linewidth]{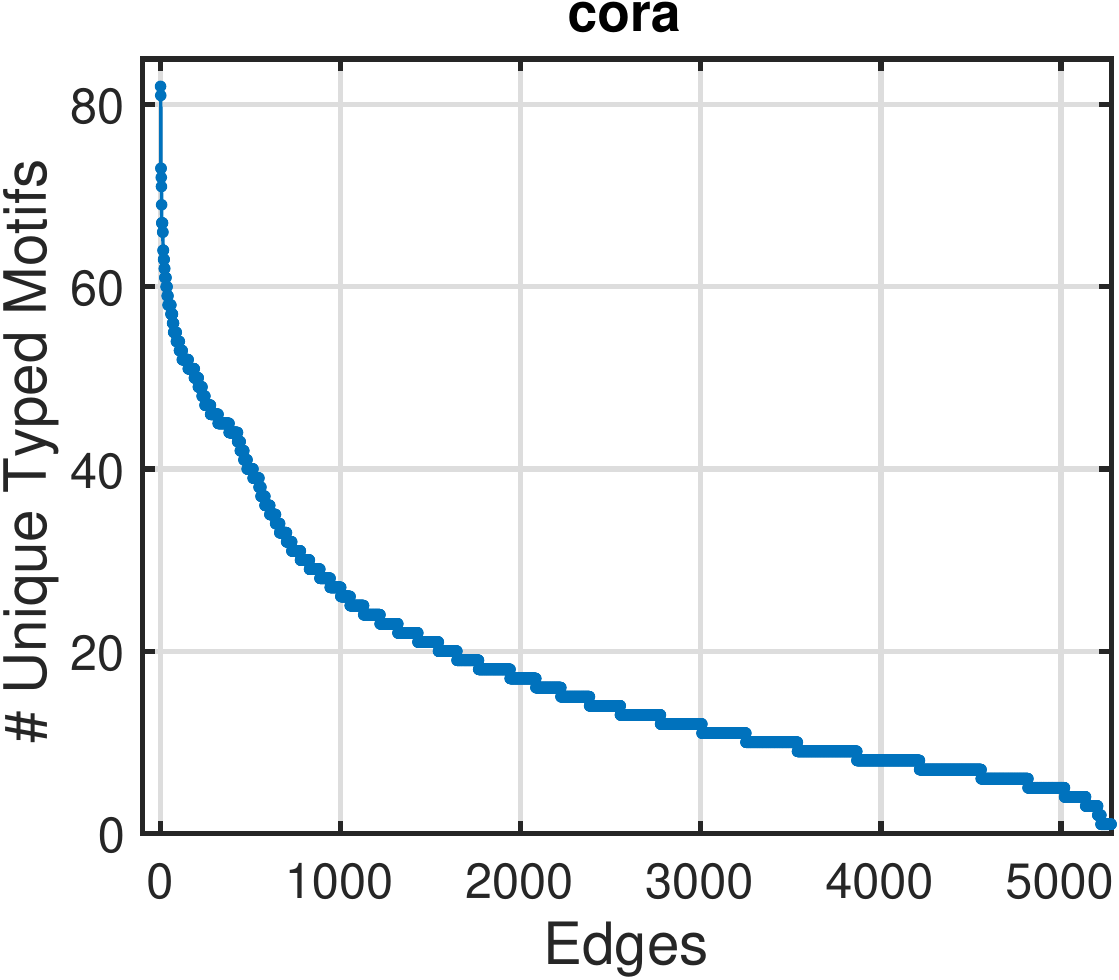}
\caption{
Distribution of unique typed motifs that occur on the edges.
This experiment considers all typed motifs of $\{3,4\}$-nodes.
Among the 1428 possible unique typed motifs that could arise in $G$, there are only 876 unique typed motifs that actually occur (at least once at an edge in $G$).
Even more striking, the maximum unique typed motifs that occur on any edge in $G$ (cora) is only 82.
Overall, the mean number of unique typed motifs over all edges in $G$ is 17,  \ie, only about 1.1\% of the possible typed motifs.
These results indicate the significance of only a few typed motifs as the vast majority of the typed motif counts for any arbitrary edge is zero.
Thus, the space required by the approach is nearly-optimal.
}
\label{fig:unique-typed-motif-edges-cora}
\end{figure}

\subsection{Sparse Typed Motif Format} \label{sec:sparse-typed-motif-format}
This section describes a space-efficient representation for typed motifs based on a key observation.
\begin{Property}\label{prop:small-frac-typed-motifs}
Let $T$ denote the number of \emph{unique typed motifs} that appear in an arbitrary graph $G$ with $L$ types.
Assuming the graph $G$ has a skewed degree distribution,
then most edges in $G$ appear in only a small fraction of the $T$ actual typed motifs that can occur.
\end{Property}\noindent
This property is shown empirically in Figure~\ref{fig:unique-typed-motif-edges-cora}.
Property~\ref{prop:small-frac-typed-motifs} implies that using a $M \times T$ matrix to store the typed graphlet counts is far from optimal in terms of the space required due to most of the $T$ typed motif counts being zero for any given edge. 
Based on this observation, we describe an approach that uses near-optimal space by storing only the typed motifs with nonzero counts for each edge $(i,j) \in E$ in the graph.
Typed motif counts are stored in a sparse format since it would be impractical in large graphs to store all typed motifs as there can easily be hundreds of thousands depending on the number of types in the input graph. 
For each edge, we store only the nonzero typed motif counts along with the unique ids associated with them.
The unique ids allow us to map the nonzero counts to the actual typed motifs.
We also developed a space-efficient format for storing the resulting typed motif counts to disk.
Instead of using the typed motif hash as the unique id, we remap the typed motifs to smaller consecutive ids (starting from $1$) to reduce the space requirement even further.
Finally, we store a typed motif lookup table that maps a given motif id to its description and is useful for looking up the meaning of the typed motifs discovered.

\section{Global Typed Graphlet Counts} \label{sec:global-typed-graphlet-counting}
While Section~\ref{sec:framework} focused on counting typed graphlets locally for each edge $(i,j) \in E$ in $G$, one may also be interested in the total counts of each typed graphlet in $G$.
This is known as the global typed graphlet counting problem.
More formally,
\begin{Definition}[Global Typed Graphlet Counting] \label{def:global-typed-graphlet-counting}
Given a graph $G$ with $L$ types, the global typed graphlet counting problem is to find the set of all typed motifs that occur in $G$ along with their corresponding frequencies.
\end{Definition}

A general equation for solving the above problem for any arbitrary \emph{typed graphlet} $H$ is given below.
Let $H$ denote an arbitrary typed graphlet and $\vx$ be an $M$-dimensional vector of counts of $H$ for every edge $(i,j) \in E$, then the frequency of $H$ in $G$ is:
\begin{equation}\label{eq:global-typed-graphlet-counts}
C \,= \, \frac{1}{|E(H)|}\; \vx^{\!\top}\ve 
\end{equation}\noindent
where $|E(H)|$ is the number of edges in the typed graphlet $H$ and $\ve = [ \, 1\; \cdots \; 1 \,]$ is an $M$-dimensional vector of all 1's.
For instance, suppose $H$ is one of the typed 4-cliques in Figure~\ref{fig:typed-motifs-3colors}, then $|E(H)|=6$.
It is straightforward to see that if we were interested in \emph{only} deriving global counts of the typed graphlets, then we can significantly speedup the proposed algorithms by enforcing constraints to avoid counting a typed graphlet numerous times.
Instead, we can add constraints to count such motifs once per edge by enforcing an ordering.

\section{Parallelization} \label{sec:parallel}
We now describe a parallelization strategy for the proposed typed motif counting approach.
While our implementation uses shared memory, the parallelization is described generally such that it can be used with a distributed-memory architecture as well.
As such, our discussion is on the general scheme.

The parallel constructs we use are a worker task-queue and a global broadcast channel. 
Here, we assume that each worker has a copy of the graph and distribute edges to workers to find the typed graphlet counts that node $i$ and $j$ participate.
At this point, we view the main while loop as a task
generator and farm the current edge out to a worker to find the typed graphlet counts that co-occur between node $i$ and node $j$.
The approach is lock free since each worker uses the same motif hash function to obtain a unique hash value for every typed motif.
Thus, each worker can simply maintain the typed motifs identified and their counts for every edge assigned to it.
In our own shared memory implementation, we avoid some of the communications by using global arrays and avoiding locked updates to them by using a unique edge id.
Counting typed graphlets on the edges as opposed to the nodes also has computational advantages with respect to parallelization and in particular load balancing.
Let $x_i$ and $x_{ij}$ denote the node and edge count of an arbitrary graphlet $H$.
Since $|E| \gg |V|$ and $\sum_{i \in V} x_{i} = \sum_{(i,j) \in E} x_{ij}$, then $\frac{1}{|V|}\sum_{i \in V} x_{i} < \frac{1}{|E|}\sum_{(i,j) \in E} x_{ij}$.
Hence, more work per vertex is required than per edge.
Therefore, counting typed graphlets on the edges is guaranteed to have better load balancing than node-centric algorithms.

\section{Theoretical Analysis} \label{sec:complexity-analysis}
We first show that heterogeneous graphlets contain more information than untyped graphlets.
Next, Section~\ref{sec:time-complexity} provides the worst-case time complexity of the proposed approach whereas Section~\ref{sec:space-complexity} gives the space complexity.
See Table~\ref{table:notation} for a summary of notation.

First, we show the relationship between the count of an untyped motif $H$ in $G$ and the count of all typed motifs in $G$ with induced subgraph $H$.
\begin{Proposition} \label{prop:untyped-graphlet-equal-sum-of-all-typed-graphlets}
Let $\vx$ denote the vector of counts for any untyped network motif $H \in \mathcal{H}$ (\eg, 4-cycle).
Further, let $\mX$ denote a $M \times T_{H}$ matrix of typed motif counts for motif $H$ where $T_{H}$ denotes the number of typed network motifs that arise from $L$ types.
Then the following holds:
\begin{equation}
C = \sum_{i=1}^{M} x_i = \sum_{i=1}^{M} \sum_{j=1}^{T_{H}} X_{ij}
\end{equation}
\end{Proposition}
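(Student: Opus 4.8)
The plan is to prove the identity edge-by-edge, reducing the global statement to a simple partition argument at each edge and then summing. The conceptual content is that the $T_H$ typed motifs arising from $H$ \emph{refine} the untyped motif $H$: every instance of $H$ carries a type configuration that places it in exactly one typed class, so the typed counts partition the untyped count at each edge.

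First I would fix an arbitrary edge $e \in E$ and let $x_e$ denote the number of instances of the untyped motif $H$ that contain $e$ (the entry of $\vx$ corresponding to $e$). By Definition~\ref{def:graphlet-instance}, each such instance is an induced subgraph $F$ of $G$ isomorphic to $H$ and containing $e$. I would then apply the type mapping functions $\phi$ and $\xi$ to $F$ to obtain a typed graphlet in the sense of Definition~\ref{def:typed-graphlet}; reading off the multisets of node and edge types (Definition~\ref{def:typed-graphlet-instance}) assigns $F$ to exactly one of the $T_H$ typed motifs derived from $H$. This defines a map $\tau$ from the set of untyped instances of $H$ at $e$ into the index set $\{1, \ldots, T_H\}$.

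Next I would argue that $\tau$ is well-defined and that its fibers are exactly the typed instance sets. Well-definedness follows because the type configuration of $F$ is determined by $\phi$ and $\xi$, and two typed graphlets are instances of the same typed motif precisely when their node- and edge-type multisets agree under an isomorphism (Definition~\ref{def:typed-graphlet-instance}); hence each $F$ lands in a single class. The fiber $\tau^{-1}(j)$ is then, by construction, the set of instances of the $j$-th typed motif that contain $e$, whose cardinality is exactly $X_{ej}$. Because the $T_H$ typed classes are mutually exclusive and exhaustive over all realizable type configurations of $H$, the fibers partition the full set of untyped instances at $e$, giving $x_e = \sum_{j=1}^{T_H} X_{ej}$ for every edge $e$.

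Finally, summing this per-edge identity over all $M$ edges and interchanging the (finite) sums yields $\sum_{i=1}^{M} x_i = \sum_{i=1}^{M}\sum_{j=1}^{T_H} X_{ij}$, and setting $C$ equal to this common value completes the argument. The main obstacle is the middle step: one must justify carefully that the $T_H$ typed motifs form a genuine partition of the type configurations of $H$ (both mutual exclusivity and exhaustiveness), so that no untyped instance is missed or double-counted. This is precisely where Definitions~\ref{def:typed-graphlet} and~\ref{def:typed-graphlet-instance}, together with the interpretation of $T_H$ as the total number of typed versions of $H$, do the real work; everything else is bookkeeping and a single interchange of finite summations.
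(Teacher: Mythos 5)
Your proof is correct and follows essentially the same route as the paper, which justifies the proposition with exactly this observation --- that the untyped counts at each edge are partitioned among the typed graphlets arising from $H$ --- stated in one sentence rather than formalized. Your write-up simply makes the paper's partition argument explicit (the map $\tau$, its well-definedness via Definitions~\ref{def:typed-graphlet} and~\ref{def:typed-graphlet-instance}, and the final summation over edges), with no substantive difference in approach.
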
\noindent
Consider the counts of an untyped graphlet for a single edge.
The above demonstrates that these counts are partitioned among the set of typed graphlets that arise from the untyped graphlet when types are considered.

We now show that typed network motifs contain more information than untyped motifs.
Entropy is a measure of the \emph{average information content} of a state~\cite{kullback1997information}.
Let $\vp \in \RR^{T_{H}}$ denote a typed motif probability distribution ($\vp^T \ve = 1$), the entropy 
(average information content) of $\vp$ is
$\mathbb{H}(\vp) = -\sum_{i} \; p_i \log p_i$.
Hence, $\mathbb{H}(\vp)$ quantifies the amount of information in the relative frequencies of the typed network motifs (of a given motif $H \in \mathcal{H}$).
In the case of untyped motifs, the $C = \sum_{i=1}^{M} x_i$ untyped motifs are assumed to belong to a single homogeneous motif where all nodes are of the same type.
This matches exactly the information we have if types are not considered.
\begin{Proposition} \label{prop:info-gain}
Assume $\vp \in \RR^{T_{H}}$ is an arbitrary typed network motif probability distribution such that $p_i<1$, $\forall i$ 
and $\vq$ is the untyped motif distribution where $q_i=1$ and $q_j=0, \forall j\not=i$, then $\mathbb{H}(\vp) > \mathbb{H}(\vq)$.
\end{Proposition}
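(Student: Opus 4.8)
The plan is to reduce the claim to the elementary fact that Shannon entropy is strictly positive for any non-degenerate distribution and vanishes for a point mass. First I would evaluate $\mathbb{H}(\vq)$ directly. Since $\vq$ places all of its mass on a single outcome, with $q_i = 1$ and $q_j = 0$ for every $j \neq i$, the defining sum collapses: the single nonzero coordinate contributes $-q_i \log q_i = -1 \cdot \log 1 = 0$, while every other coordinate contributes $-0 \cdot \log 0$, which is taken to be $0$ under the standard convention $0 \log 0 = 0$ (justified by $\lim_{x \to 0^+} x \log x = 0$). Hence $\mathbb{H}(\vq) = 0$, and the inequality to be proved reduces to showing $\mathbb{H}(\vp) > 0$.

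Next I would establish $\mathbb{H}(\vp) > 0$. The key observation is that each summand $-p_i \log p_i$ is nonnegative on $[0,1]$, because $\log p_i \le 0$ there, and that this summand equals $0$ exactly when $p_i \in \{0,1\}$. Therefore $\mathbb{H}(\vp) \ge 0$ always, with equality possible only if every coordinate of $\vp$ is either $0$ or $1$, i.e.\ only if $\vp$ is itself a point mass. The hypotheses preclude this: since $\vp$ is a probability distribution ($\vp^{\!\top}\ve = 1$) whose every entry is strictly below $1$, at least one coordinate must lie strictly in $(0,1)$. Indeed, if all coordinates were $0$ the total mass would be $0 \neq 1$, so some coordinate is positive, and by assumption that coordinate is also less than $1$. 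That coordinate contributes a strictly positive term $-p_i \log p_i > 0$, and since all remaining terms are nonnegative we conclude $\mathbb{H}(\vp) > 0 = \mathbb{H}(\vq)$, as claimed.

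I do not expect a genuine obstacle here, as the statement is essentially the strict positivity of entropy for non-degenerate distributions. The only point requiring care is the boundary behavior: one must invoke the convention $0 \log 0 = 0$ both when evaluating $\mathbb{H}(\vq)$ and when arguing that the zero-probability coordinates of $\vp$ do not subtract from the strictly positive contribution of its interior coordinate. A fully rigorous write-up would also note that the argument never uses the fact that $\vp$ arises as an empirical typed-motif distribution; it holds for any probability vector all of whose entries are strictly below $1$. Combined with Proposition~\ref{prop:untyped-graphlet-equal-sum-of-all-typed-graphlets}, this is precisely the regime in which a single untyped count is partitioned across two or more distinct typed graphlets, which is what makes the strict information gain meaningful.
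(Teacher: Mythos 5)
Your proof is correct and matches the paper's intent: the paper states only that ``the proof is straightforward,'' and the standard argument it alludes to is exactly yours --- $\mathbb{H}(\vq)=0$ for the point mass (under the convention $0\log 0=0$) and $\mathbb{H}(\vp)>0$ since $p_i<1$ for all $i$ together with $\vp^{\!\top}\ve=1$ forces some coordinate strictly inside $(0,1)$, contributing a strictly positive term. Your careful handling of the boundary convention and of why a coordinate must lie in $(0,1)$ supplies precisely the details the paper omits.
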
\noindent
This implies the typed motif distribution $\vp$ contains more information than the untyped motif distribution $\vq$.
The proof is straightforward.

\subsection{Time Complexity} \label{sec:time-complexity}
\begin{Property} \label{prop:1}
\begin{equation}
d_i + d_j = 2|T_{ij}| + |S_i| + |S_{j}|
\end{equation}\noindent
where $d_i = |\Gamma_i|$,\; $d_j = |\Gamma_j|$,\; $T_{ij} = \Gamma_i \cap \Gamma_j$,\; $S_i = \Gamma_i \setminus T_{ij}$,\; and $S_j = \Gamma_j \setminus T_{ij}$.
\end{Property}

\begin{Property} \label{prop:2}
The space required to store $T_{ij}$, $S_{i}$, and $S_{j}$ is less than $d_i+d_j$ iff $|T_{ij}|>0$.
\end{Property}
This is straightforward to see since $|S_{i}|+|S_{j}|=|S_{i} \cup S_{j}|$ always holds. 
However, if $|T_{ij}|=0$, then $|S_{i}|+|S_{j}|=d_i+d_j$.
Hence, triangles represent the smallest clique, and as shown in~\cite{rossi2018compressing-graphs-cliques} can be used to compress the graph.
As the density of the graph increases, more triangles are formed, and therefore less space is used.
Notice that the worst case is also unlikely to occur because of this fact.
For instance, suppose $d_i = \Delta$, $d_j = \Delta$, and $\Delta = n$ (worst case), then $|T_{ij}|=d_i=d_j$, and $|S_{i}|=0$, $|S_{j}|=0$.
Furthermore, if $|S_{i}|=n$, then $|S_{j}|=0$ and $|T_{ij}|=0$ must hold.
Obviously, if node $i$ is connected to all $n$ nodes, then any node $k \in \Gamma_{j}$ must form a triangle with $i$ ($k \in T_{ij}$).

\subsubsection{Typed 3-Node Graphlets}
\begin{thm}\label{lem:time-complexity-3-node-graphlets}
The overall worst-case time complexity for counting all 3-node typed graphlets for a given edge $(i,j) \in E$ is:
\begin{equation}
\mathcal{O}(2|\Gamma_i| + |\Gamma_j|) = \mathcal{O}(\Delta)
\end{equation}\noindent
where $|\Gamma_i|$ and $|\Gamma_j|$ denote the number of nodes connected to node $i$ and $j$, respectively.
Further, $\Delta$ is the maximum degree in $G$.
\end{thm}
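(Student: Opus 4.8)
The plan is to bound the running time of the three loops in Algorithm~\ref{alg:typed-motifs-exact} that together compute all $3$-node typed graphlets incident to the edge $(i,j)$, and then to argue that each iteration of these loops costs only $O(1)$ worst-case time, so that the total is proportional to the number of iterations.

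First, I would isolate the relevant code. The $3$-node typed graphlets incident to $(i,j)$ are exactly the typed triangles (the sets $T_{ij}^{t}$), the typed $3$-stars centered at $j$ (the sets $S_{j}^{t}$), and the typed $3$-stars centered at $i$ (the sets $S_{i}^{t}$). These are produced by three loops: the marking loop over $k \in \Gamma_i$ beginning at Line~\ref{algline:graphlet-create-hash}, the loop over $k \in \Gamma_j$ that distinguishes triangles from $3$-paths centered at $j$ beginning at Line~\ref{algline:triangles-and-wedges}, and the loop over $k \in \Gamma_i \setminus T_{ij}$ that records the $3$-paths centered at $i$ beginning at Line~\ref{algline:S_i}. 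Obtaining these sets (equivalently, their cardinalities $|T_{ij}^{t}|$, $|S_i^{t}|$, $|S_j^{t}|$) is precisely computing all $3$-node typed motif counts.

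Second, I would establish the per-iteration cost. The crucial ingredient is the hash table $\Psi$, which records for every node its ``relationship'' to $i$ and $j$, namely one of $\lambda_1$, $\lambda_2$, $\lambda_3$, or $0$. Since $\Psi$ is a direct-addressed structure indexed by node id, each membership query and each update takes $O(1)$ worst-case time. Likewise, every call to \textsc{Update} (Algorithm~\ref{alg:updated-typed-motifs}) runs in $O(1)$ time: the typed-motif hash $\mathbb{F}$ (Eq.~\ref{eq:simple-typed-motif-hash-function}) is computed in $O(1)$, and the resulting id supports $O(1)$ lookups and increments via the perfect hash table described in Section~\ref{sec:typed-motif-hash-function}. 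Hence each loop iteration performs only a constant number of constant-time operations.

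Third, I would sum the three contributions. The marking loop runs $|\Gamma_i|$ times; the loop over $\Gamma_j$ runs $|\Gamma_j|$ times; and since $|\Gamma_i \setminus T_{ij}| = |S_i| \leq |\Gamma_i|$, the final loop runs at most $|\Gamma_i|$ times. Multiplying each count by the $O(1)$ per-iteration cost and adding gives $\mathcal{O}(2|\Gamma_i| + |\Gamma_j|)$. Bounding $|\Gamma_i| \leq \Delta$ and $|\Gamma_j| \leq \Delta$ then yields $\mathcal{O}(2|\Gamma_i| + |\Gamma_j|) = \mathcal{O}(\Delta)$, as claimed. The main obstacle is the second step: rigorously justifying the $O(1)$ worst-case cost per iteration. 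One must verify both that $\Psi$ never requires more than constant time per access and, more delicately, that the typed-motif bookkeeping through $\mathbb{F}$ and the perfect hash table is genuinely constant time rather than growing with the number of types $L$ or with the number of distinct typed motifs encountered. Once the $O(1)$ per-iteration guarantee is secured, the summation and the final bound are immediate.
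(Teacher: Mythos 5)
Your proposal is correct and takes essentially the same route as the paper's own proof, which likewise charges $\mathcal{O}(|\Gamma_i|)$ to the marking loop at Line~\ref{algline:graphlet-create-hash} and the $S_i$ loop at Lines~\ref{algline:S_i}--\ref{algline:typed-3-path-centered-i-motif-hash}, charges $\mathcal{O}(|\Gamma_j|)$ to Lines~\ref{algline:triangles-and-wedges}--\ref{algline:typed-3-path-centered-j-motif-hash}, and then bounds both degrees by $\Delta$. Your explicit justification of the $O(1)$ per-iteration cost---constant-time accesses to $\Psi$ and constant-time \textsc{Update} calls via the hash function $\mathbb{F}$ and the perfect hash table of Section~\ref{sec:typed-motif-hash-function}---is a detail the paper leaves implicit, but it does not change the argument.
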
 

\begin{proof}
It takes at most $\mathcal{O}(|\Gamma_i|)$ time for Line~\ref{algline:graphlet-create-hash} and Line~\ref{algline:S_i}-\ref{algline:typed-3-path-centered-i-motif-hash} whereas the time complexity for Line~\ref{algline:triangles-and-wedges}-\ref{algline:typed-3-path-centered-j-motif-hash} is $\mathcal{O}(|\Gamma_j|)$.
Obviously, since $\Gamma_i$ is enumerated twice, then we can always ensure $i$ is the node with smallest degree (\ie, $d_i \leq d_j$).
The worst-case arises when $|\Gamma_i|$ and $|\Gamma_j|$ is the maximum degree of a node in $G$ denoted as $\Delta$.
\end{proof}

\subsubsection{Typed 4-Node Graphlets}
We first provide the time complexity of deriving path-based and triangle-based graphlet orbits in Lemma~\ref{lem:time-complexity-path-based}-\ref{lem:time-complexity-triangle-based}, and then give the total time complexity of all 3 and 4-node typed graphlets in Theorem~\ref{thm:time-complexity-4-node-graphlets} based on these results.
Note that Lemma~\ref{lem:time-complexity-path-based}-\ref{lem:time-complexity-triangle-based} includes the time required to derive all typed 3-node typed graphlets.

\begin{lemma} \label{lem:time-complexity-path-based}
For a single edge $(i,j) \in E$, the worst-case time complexity for deriving all \emph{typed path-based graphlet orbits} is:
\begin{align} \label{eq:time-complexity-path-based}
\mathcal{O}\Big(\Delta \big(|S_i| + |S_j| \big) \Big) 
\end{align}\noindent
\end{lemma}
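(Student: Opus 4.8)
The plan is to read the cost directly off the control structure of Algorithm~\ref{alg:typed-path-based-motifs-exact} and then bound a double sum. First I would observe that the algorithm is organized as two outer loops: one ranging over $w_k \in S_i$ (Line~\ref{algline:Si-three-paths-centered-at-i}--\ref{algline:Si-three-paths-centered-at-i-endfor}) and one ranging over $w_k \in S_j$ (Line~\ref{algline:Sj-three-paths-centered-at-j}--\ref{algline:Sj-three-paths-centered-at-j-endfor}). For each such $w_k$, an inner loop iterates over the neighbors $w_r \in \Gamma_{w_k}$, and for each pair $(w_k,w_r)$ the algorithm performs a fixed sequence of tests (whether $w_r \in \Gamma_i \cup \Gamma_j$, whether $w_r \in S_i$ or $w_r \in S_j$, and the symmetry-breaking comparison $w_r \leq w_k$) followed by at most one call to \textsc{Update}.

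The key step is to argue that each inner iteration costs $\mathcal{O}(1)$. This rests on two facts already in place. The relationship tests are resolved in constant time by a single lookup into the hash table $\Psi$, which Algorithm~\ref{alg:typed-motifs-exact} populates so that $\Psi(w_r)$ encodes whether $w_r$ is adjacent to $i$ or $j$ and in which role (the labels $\lambda_1,\lambda_2,\lambda_3$); thus deciding $w_r \in \Gamma_i \cup \Gamma_j$, $w_r \in S_i$, or $w_r \in S_j$ reduces to reading $\Psi(w_r)$. The \textsc{Update} call is likewise $\mathcal{O}(1)$ because the typed-motif hash $\mathbb{F}$ maps into a small fixed range backed by a perfect hash table, so testing membership in $\mathcal{M}_{ij}$ and incrementing the corresponding entry of $\vx$ both take constant time.

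With the per-iteration cost pinned at $\mathcal{O}(1)$, the total cost equals the total number of inner iterations, namely $\sum_{w_k \in S_i} |\Gamma_{w_k}| + \sum_{w_k \in S_j} |\Gamma_{w_k}|$. Since $S_i$ and $S_j$ are disjoint and each $|\Gamma_{w_k}|$ is bounded by the maximum degree $\Delta$, I would write $\sum_{w_k \in S_i}|\Gamma_{w_k}| \leq \Delta\,|S_i|$ and $\sum_{w_k \in S_j}|\Gamma_{w_k}| \leq \Delta\,|S_j|$, whence the total is $\mathcal{O}\big(\Delta(|S_i| + |S_j|)\big)$. Finally, since the $\mathcal{O}(\Delta)$ cost of deriving the 3-node typed graphlets (Theorem~\ref{lem:time-complexity-3-node-graphlets}) is dominated by this bound whenever $|S_i| + |S_j| \geq 1$, the stated complexity also accounts for that preprocessing.

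The main obstacle is the constant-time per-iteration claim: the whole bound collapses unless the membership and relationship queries together with the \textsc{Update} bookkeeping are genuinely $\mathcal{O}(1)$, so the crux is invoking the $\Psi$ table and the perfect-hash property of $\mathbb{F}$ correctly. Everything after that is the routine double-sum estimate above.
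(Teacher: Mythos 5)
Your proof is correct and follows essentially the same route as the paper: the paper's note after the lemma ($|S_i|\Delta \geq \sum_{k \in S_i} d_k$ and $|S_j|\Delta \geq \sum_{k \in S_j} d_k$) and its proof of Theorem~2 rest on exactly your argument --- constant-time per-neighbor work via the $\Psi$ lookup table and the $o(1)$ \textsc{Update} through the perfect-hash function $\mathbb{F}$, followed by bounding the two degree sums over $S_i$ and $S_j$ by $\Delta(|S_i|+|S_j|)$. Your closing remark about absorbing the $\mathcal{O}(\Delta)$ cost of the 3-node computation (which the paper says these lemmas include) is, if anything, slightly more careful than the paper's own treatment, since you flag the condition $|S_i|+|S_j| \geq 1$ under which the absorption is valid.
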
\noindent
Note $|S_{i}|\Delta \geq \sum_{k \in S_{i}} d_k$ and $|S_{j}|\Delta \geq \sum_{k \in S_{j}} d_k$.

\begin{lemma} \label{lem:time-complexity-triangle-based}
For a single edge $(i,j) \in E$, the worst-case time complexity for deriving all \emph{typed triangle-based graphlet orbits} is:
\begin{equation} \label{eq:time-complexity-triangle-based}
\mathcal{O}\big(\Delta |T_{ij}| \big) 
\end{equation}\noindent
\end{lemma}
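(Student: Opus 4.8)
The plan is to bound the running time by directly counting the operations performed in Algorithm~\ref{alg:typed-triangle-based-motifs-exact}, exploiting the fact that each operation in the loop body costs only $o(1)$ constant time. The algorithm has a simple doubly-nested loop structure: the outer loop ranges over $w_k \in T_{ij}$, and for each such $w_k$ the inner loop ranges over $w_r \in \Gamma_{w_k}$. Inside the inner loop, the algorithm performs a constant number of membership tests and at most one call to the \textsc{Update} procedure, so the entire cost is governed by the number of inner-loop iterations.

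First I would argue that the body of the inner loop runs in $o(1)$ constant time. Each branch tests the ``relationship'' of $w_r$ to the edge $(i,j)$ --- namely whether $w_r \in T_{ij}$ (value $\lambda_3$), $w_r \in S_i \cup S_j$ (values $\lambda_1, \lambda_2$), or $w_r \in I$ (value $0$) --- and each such test is a single lookup into the hash table $\Psi$, which was populated in the main routine (Algorithm~\ref{alg:typed-motifs-exact}) before this subroutine is invoked. The symmetry-breaking check $w_r \leq w_k$ and the exclusion of $i, j$ are likewise constant time. Finally, the \textsc{Update} call (Algorithm~\ref{alg:updated-typed-motifs}) performs one invertible hash evaluation $\mathbb{F}$ and a perfect-hash-table insertion/increment, both $o(1)$ by the construction in Section~\ref{sec:typed-motif-hash-function}. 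Hence each inner iteration is $\mathcal{O}(1)$.

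Next I would sum over all iterations. The total work is proportional to $\sum_{w_k \in T_{ij}} |\Gamma_{w_k}| = \sum_{w_k \in T_{ij}} d_{w_k}$, since the inner loop enumerates exactly the neighbors of $w_k$. Bounding each degree by the maximum degree $\Delta$ gives $\sum_{w_k \in T_{ij}} d_{w_k} \leq |T_{ij}|\,\Delta$, which yields the claimed $\mathcal{O}(\Delta |T_{ij}|)$ bound. This mirrors the counting used for Lemma~\ref{lem:time-complexity-path-based}, where $|S_i|\Delta \geq \sum_{k \in S_i} d_k$ and $|S_j|\Delta \geq \sum_{k \in S_j} d_k$.

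I expect essentially no conceptual obstacle here; the argument is a routine loop-counting bound. The only point requiring care is the verification that every branch of the inner loop --- including the identification of which $4$-node orbit (typed $4$-clique $g_{12}$, chordal-cycle-edge $g_{10}$, or tailed-triangle-center $g_8$) the pair $(w_k, w_r)$ induces --- is resolved in constant time via the precomputed $\Psi$ table rather than by any search over $T_{ij}$, $S_i$, $S_j$, or $I$. Once that constant-time claim is established, the degree-sum bound is immediate.
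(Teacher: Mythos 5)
Your proposal is correct and matches the paper's own reasoning: the paper justifies the lemma with precisely the degree-sum bound $\sum_{k \in T_{ij}} d_k \leq |T_{ij}|\Delta_T \leq |T_{ij}|\Delta$, together with the (elsewhere established) fact that each inner-loop step of Algorithm~\ref{alg:typed-triangle-based-motifs-exact} costs constant time via the $\Psi$ lookup and the $o(1)$ \textsc{Update} of Algorithm~\ref{alg:updated-typed-motifs}. Your write-up simply makes these two ingredients explicit, so it is essentially the same argument.
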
\noindent
Notice $|T_{ij}|\Delta \geq |T_{ij}|\Delta_T \geq \sum_{k \in T_{ij}} d_k$ where $\Delta$ is the maximum degree of a node in $G$ and $\Delta_T$ is the maximum degree of a node in $T_{ij}$.
Thus, $|T_{ij}|\Delta$ only occurs iff $\forall k \in T_{ij}$, $d_k = \Delta$ where $\Delta = $ maximum degree of a node in $G$.
In sparse real-world graphs, $T_{ij}$ is likely to be smaller than $S_i$ and $S_j$ as triangles are typically more rare than 3-node paths.
Conversely, $T_{ij}$ is also more likely to contain high degree nodes, as nodes with larger degrees are obviously more likely to form triangles than those with small degrees.

From Lemma~\ref{lem:time-complexity-path-based}-\ref{lem:time-complexity-triangle-based}, we have the following:
\begin{thm} \label{thm:time-complexity-4-node-graphlets}
For a single edge $(i,j) \in E$, the worst-case time complexity for deriving all 3 and 4-node typed graphlet orbits is:
\begin{equation} \label{eq:time-complexity-fast-alg-overall}
\mathcal{O}\big(\Delta \big(|S_i| + |S_j| + |T_{ij}|\big)\big) 
\end{equation}\noindent
\end{thm}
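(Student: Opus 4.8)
The plan is to add up the costs of the three computational stages that Algorithm~\ref{alg:typed-motifs-exact} performs for a fixed edge $(i,j)$, and then argue that the lower-order terms are absorbed into the dominant one. First I would recall that the enumeration of typed $4$-node orbits is split into the path-based stage (Algorithm~\ref{alg:typed-path-based-motifs-exact}) and the triangle-based stage (Algorithm~\ref{alg:typed-triangle-based-motifs-exact}), whose worst-case costs are $\mathcal{O}(\Delta(|S_i|+|S_j|))$ and $\mathcal{O}(\Delta|T_{ij}|)$ by Lemma~\ref{lem:time-complexity-path-based} and Lemma~\ref{lem:time-complexity-triangle-based}, respectively. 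Since both lemmas already account for the $\mathcal{O}(\Delta)$ cost of the typed $3$-node stage (Theorem~\ref{lem:time-complexity-3-node-graphlets}), adding the two bounds immediately gives $\mathcal{O}(\Delta(|S_i|+|S_j|)) + \mathcal{O}(\Delta|T_{ij}|) = \mathcal{O}(\Delta(|S_i|+|S_j|+|T_{ij}|))$, which is the claimed bound.

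The remaining stage to account for is the type-pair loop (Line~\ref{algline:main-alg-for-type-pair}--\ref{algline:main-alg-derive-remaining-typed-graphlet-orbits-constant-time}), which evaluates the combinatorial identities in Eq.~\ref{eq:typed-4-path-center-orbit}--\ref{eq:typed-4-chordal-cycle-center-orbit}. Each identity is an $o(1)$ lookup into the already-computed cardinalities $|S_i^t|$, $|S_j^t|$, $|T_{ij}^t|$ together with a few $k$-node counts, so the cost of this stage is proportional to the number of pairs $(t,t')$ with $t\le t'$ that it actually processes. I would bound this number not by the naive worst case $\frac{L(L-1)}{2}+L$, but by the number $\ell$ of distinct types present among $S_i\cup S_j\cup T_{ij}$: only pairs of present types can contribute a nonzero count, and these are the only pairs the loop needs to visit, since the present types are recorded for free while the cardinalities are tallied in the earlier loops of Algorithm~\ref{alg:typed-motifs-exact}. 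Clearly $\ell \le |S_i|+|S_j|+|T_{ij}|$, as each node carries a single type.

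To finish, I would invoke Property~\ref{prop:1}, namely $d_i+d_j = 2|T_{ij}|+|S_i|+|S_j|$, to conclude that $|S_i|+|S_j|+|T_{ij}| \le d_i+d_j \le 2\Delta$. Writing $n = |S_i|+|S_j|+|T_{ij}|$, the type-pair stage costs $\mathcal{O}(\ell^2) \le \mathcal{O}(\ell n) \le \mathcal{O}(\Delta n)$, which is dominated by the enumeration cost derived above. Combining all stages then yields the claimed $\mathcal{O}(\Delta(|S_i|+|S_j|+|T_{ij}|))$.

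The main obstacle I anticipate is precisely the control of the type-pair loop: a literal reading of Line~\ref{algline:main-alg-for-type-pair} scans all $\binom{L}{2}+L = \Theta(L^2)$ type pairs, which would introduce a stray additive $\Theta(L^2)$ term absent from the statement. The crux is therefore to justify that the loop may be restricted to the $\ell$ types appearing in the neighborhoods of $i$ and $j$ without altering its output, and then to bound $\ell^2$ by $\Delta n$ through Property~\ref{prop:1}. If one instead treats $L$ as a fixed constant (as in the $L<10$ regime noted for the hash function), this term is trivially $\mathcal{O}(1)$ and the obstacle dissolves; I would remark on both routes.
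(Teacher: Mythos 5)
Your proposal is correct, and its skeleton coincides with the paper's proof: both sum the path-based bound $\mathcal{O}(\Delta(|S_i|+|S_j|))$ of Lemma~\ref{lem:time-complexity-path-based} and the triangle-based bound $\mathcal{O}(\Delta|T_{ij}|)$ of Lemma~\ref{lem:time-complexity-triangle-based}, noting that these already subsume the $\mathcal{O}(\Delta)$ cost of the 3-node stage from Theorem~\ref{lem:time-complexity-3-node-graphlets}, plus the hashing of $\Gamma_i$. Where you genuinely diverge is in the treatment of the type-pair loop at Line~\ref{algline:main-alg-for-type-pair}: the paper acknowledges the $\frac{L(L-1)}{2}+L$ iterations and then simply declares them negligible on the grounds that $L$ is a small constant, so as stated its bound silently carries a hidden additive $\Theta(L^2)$ term when $L$ is not $O(1)$. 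You instead restrict the loop to the $\ell$ types actually present in $S_i\cup S_j\cup T_{ij}$ — which is output-preserving, since every term in Eq.~\ref{eq:typed-4-path-center-orbit}--\ref{eq:typed-4-chordal-cycle-center-orbit} (including the subtracted orbit counts $f_{ij}(g_6,\vt)$, $f_{ij}(g_7,\vt)$, $f_{ij}(g_{10},\vt)$, $f_{ij}(g_{12},\vt)$) vanishes whenever either type is absent from those sets, and only nonzero counts are stored — and then chain $\ell \le |S_i|+|S_j|+|T_{ij}| \le d_i+d_j \le 2\Delta$ via Property~\ref{prop:1} to absorb the $\mathcal{O}(\ell^2)$ cost into $\mathcal{O}(\Delta(|S_i|+|S_j|+|T_{ij}|))$. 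The paper's route is shorter and matches how the algorithm is literally written; yours buys an unconditional bound in $L$, at the price of a (sound, but unstated in the paper) refinement of the loop to present types, whose bookkeeping you correctly note can be done for free while the cardinalities $|S_i^t|$, $|S_j^t|$, $|T_{ij}^t|$ are tallied. Your closing remark that treating $L$ as a fixed constant dissolves the issue is exactly the paper's position, so you have in effect recovered the paper's proof and strengthened its one loose step.
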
\noindent

\begin{proof}
The time complexity of each step is provided below.
Hashing all neighbors of node $i$ takes $\mathcal{O}(|\Gamma_i|)$.
Recall from Lemma~\ref{lem:time-complexity-3-node-graphlets} that counting all 3-node typed graphlets takes $\mathcal{O}(2|\Gamma_i| + |\Gamma_j|) = \mathcal{O}(\Delta)$ time for an edge $(i,j) \in E$.
This includes the time required to derive the number of typed 3-node stars and typed triangles for all types $t=1,\ldots,L$.
This information is needed to derive the remaining typed graphlet orbit counts in constant time.
Next, Algorithm~\ref{alg:typed-path-based-motifs-exact} is used to derive a few path-based typed graphlet orbit counts taking $\mathcal{O}(\Delta (|S_i| + |S_j|))$ time in the worst-case.
Similarly, Algorithm~\ref{alg:typed-triangle-based-motifs-exact} is used to derive a few triangle-based typed graphlet orbit counts taking in the worst-case $\mathcal{O}(\Delta |T_{ij}|)$ time.
As an aside, updating the count of a typed graphlet count is $o(1)$ (Algorithm~\ref{alg:updated-typed-motifs}).

Now, we derive the remaining typed graphlet orbit counts in constant time (Line~\ref{algline:main-alg-for-type-pair}-\ref{algline:main-alg-derive-remaining-typed-graphlet-orbits-constant-time}).
Since each type pair leads to different typed graphlets, we must iterate over at most $L(L-1)/2+L$ type pairs.
For each pair of types selected, we derive the typed graphlet orbit counts in $o(1)$ constant time via Eq.~\ref{eq:typed-4-path-center-orbit}-\ref{eq:typed-4-chordal-cycle-center-orbit} (See Line~\ref{algline:main-alg-for-type-pair}-\ref{algline:main-alg-derive-remaining-typed-graphlet-orbits-constant-time}).
Furthermore, the term involving $L$ is for the worst-case when there is at least one node in all $L$ sets (\ie, at least one node of every type $L$).
Nevertheless, since $L$ is a small constant, $L(L-1)/2+L$ is negligible.

For a single edge, the worst-case time complexity is $\mathcal{O}(\Delta(|S_i|+|S_j|+|T_{ij}|))$.
Let $\bar{T}$ and $\bar{S}$ denote the average number of triangle and 3-node stars incident to an edge in $G$.
More formally, $\bar{T} = \frac{1}{M}\sum_{(ij) \in E} |T_{ij}|$ and $\bar{S} = \frac{1}{M} \sum_{(ij) \in E} |S_i|+|S_j|$.
The total worst-case time complexity for all $M$ edges is $\mathcal{O}(M\Delta(\bar{S}+\bar{T}))$.
Note that obviously $\bar{S}M = \sum_{(ij) \in E} |S_i|+|S_j|$ and $\bar{T}M = \sum_{(ij) \in E} |T_{ij}|$.
\end{proof}

\begin{cor}\label{lem:time-complexity}
The worst-case time complexity of counting typed graphlets using Algorithm~\ref{alg:typed-motifs-exact} matches the worst-case time complexity of the best known untyped graphlet counting algorithm.
\end{cor}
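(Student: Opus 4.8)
The plan is to compare the worst-case bound established in Theorem~\ref{thm:time-complexity-4-node-graphlets}, namely $\mathcal{O}(M\Delta(\bar{S}+\bar{T}))$ over all $M$ edges, against the known worst-case complexity of the state-of-the-art untyped graphlet counting algorithm~\cite{pgd,pgd-kais}, and to argue that the two are asymptotically identical. First I would recall that the untyped algorithm also proceeds edge-by-edge, enumerating for each edge $(i,j)$ the neighbors of $i$ and $j$ to identify triangles and 3-paths, and then extending these lower-order structures into 4-node graphlets. Its per-edge cost is therefore governed by exactly the same quantities $|S_i|$, $|S_j|$, and $|T_{ij}|$, yielding the same $\mathcal{O}(\Delta(|S_i|+|S_j|+|T_{ij}|))$ bound per edge that we derived in Theorem~\ref{thm:time-complexity-4-node-graphlets}.

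The core of the argument is that adding type information introduces no asymptotic overhead, and I would establish this through two observations. First, whenever the untyped algorithm touches a candidate node $w_r$ during enumeration, the typed algorithm performs the identical touch but additionally evaluates the hash function $\mathbb{F}$ and invokes \textsc{Update} (Algorithm~\ref{alg:updated-typed-motifs}); by the construction in Section~\ref{sec:typed-motif-hash-function}, $\mathbb{F}$ is computed via a perfect hash table in $o(1)$ constant time, and \textsc{Update} likewise runs in $o(1)$, so each unit of enumeration work carries only an additional constant factor and leaves the asymptotic bound unchanged. Second, the typed graphlets derived in closed form via the combinatorial relationships of Section~\ref{sec:combinatorial-relationships} (Eq.~\ref{eq:typed-4-path-center-orbit}-\ref{eq:typed-4-chordal-cycle-center-orbit}) are each computed in $o(1)$ time, and the outer loop over type pairs visits at most $L(L-1)/2+L$ pairs; since $L$ is a fixed small constant independent of the graph size, this contributes only an additive $\mathcal{O}(L^2)$ term dominated by the enumeration cost. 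Putting these together, the total typed-counting time $\mathcal{O}(M\Delta(\bar{S}+\bar{T}))$ coincides with the untyped bound, establishing the claimed match.

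The hard part will be pinning down precisely what ``the best known untyped graphlet counting algorithm'' refers to and citing its complexity in a form directly comparable to $\mathcal{O}(M\Delta(\bar{S}+\bar{T}))$, since different formulations in the literature phrase the bound in terms of average triangle/wedge counts, maximum degree, or edge-iterator cost; I would need to reconcile notation and confirm that the per-edge enumeration structure genuinely matches ours. A secondary subtlety is justifying that the $o(1)$ hash-and-update step does not degrade to $\omega(1)$ under adversarial type distributions, which rests entirely on the perfect-hash assumption (bounded $L$) of Section~\ref{sec:typed-motif-hash-function}; I would state this explicitly as the condition under which the equivalence holds.
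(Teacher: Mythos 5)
Your proposal is correct and follows essentially the same route as the paper: the paper's proof simply observes that the per-edge bound $\mathcal{O}\big(\Delta(|S_i|+|S_j|+|T_{ij}|)\big)$ from Theorem~\ref{thm:time-complexity-4-node-graphlets} is exactly the complexity of the best known untyped algorithm~\cite{pgd,pgd-kais}. Your additional arguments (constant-time hashing/\textsc{Update}, the $L(L-1)/2+L$ type-pair loop being $\mathcal{O}(1)$ for fixed $L$) merely restate material already contained in the proof of Theorem~\ref{thm:time-complexity-4-node-graphlets}, and your worry about identifying the reference algorithm is resolved in the paper by the explicit citation to PGD.
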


\begin{proof}
From Theorem~\ref{thm:time-complexity-4-node-graphlets} we have that $\mathcal{O}\big(\Delta \big(|S_i| + |S_j| + |T_{ij}|\big)\big)$, which is exactly the time complexity of the best known untyped (homogeneous) graphlet counting algorithm~\cite{pgd,pgd-kais}.
\end{proof}

\subsection{Space Complexity} \label{sec:space-complexity}
Since our approach generalizes to graphs with an arbitrary number of types $L$, the specific set of typed motifs is unknown.
As demonstrated in Table~\ref{table:typed-graphlets-example}, it is impractical to store the counts of all possible $k$-node typed motifs for any graph of reasonable size as typically done in traditional methods for untyped graphlets~\cite{pgd,pgd-kais,rage}.

Despite this being obviously impractical due to the amount of space that would be required,
the existing state-of-the-art methods such as GC~\cite{gu2018heterAlignment} store counts of all possible typed graphlets, and therefore the space complexity of such methods is:
\begin{equation} \label{eq:space-complexity-of-other-methods}
\mathcal{O}(MT_{\max})
\end{equation}\noindent
where $M=|E|$ is the number of edges in $G$ and $T_{\max}$ is the number of different possible typed graphlets with $L$ types.
Thus, $MT_{\max}$ is the total space to store $M$ vectors of length $T_{\max}$ (\ie, one $T_{\max}$-dimensional vector per edge).
To understand the above space requirements and how it is impractical for any moderately sized graph, 
suppose we have a graph with 10,000,000 ($M$) edges and $L=7$ types.
Counting all 3- and 4-node typed graphlet \emph{orbits} for every edge would require $90.72$ GB of space to store the large $MT_{\max}$ matrix (assuming 4 bytes per count/entry).
This is obviously impractical for any graph of even moderate size.

In contrast, the total space used by our approach for storing the typed graphlet counts is:
\begin{equation} \label{eq:total-actual-space}
\mathcal{O}(M\bar{T}) \; \ll \; \mathcal{O}(MT_{\max})
\end{equation}\noindent
where $\bar{T} = \frac{1}{M} \sum_{(ij) \in E} |\mathcal{X}_{ij}|$ is the average number of typed graphlets with nonzero counts per edge.
Note $|\mathcal{X}_{ij}|$ is the number of typed graphlets with nonzero counts for edge $(ij) \in E$.
Therefore, the total space is only $\mathcal{O}(M\bar{T})$.
The space of all other data structures used in Algorithm~\ref{alg:typed-motifs-exact} is small in comparison, \eg, $\Psi$ takes at most $\mathcal{O}(|V|)$ space, whereas $T_{ij}$, $S_{i}$, and $S_{j}$ take $\mathcal{O}(\Delta)$ space in the worst-case (by Property~\ref{prop:relationship-between-typed-sets-and-untyped-sets}) and can be reused for every edge.
In addition, the size of $\vx$ is independent of the graph size ($|V|+|E|$) and can also be reused.

\begin{table*}[t!]
\centering
\caption{Results comparing the proposed approach to the state-of-the-art methods in terms of runtime performance (seconds).
Since existing state-of-the-art methods are unable to handle large or even medium-sized graphs as shown below, we had to include a number of very small graphs (\eg, cora, citeseer, webkb) in order to compare with existing methods.
Note $\Delta=$ max node degree; $|\mathcal{T}_V|=$ \# of node types; $|\mathcal{T}_E|=$ \# of edge types.
}
\vspace{-3mm}
\label{table:runtime-perf}
\small
\begin{tabularx}{1.0\linewidth}{@{}
r H lllH @{}cc HH
rrr Hr r rrr H}
\toprule
&&&&&&&
&&&
\multicolumn{5}{c}{\sc seconds} &&
\multicolumn{3}{c}{\sc speedup (ours vs.)}
\\
\cmidrule(l{0pt}r{5pt}){11-15}
\cmidrule(l{15pt}r{5pt}){16-19}

& 
& 
$|V|$ & $|E|$ & $\Delta$ & 
&
$|\mathcal{T}_V|$ & 
$|\mathcal{T}_E|$ 
& &&
\;\textbf{GC}\!~\cite{gu2018heterAlignment} &
\;\textbf{ESU}\!~\cite{fanmod} &
\textbf{G-Tries}\!~\cite{ribeiro2014discovering} &
& 
\;\; \textbf{Ours} 
&\;\;&
\textbf{GC}\!~\cite{gu2018heterAlignment} &
\textbf{ESU}\!~\cite{fanmod} &
\textbf{G-Tries}\!~\cite{ribeiro2014discovering} &
\\
\midrule

\textsf{citeseer}
& 
& 3.3k & 4.5k & 99 & 
& 6 & 21
&      && 
46.27 & 
5937.75 &
144.08 & 
& 
\textbf{0.022} && 
2103x & 
269897x  & 
6549x & 
\\

\textsf{cora}
& 
& 2.7k & 5.3k & 168 & 
& 7 & 28
&      && 
467.20 & 
10051.07 & 
351.40 & 
& 
\textbf{0.032} && 
14600x & 
314095x  & 
10981x & 
\\

\textsf{fb-relationship} 
& 
& 7.3k & 44.9k & 106 & 
& 6 & 20
&      && 
1374.60 & 
54,837.69 & 
3789.17 & 
& 
\textbf{0.701} && 
1960x & 
78227x & 
5405x & 
\\

\textsf{web-polblogs}
& 
& 1.2k & 16.7k & 351 & 
& 2 & 1
&      && 
28,986.70 & 
26,577.10 & 
1,563.04 & 
& 
\textbf{1.055} && 
27475x & 
25191x & 
1481x & 
\\

\textsf{ca-DBLP}
& 
& 2.9k & 11.3k & 69 &
& 3 & 3
&      && 
149.20 & 
1,188.11 & 
18.90 & 
& 
\textbf{0.100} && 
1492x & 
11881x & 
189x & 
\\

\textsf{inf-openflights}
& 
& 2.9k & 15.7k & 242 & 
& 2 & 2
&      && 
9262.20 & 
18,839.36 & 
458.01 & 
& 
\textbf{0.578} && 
16024x & 
32594x & 
792x & 
\\

\textsf{soc-wiki-elec}
& 
& 7.1k & 100.8k & 1.1k & 
& 2 & 2
&      && 
ETL & 
ETL & 
26,468.85 & 
& 
\textbf{5.316} && 
$\infty$ & 
$\infty$ & 
45793x & 
\\

\textsf{webkb}
& 
& 262 & 459 & 122 &  
& 5 & 14
&      && 
85.82 & 
7,158.10 & 
187.22 & 
& 
\textbf{0.006} && 
14303x & 
1193016x & 
31203x & 
\\

\textsf{terrorRel}
& 
& 881 & 8.6k & 36 & 
& 2 & 3
&      && 
192.6 & 
3130.7 & 
241.1 & 
& 
\textbf{0.039} && 
4938x & 
80274x & 
6182x & 
\\

\textsf{pol-retweet} 
& 
& 18.5k & 48.1k & 786 & 
& 2 & 3
&      && 
ETL & 
ETL & 
ETL & 
&
\textbf{0.296} && 
$\infty$ & 
$\infty$ & 
$\infty$ & 
\\

\textsf{web-spam}
&
& 9.1k & 465k & 3.9k & 
& 3 & 6
&      && 
ETL & 
ETL & 
ETL & 
&
\textbf{210.97} && 
$\infty$ & 
$\infty$ & 
$\infty$ & 
\\

\textsf{movielens}
& 
& 28.1k & 170.4k & 3.6k & 
& 3 & 3
&      && 
ETL & 
ETL & 
ETL & 
& 
\textbf{5.23} && 
$\infty$ & 
$\infty$ & 
$\infty$ & 
\\

\textsf{citeulike} 
& 
& 907.8k & 1.4M & 11.2k & 
& 3 & 2 
&      && 
ETL & 
ETL & 
ETL & 
& 
\textbf{126.53} &&
$\infty$ & 
$\infty$ & 
$\infty$ & 
\\

\textsf{yahoo-msg} 
& 
& 100.1k & 739.8k & 9.4k & 
& 2 & 2
&      && 
ETL & 
ETL & 
ETL & 
&
\textbf{35.22} && 
$\infty$ & 
$\infty$ & 
$\infty$ & 
\\

\textsf{dbpedia} 
& 
& 495.9k & 921.7k & 24.8k & 
& 4 & 3 
&      && 
ETL & 
ETL & 
ETL & 
& 
\textbf{56.02} &&
$\infty$ & 
$\infty$ & 
$\infty$ & 
\\

\textsf{digg} 
& 
& 217.3k & 477.3k & 219 & 
& 2 & 2 
&      && 
ETL & 
ETL & 
ETL & 
& 
\textbf{5.592} && 
$\infty$ & 
$\infty$ & 
$\infty$ & 
\\

\textsf{bibsonomy} 
& 
& 638.8k & 1.2M & 211 & 
& 3 & 3 
&      && 
ETL & 
ETL & 
ETL & 
& 
\textbf{3.631} && 
$\infty$ & 
$\infty$ & 
$\infty$ & 
\\

\textsf{epinions}
& 
& 658.1k & 2.6M & 775 &
& 2 & 2
&      && 
ETL & 
ETL & 
ETL & & 
\textbf{85.27} &&
$\infty$ & 
$\infty$ & 
$\infty$ & 
\\

\textsf{flickr} 
& 
& 2.3M & 6.8M & 216 & 
& 2 & 2 
&      && 
ETL & 
ETL & 
ETL & 
& 
\textbf{120.79} &&
$\infty$ & 
$\infty$ & 
$\infty$ & 
\\

\textsf{orkut}
& 
& 6M & 37.4M & 166 & 
& 2 & 2
&      && 
ETL & 
ETL & 
ETL & 
& 
\textbf{1241.01} && 
$\infty$ & 
$\infty$ & 
$\infty$ & 
\\

\midrule

\textsf{ER (10K,0.001)}
& 
& 10k & 50.1k & 26 & 
& 5 & 15
&      && 
183.32 & 
5,399.14 & 
241.27 & 
& 
\textbf{0.48} && 
381x & 
11248x & 
502x & 
\\

\textsf{CL (1.8)}
& 
& 9.2k & 44.2k & 218 & 
& 5 & 15
&      && 
31,668 & 
45,399.14 & 
5,241.27 & 
& 
\textbf{1.46} && 
21690x & 
31095x & 
3589x & 
\\

\textsf{KPGM (log 12,14)}
& 
& 3.3k & 43.2k & 1.3k & 
& 5 & 15
&       && 
ETL & 
ETL & 
63,843.86 & 
& 
\textbf{8.94} && 
$\infty$ & 
$\infty$ & 
7141x & 
\\

\textsf{SW (10K,6,0.3)}
& 
& 10k & 30k & 12 & 
& 5 & 15
&    && 
21.48 & 
5,062.67 & 
206.92 & 
& 
\textbf{0.24} && 
89x & 
21094x & 
862x & 
\\

\bottomrule
\multicolumn{7}{l}{\footnotesize $^{*}$ ETL = Exceeded Time Limit (24 hours / 86,400 seconds)} 
\\
\end{tabularx}
\end{table*}

\section{Experiments} \label{sec:exp}
The experiments are designed to investigate the runtime performance (Section~\ref{sec:exp-comparison}), space-efficiency (Section~\ref{sec:exp-space-efficiency}), parallelization (Section~\ref{sec:exp-parallel-scaling}), and scalability (Section~\ref{sec:exp-scalability}) of the proposed approach.
From a computational point of view, any practical approach for this problem must have all four desired properties, \ie, fast, space-efficient, parallel with near-linear scaling, and scalable for large networks. 
To demonstrate the \emph{effectiveness} of the approach, we use a variety of heterogeneous and labeled/attributed graph data from different application domains.
All data can be accessed at NetworkRepository~\cite{nr}.
Unless otherwise mentioned, we use a serial version of the proposed typed graphlet approach for comparison with other methods.
In addition, our approach counts the frequency of all typed $\{2,3,4\}$-node graphlets globally for the entire graph and locally for every edge (\ie, local typed graphlet counting~\cite{rossi18tnnls}).
This is in contrast to the existing methods that focus on counting graphlets for every node as opposed to every edge.
As an aside, existing methods are serial (as they are difficult to parallelize effectively) and typically only solve the global or local subgraph counting problem.
All methods used for comparison are configured to count only 4-node graphlets.

\subsection{Runtime Comparison} \label{sec:exp-comparison}
We first demonstrate how fast the proposed framework is for deriving typed graphlets by comparing the runtime (in seconds) of our approach against ESU (using fanmod)~\cite{fanmod}, G-Tries~\cite{ribeiro2014discovering}, and GC~\cite{gu2018heterAlignment}.
As an aside, the proposed framework is the only approach that derives typed graphlets for every edge (as opposed to counting typed graphlets for every node as done by the existing methods).
Other key differences are summarized in Section~\ref{sec:related-work}.

For comparison, we use a wide variety of real-world graphs from different domains as well as graphs generated from four different synthetic graph models.
In Table~\ref{table:runtime-perf}, we report the time (in seconds) required by each method.
The results reported in Table~\ref{table:runtime-perf} are from a serial version of our approach.
The network statistics and properties of the graphs used for evaluation are also shown in Table~\ref{table:runtime-perf} (columns 2-6).
To be able to compare with the existing methods, we included a variety of very small graphs for which the existing methods could solve in a reasonable amount of time.
Note ETL indicates that a method did not terminate within 24 hours for that graph.
Strikingly, the existing methods are unable to handle medium to large graphs with hundreds of thousands or more nodes and edges as shown in Table~\ref{table:runtime-perf}.
Even small graphs can take hours to finish using existing methods (Table~\ref{table:runtime-perf}).
For instance, the small citeseer graph with only 3.3k nodes and 4.5k edges takes 46.27 seconds using the best existing method whereas ours finishes in a tiny fraction of a second, notably, $\nicefrac{2}{100}$ seconds.
This is about 2,100 times faster than the next best method.
Similarly, on the small cora graph with 2.7K nodes and 5.3K edges, GC takes 467 seconds whereas G-Tries takes 351 seconds.
However, our approach finishes in only 0.03 seconds.
This is 10,000 times faster than the next best method.
Unlike existing methods, our approach is able to handle large-scale graphs.
On flickr, our approach takes about 2 minutes to count the occurrences of all typed graphlets for all 6.8 million edges
Across all graphs, the proposed method achieves significant speedups over the existing state-of-the-art methods as shown in Table~\ref{table:runtime-perf}.
These results demonstrate the effectiveness of our approach for counting typed graphlets in large networks.

\newcommand{\GraphletFigScale}{0.05}
\begin{table}[h!]
\centering
\renewcommand{\arraystretch}{1.1} 
\caption{Comparing the number of unique typed motifs that occur for each induced subgraph (\eg, there are 40 different typed triangles that appear in citeseer where each typed triangle count has a different type/color configuration).
}
\label{table:unique-typed-motif-occur}
\vspace{-3mm}
\fontsize{7.5}{8.5}\selectfont
\begin{tabularx}{1.0\linewidth}{l HHHHHH XX XXXXXX HHH HHHHH@{}
}
\toprule
\textbf{Network data}  &   
$|V|$  &  $|E|$  &  $\Delta$ &&&&
\includegraphics[scale=0.8]{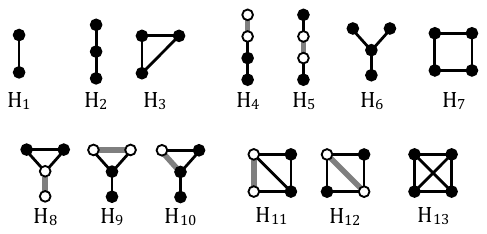} &
\includegraphics[scale=0.8]{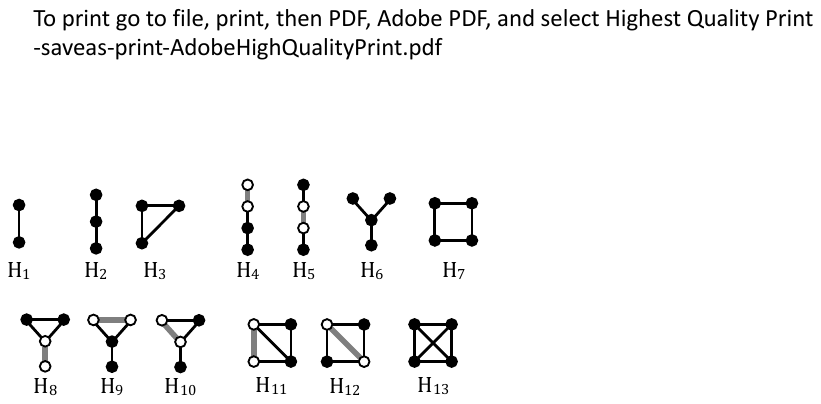} &
\includegraphics[scale=0.15]{fig9.pdf} &
\includegraphics[scale=0.8]{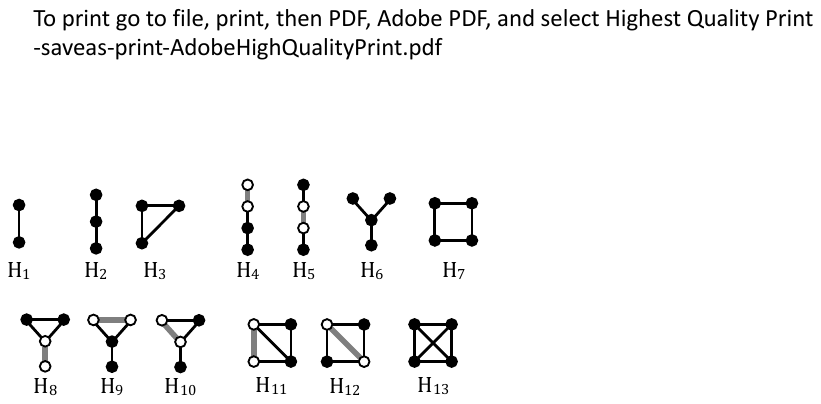} &
\includegraphics[scale=0.8]{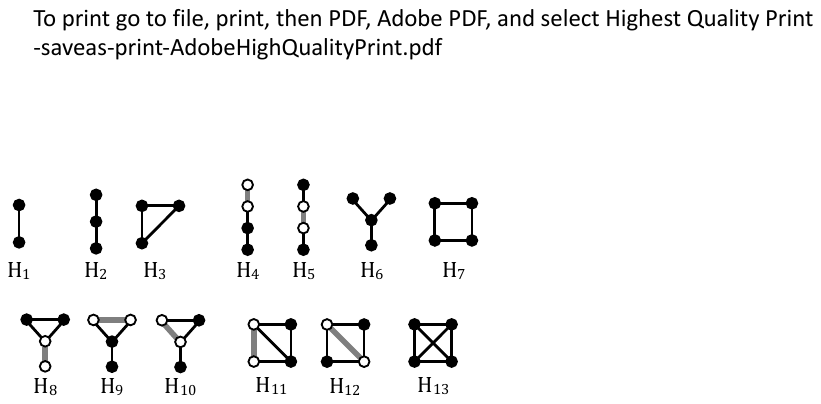} &
\includegraphics[scale=0.15]{fig12.pdf} &
\includegraphics[scale=0.14]{fig13.pdf} &
\includegraphics[scale=0.8]{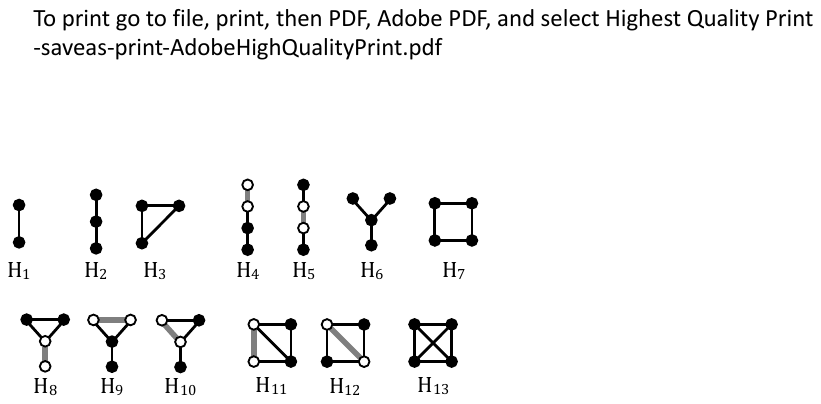} &&
\\
\midrule

\textsf{citeseer} & 3.3k & 4.5k & 99 & &&& 
56 & 40 & 124 & 119 & 66 & 98 & 56 & 19 & \\

\textsf{cora} & 2.7k & 5.3k & 168 & &&&
82 & 49 & 202 & 190 & 76 & 157 & 73 & 19 & \\

\textsf{fb-relationship} & 7.3k & 44.9k & 106 & &&&
50 & 47 & 112 & 109 & 85 & 106 & 89 & 77 & \\

\textsf{web-polblogs} & 1.2k & 16.7k & 351 & &&&
4 & 4 & 5 & 5 & 5 & 5 & 5 & 5 &  \\ 

\textsf{ca-DBLP} & 2.9k & 11.3k & 69 & &&& 
10 & 10 & 15 & 15 & 15 & 15 & 15 & 15 & \\

\textsf{inf-openflights} & 2.9k & 15.7k & 242 & &&&
4 & 4 & 5 & 5 & 5 & 5 & 5 & 5 & \\

\textsf{soc-wiki-elec} & 7.1k & 100.8k & 1.1k & &&&
4 & 4 & 5 & 5 & 5 & 5 & 5 & 5 & \\

\textsf{webkb} & 262 & 459 & 122 & &&&
31 & 21 & 59 & 59 & 23 & 51 & 32 & 8 & \\

\textsf{terrorRel} & 881 & 8.6k & 36 & &&&
4 & 4 & 5 & 0 & 4 & 5 & 5 & 5 & \\

\textsf{pol-retweet} & 18.5k & 48.1k & 786 & &&& 
4 & 4 & 5 & 5 & 5 & 5 & 5 & 4 & \\

\textsf{web-spam} & 9.1k & 465k & 3.9k & &&&
10 & 10 & 15 & 15 & 15 & 15 & 15 & 15 & \\

\textsf{movielens} & 28.1k & 170.4k & 3.6k & &&&
7 & 1 & 6 & 9 & 6 & 3 & 3 & 0 &  \\ 

\textsf{citeulike} & 907.8k & 1.4M & 11.2k & &&&
5 & 0 & 3 & 6 & 3 & 0 & 0 & 0 &  \\

\textsf{yahoo-msg} & 100.1k & 739.8k & 9.4k & &&&
3 & 2 & 3 & 4 & 3 & 3 & 3 & 2 &  \\

\textsf{dbpedia} & 495.9k & 921.7k & 24.8k & &&&
8 & 0 & 6 & 10 & 5 & 0 & 0 & 0 &  \\

\textsf{digg} & 253.2k & 1.2M & 533 & &&&
4 & 3 & 4 & 5 & 4 & 4 & 4 & 2 &  \\

\textsf{bibsonomy} & 638.8k & 1.2M & 211 & &&&
7 & 1 & 6 & 9 & 6 & 3 & 3 & 0 & \\

\textsf{epinions} & 658.1k & 2.6M & 775 & &&&
3 & 2 & 3 & 4 & 3 & 3 & 3 & 2 &  \\

\textsf{flickr} & 2.3M & 6.8M & 216 & &&&
3 & 2 & 3 & 4 & 3 & 3 & 3 & 2 &  \\

\textsf{orkut} & 6M & 37.4M & 166 & &&&
4 & 3 & 4 & 4 & 3 & 4 & 3 & 2 &  \\

\midrule
\textsf{ER (10K,0.001)} & 10k & 50.1k & 26 & &&&
35 & 30 & 70 & 70 & 69 & 66 & 1 & 0 & \\

\textsf{CL (1.8)} & 9.2k & 44.2k & 218 & &&&
35 & 35 & 70 & 70 & 70 & 70 & 70 & 68 & \\

\textsf{KPGM (log 12,14)} & 3.3k & 43.2k & 1.3k & &&&
35 & 35 & 70 & 70 & 70 & 70 & 70 & 70 & \\

\textsf{SW (10K,6,0.3)} & 10k & 30k & 12 & &&&
35 & 35 & 70 & 70 & 70 & 70 & 70 & 69 & \\

\bottomrule
\end{tabularx}
\end{table}

Table~\ref{table:unique-typed-motif-occur} provides the number of unique typed motifs that occur for each induced subgraph.
From these results, we make an important observation.
In real-world graphs we observe that certain typed motifs do not occur at all in the graph.
We define such typed motifs that do not occur in $G$ as \emph{forbidden typed motifs} as their appearance in the future would indicate something strong.
For instance, perhaps an anomaly or malicious activity.
Other interesting insights and applications of typed graphlets are discussed and explored further in Section~\ref{sec:exploratory-analysis}.

In addition to the large collection of real-world networks used for evaluation, we also generated synthetic graphs from 4 different graph models including:
Erd\H{o}s-R\'enyi (ER) model~\cite{erdos1960evolution},
Chung-Lu (CL) graph model~\cite{chung2002connected},
Kronecker Product Graph Model (KPGM)~\cite{leskovec2010kronecker}, and
Watts-Strogatz Small-World (SW) graph model~\cite{watts1998collective}.
Since these graph models do not generate types/labels, we must assign them.
Given $L$ node types, we assign types to nodes uniformly at random such that $\frac{N}{L}$ nodes are assigned to every type. 
In these experiments, we used $L=5$ types.
Results are provided at the bottom of Table~\ref{table:runtime-perf} in the last 4 rows.
Overall, our proposed method is orders of magnitude faster than all other methods.
Compared to G-Tries, our method is between 502x and 7141x faster for these graphs.
Compared to GC, our method is between 89x and 21690x faster.

\begin{table}[t!]
\centering
\setlength{\tabcolsep}{8pt}
\renewcommand{\arraystretch}{1.2} 
\caption{Comparing the \emph{space} used by our approach to the state-of-the-art methods.
Note existing methods derive counts for the nodes, whereas the proposed method derives counts for each edge, and since $|E| > |V|$ then methods that compute counts for nodes should require less space.
}
\vspace{-3mm}
\label{table:space-results}
\small
\begin{tabularx}{1.0\linewidth}{lH @{} rr cc}
\toprule
& &
\multicolumn{1}{c}{\rotatebox{0}{\textsf{citeseer}}} & 
\multicolumn{1}{c}{\rotatebox{0}{\textsf{cora}}} & 
\multicolumn{1}{c}{\rotatebox{0}{\textsf{movielens}}} & 
\multicolumn{1}{c}{\rotatebox{0}{\textsf{web-spam}}}
\\ \midrule

\textbf{GC}\!~\cite{gu2018heterAlignment} & &
30.1MB & 50.4MB & 
ETL & 
ETL 
\\

\textbf{ESU}\!~\cite{fanmod} & & 
13.4MB & 46.2MB & 
ETL &
ETL 
\\

\textbf{G-Tries}\!~\cite{ribeiro2014discovering} & & 
161.9MB & 448.6MB & 
ETL & 
ETL 
\\

\midrule
\textbf{Ours} & &
\textbf{316KB} & \textbf{578KB} &
\textbf{22.5MB} &
\textbf{128.9MB} 
\\

\bottomrule
\multicolumn{5}{l}{\footnotesize $^{*}$ ETL = Exceeded Time Limit (24 hours / 86,400 seconds)} 
\\
\end{tabularx}
\end{table}

\subsection{Space Efficiency Comparison} \label{sec:exp-space-efficiency}
We theoretically showed the space complexity of our approach in Section~\ref{sec:space-complexity}.
In this section, we empirically investigate the space-efficiency of our approach compared to ESU (using fanmod)~\cite{fanmod}, G-Tries~\cite{ribeiro2014discovering}, and GC~\cite{gu2018heterAlignment}.
For this, we again use a variety of real-world networks.
Table~\ref{table:space-results} reports the space used by each method for a variety of real-world graphs.
Strikingly, the proposed approach uses between 42x and 776x less space than existing methods as shown in Table~\ref{table:space-results}.
These results indicate that our approach is space-efficient and practical for large networks.

\subsection{Parallel Speedup} \label{sec:exp-parallel-scaling}
This section evaluates the parallel scaling of the parallel algorithm (Section~\ref{sec:parallel}).
As an aside, this work describes the first parallel typed graphlet algorithm as all existing methods are inherently serial.
In these experiments, we used a two processor, Intel Xeon E5-2686 v4 system with 256 GB of memory.
None of the experiments came close to using all the memory.
Parallel speedup is simply $S_p = \frac{T_1}{T_p}$ where $T_1$ is the execution time of the sequential algorithm, and $T_p$ is the execution time of the parallel algorithm with $p$ processing units (cores).
In Figure~\ref{fig:parallel-scaling}, we observe nearly linear speedup as we increase the number of cores.
These results indicate the effectiveness of the parallel algorithm for counting typed graphlets in general heterogeneous graphs.

In addition, Table~\ref{table:unique-typed-motif-occur-syn-KPGM} demonstrates the parallel scalability of the approach as the number of types $L$ increases.
Notably, the parallel speedup of the approach remains constant indicating that parallel performance is independent of the number of types $L$ assigned to the nodes in the graph.

\begin{figure}[h!]
\centering
\includegraphics[width=0.65\linewidth]{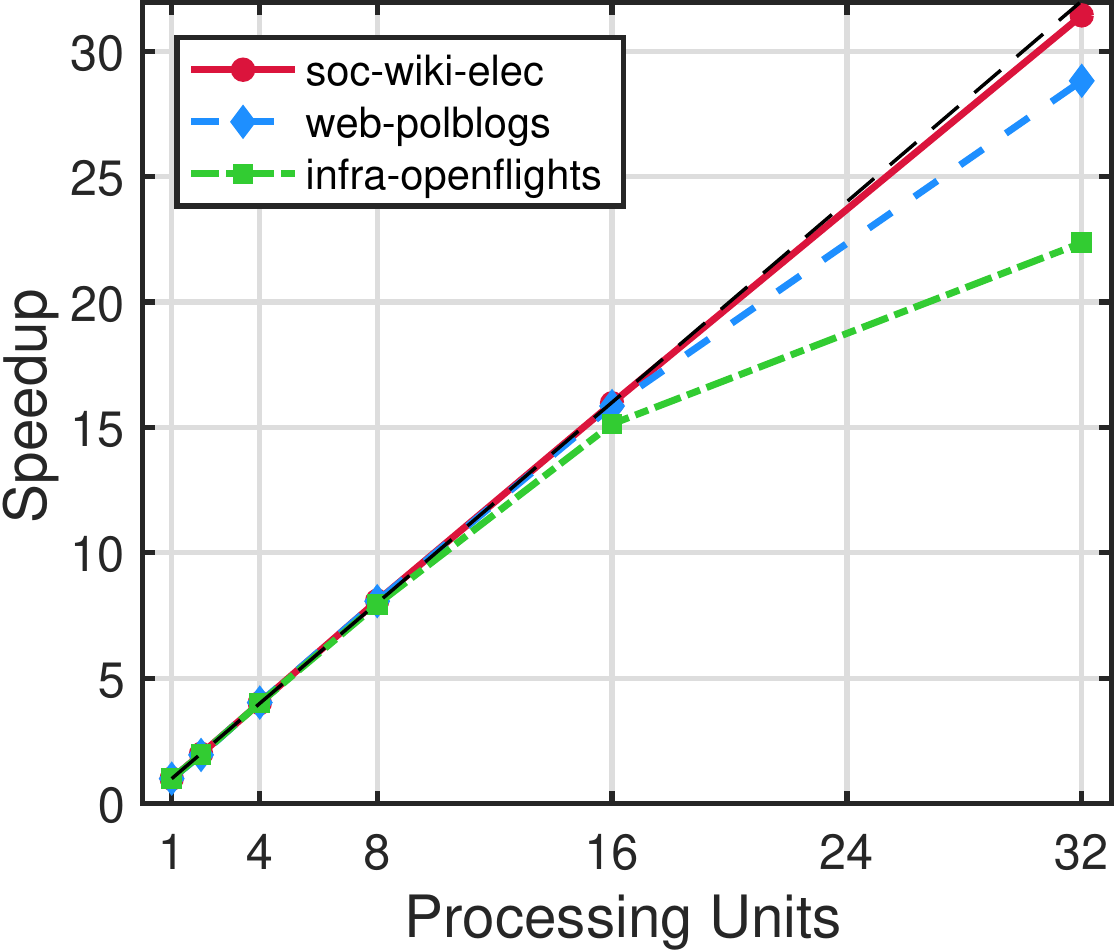}
\caption{Parallel speedup.
}
\label{fig:parallel-scaling}
\end{figure}

\subsection{Scalability} \label{sec:exp-scalability}
To evaluate the scalability of the proposed framework as the size of the graph grows (\ie, number of nodes and edges increase),  
we generate Erd\"{o}s-R\'{e}nyi graphs of increasing size (from 100 to 1 million nodes) such that each graph has an average degree of 10.
In Figure~\ref{fig:exp-runtime-ER}, we observe that our approach scales linearly as the number of nodes and edges grow large.
Furthermore, the scalability of the typed graphlet framework is independent of the number of types as shown in Figure~\ref{fig:exp-runtime-ER}.
As an aside, our approach takes less than 2 minutes to derive all typed $\{2,3,4\}$-node graphlets for a large graph with 1 million nodes and 10 million edges. 
Note that existing methods are not shown in Figure~\ref{fig:exp-runtime-ER} since they are unable to handle medium to large-sized graphs as shown previous in Table~\ref{table:runtime-perf}.

\begin{figure}[h!]
\centering
\includegraphics[width=0.60\linewidth]{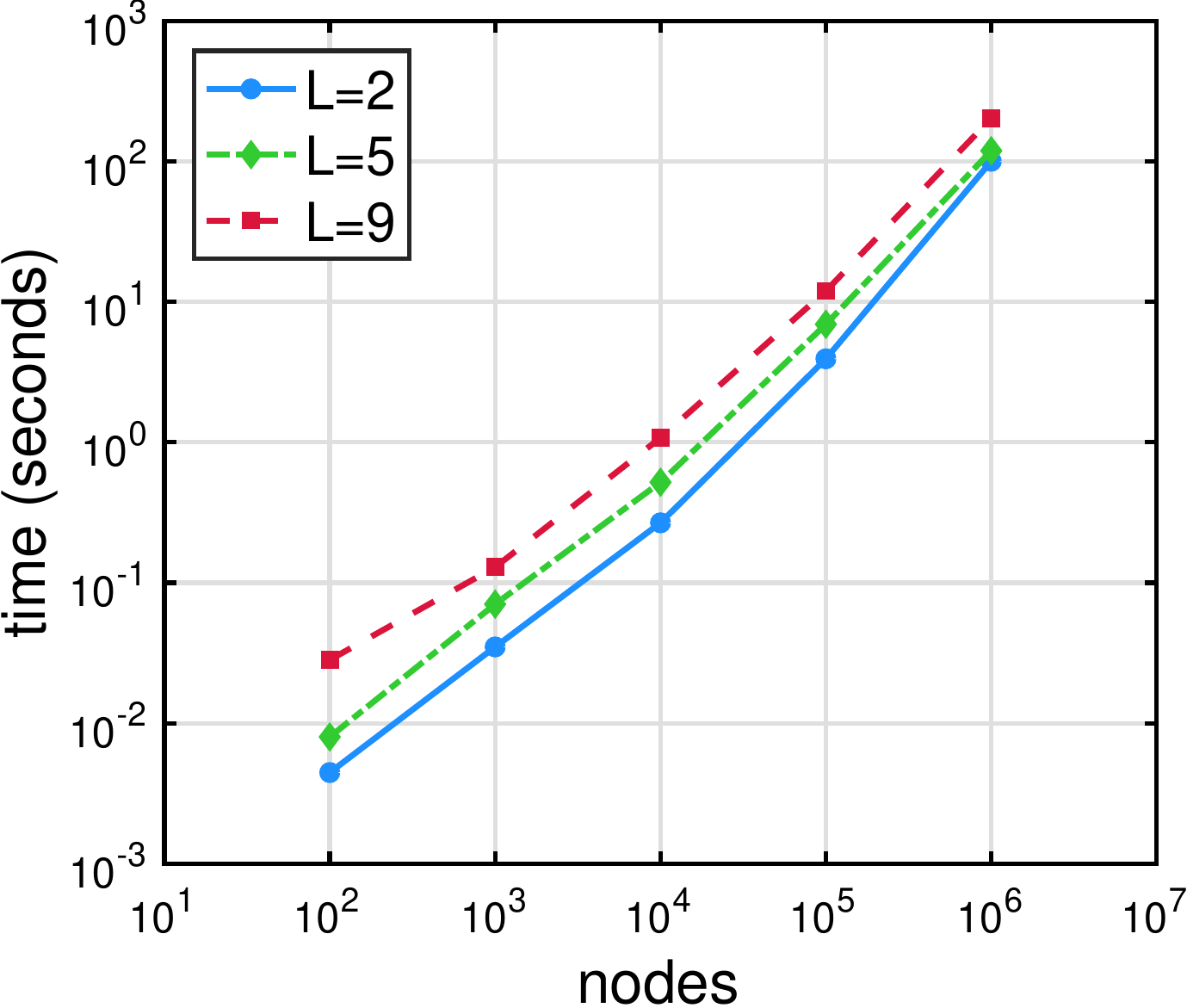}
\caption{
Runtime comparison on  Erd\protect\"{o}s-R\protect\'{e}nyi graphs with an average degree of 10.
Our approach is shown to scale linearly as the size of the graph grows increasingly large.
}
\label{fig:exp-runtime-ER}
\end{figure}

\subsection{Synthetic Graph Experiments} \label{sec:exp-syn-graph-exp}
In these experiments, we generate synthetic graphs.
For each graph, we vary the number of node types $L$ from 2 to 9, and measure the runtime performance as the number of node types increases as well as the impact in terms of space as $L$ increases.
Given $L \in \{2,\ldots,9\}$ node types, we assign types to nodes uniformly at random such that $\frac{N}{L}$ nodes are assigned to every type. 

\begin{table}[b!]
\centering
\renewcommand{\arraystretch}{1.2} 
\setlength{\tabcolsep}{3pt}
\caption{Comparing the number of unique typed motifs that occur for each induced subgraph as we vary the number of types $L$ using a KPGM graph with 3.3k nodes, 43.2k edges, average degree = 26, and max degree = 1.3K.
Speedup is shown in parenthesis.
}
\label{table:unique-typed-motif-occur-syn-KPGM}
\vspace{-3mm}
\fontsize{7.5}{8.5}\selectfont
\begin{tabularx}{1.0\linewidth}{Hc cc cccccc c H cc H 
HHHHHH@{}
}
\toprule

&& && &&&&&&&& \multicolumn{2}{c}{
\vspace{-3mm}
\textbf{time} (sec.)\quad\quad\;\;\;\;
}
\\

& $L$  &
\includegraphics[scale=0.8]{fig7.pdf} &
\includegraphics[scale=0.8]{fig8.pdf} &
\includegraphics[scale=0.15]{fig9.pdf} &
\includegraphics[scale=0.8]{fig10.pdf} &
\includegraphics[scale=0.8]{fig11.pdf} &
\includegraphics[scale=0.15]{fig12.pdf} &
\includegraphics[scale=0.14]{fig13.pdf} &
\includegraphics[scale=0.8]{fig14.pdf} &&&
\textbf{serial} & 
\textbf{parallel} - 4 cores & 
\\
\midrule

\multirow{9}{*}{\rotatebox{90}{\textbf{KPGM}}} & 
\textbf{2} & 4 & 4 & 5 & 5 & 5 & 5 & 5 & 5 &&&
8.11 & 2.08 \;(\emph{3.89x}) \\

& \textbf{5} & 35 & 35 & 70 & 70 & 70 & 70 & 70 & 70 &&&
8.94 & 2.26 \;(\emph{3.95x})  \\

& \textbf{9} & 165 & 165 & 495 & 495 & 495 & 495 & 495 & 495 &&&
10.37 & 2.62 \;(\emph{3.95x})  \\

\bottomrule
\end{tabularx}
\end{table}

\subsubsection{Impact on Performance}
We first investigate the runtime performance of our approach as the number of types $L$ increases from 2 to 9.
We use both a serial and parallel implementation of our method for comparison.
Results are shown in Table~\ref{table:unique-typed-motif-occur-syn-KPGM}.
Notably, the parallel speedup of the parallel algorithm is constant regardless of $L$.
Therefore, it is not impacted by the increase in $L$.
Furthermore, the runtime of both the serial and parallel algorithm increases slightly as $L$ increases. 
Notice the additional work depends on the number of unique typed graphlets (the sum of columns 2-9 in Table~\ref{table:unique-typed-motif-occur-syn-KPGM}) and not directly on $L$ itself. 
The total amount of unique typed graphlets substantially increases as $L$ increases from $2$ to $9$ as shown in Table~\ref{table:unique-typed-motif-occur-syn-KPGM}.
This is primarily due to the random assignment of types to nodes.
However, in sparse real-world graphs the total unique typed graphlets is typically much smaller as shown in Table~\ref{table:unique-typed-motif-occur}.

\begin{figure}[h!]
\includegraphics[width=0.6\linewidth]{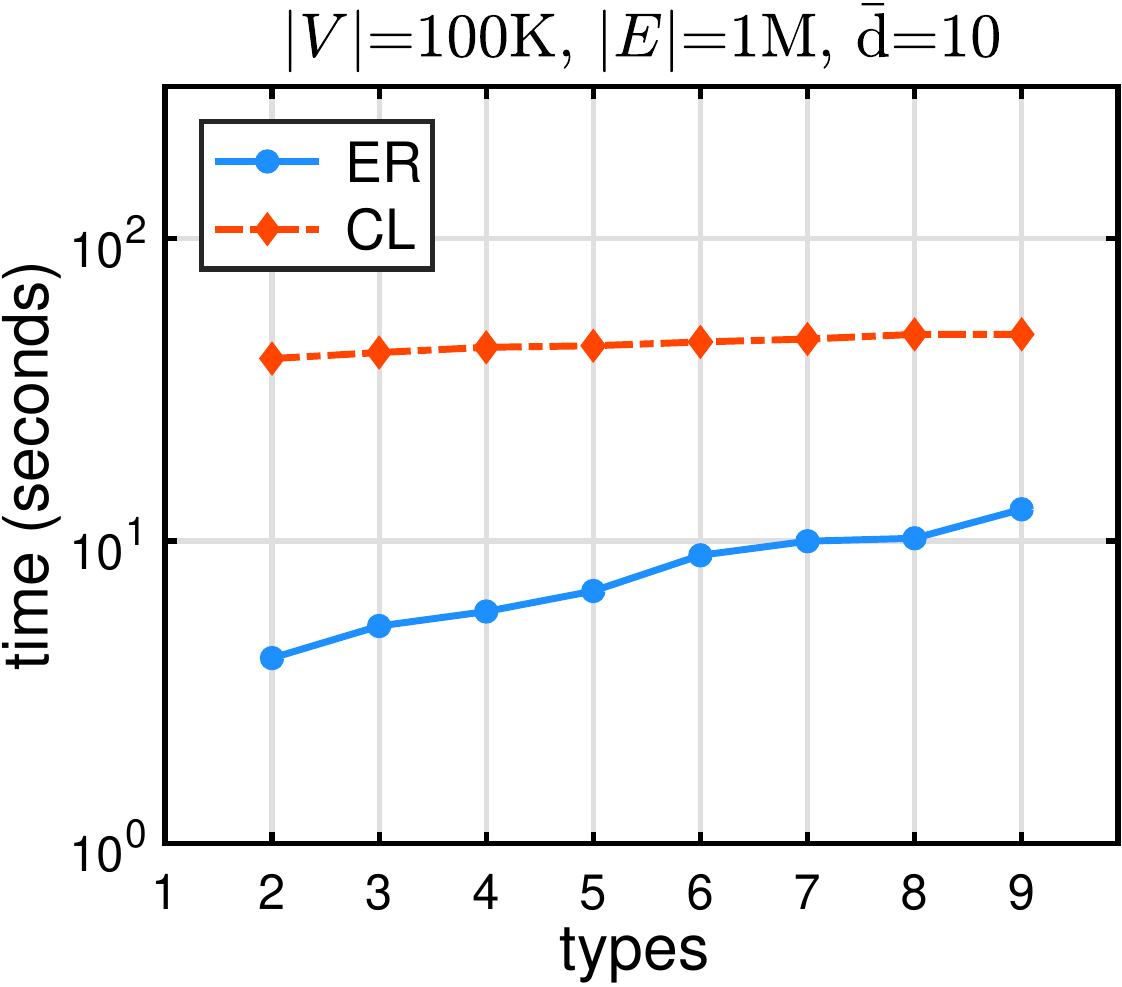}
\caption{Comparing runtime performance as the number of types increases.
In the case of CL graphs with a skewed degree distribution, the runtime is nearly constant as the number of types increases.
Note $\bar{\mathrm{d}} = \frac{1}{n} \sum d_i$ denotes average node degree.
}
\label{fig:runtime-ER-CL-vs-varyingNumTypes}
\end{figure}

To further understand how the structure of the graph impacts runtime, we generate an ER and Chung-Lu (CL) graph with 100K nodes, 1M edges, and average degree 10.
We vary the number of types $L$ and assign types to nodes uniformly at random as discussed previously.
Notice that both the ER and CL graph are generated such that they each have 100K nodes, 1 million edges, with average degree 10.
However, both graphs are structurally very different.
For instance, the degrees among the nodes in the ER graph are more uniform whereas the degree of the nodes in CL are skewed such that a few nodes have very large degree while the others have relatively small degree.
We observe in Figure~\ref{fig:runtime-ER-CL-vs-varyingNumTypes} that for CL graphs with a skewed degree distribution, the runtime of the approach as the number of types increases is essentially constant.
This result is important as most real-world graphs also have a skewed degree distribution (social networks, web graphs, information networks, etc.)~\cite{Girvan2002,Faloutsos1999}.
However, even in the case where the degrees are more uniform across the nodes, our approach still performs well as shown in Figure~\ref{fig:runtime-ER-CL-vs-varyingNumTypes}.

\subsubsection{Impact on Space}
Figure~\ref{fig:space-ER-CL-vs-varyingNumTypes} shows the memory (space) required by our approach as the number of types increases from $L\in \{2,\ldots,9\}$.
Both ER and CL graphs are shown to have similar space requirements. 
This is likely due to the random assignment of types to nodes.
This assignment represents a type of worst case since every edge is likely to have significantly more distinct typed graphlets compared to sparse real-world graphs.
This difference can be seen in Table~\ref{table:unique-typed-motif-occur}.

\begin{figure}[h!]
\includegraphics[width=0.6\linewidth]{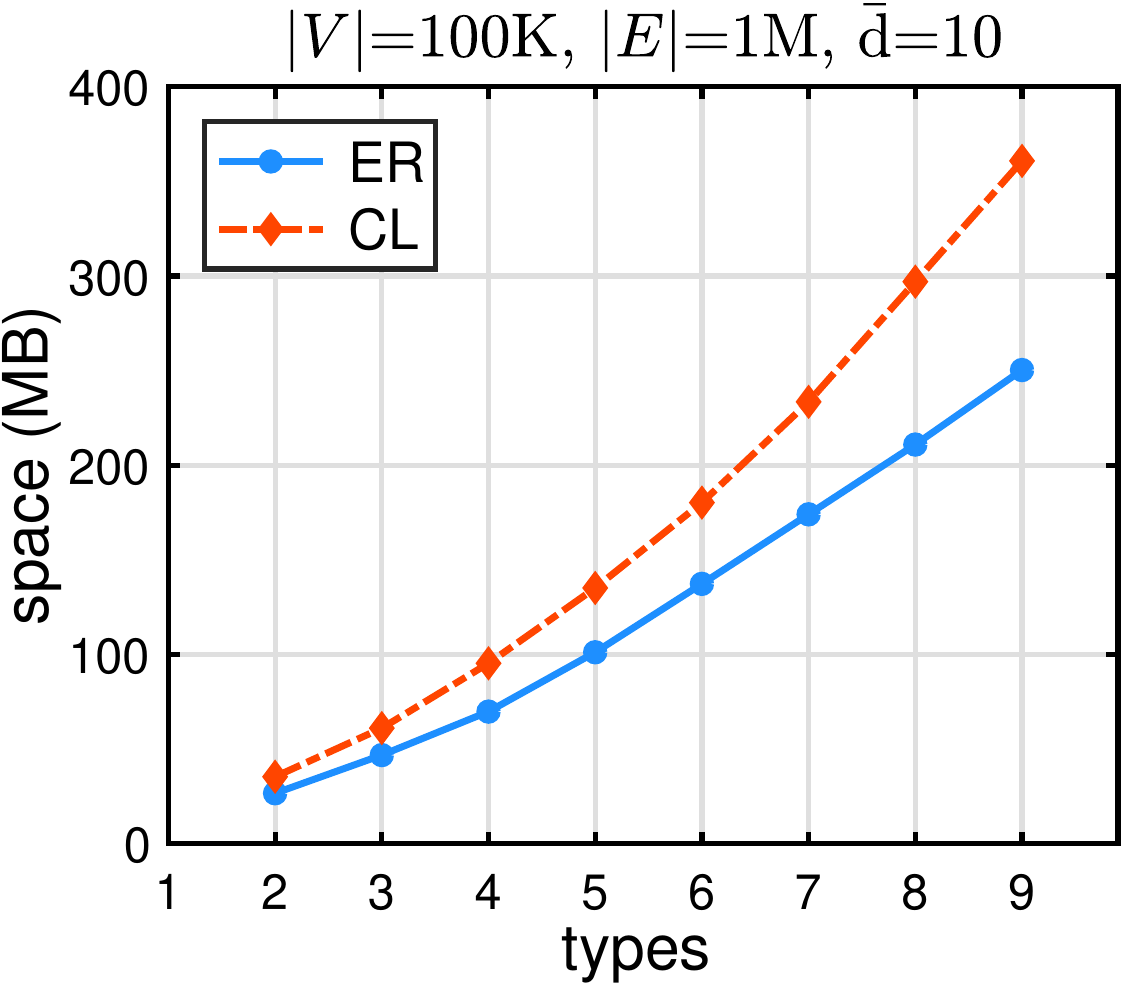}
\caption{Comparing memory (space) required by our approach as the number of types increases.
Note $\bar{\mathrm{d}} = \frac{1}{n} \sum d_i$ denotes average node degree.
}
\label{fig:space-ER-CL-vs-varyingNumTypes}
\end{figure}

\subsection{Exploratory Analysis} \label{sec:exploratory-analysis}
This section demonstrates the use of heterogeneous network motifs for graph mining and exploratory analysis.

\subsubsection{Political retweets}
The political retweets data consists of 18,470 Twitter users.
The graph has 61,157 links representing retweets.
There are 24,815 triangles in the political retweet network.
Triangles in this graph indicate that users retweeted by an individual also retweet each other (\ie, triangle = three users that have all mutually retweeted each other).
Triangles may represent users with similar interests.
However, triangles alone do not reveal any additional information about the users.
To study the (higher-order) structural characteristics of users in this network $\wrt$ their political orientation, we assign types to nodes based on their political leanings (\ie, left, right).
By using the political leanings as the type, we can directly apply the proposed approach to investigate a variety of interesting questions.
Interestingly, the 24,815 (untyped) triangles are distributed as follows:
\begin{center}
\begin{tabular}{l r@{} cccc @{}l r}
&&
\includegraphics[width=5mm]{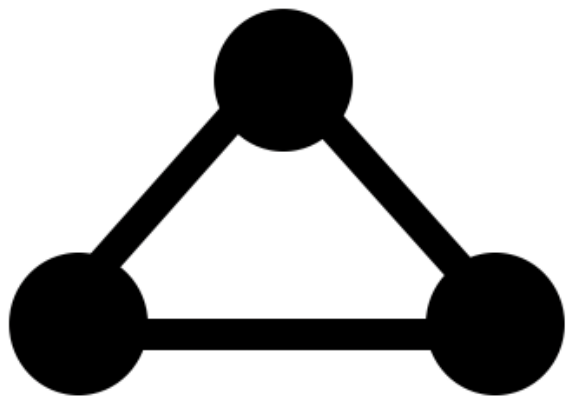} &
\includegraphics[width=5mm]{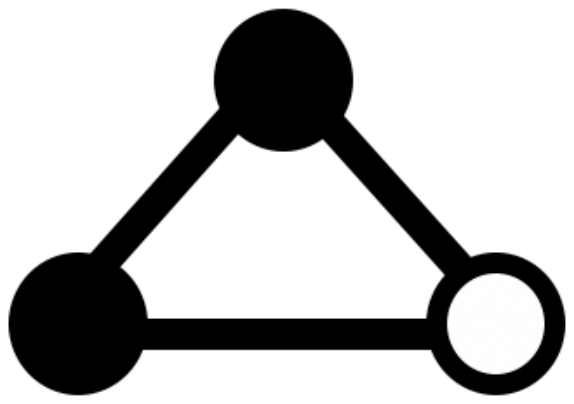} &
\includegraphics[width=5mm]{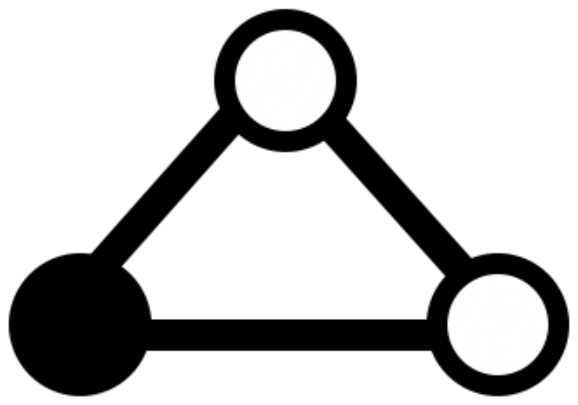} &
\includegraphics[width=5mm]{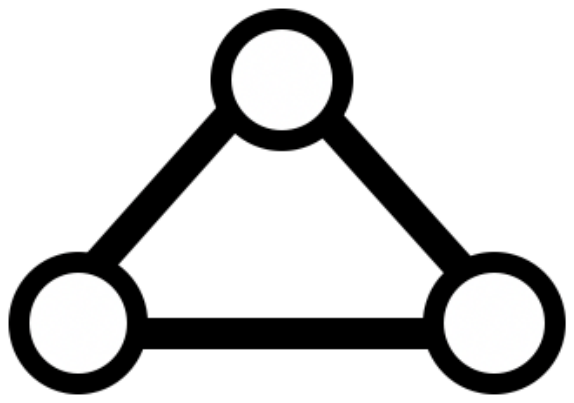} \\

\quad\quad\quad\quad\quad&
$\vp = \mathbf{\big[}\,$ &
0.608 & 0.003 & 0.001 & 0.388 &
$\,\big]$ &
$\quad\quad\quad\quad $
\\
\end{tabular}
\smallskip
\end{center}\noindent
Notably, we observe that $60.86\%$ and $38.79\%$ of the 24,815 triangles are formed among users with the same political leanings.
This implies that three users with the same political leanings are more likely to retweet each other than with users of different political leanings.
These results indicate the presence of homophily~\cite{mcpherson2001homophily} as users tend to retweet similar others.
Furthermore, these homogeneous typed triangles 
(\protect\includegraphics[width=4mm]{fig2}, 
\protect\includegraphics[width=4mm]{fig5}) 
account for $99.65\%$ of the 24,815 triangles.
Intuitively, this implies that the network consists of two tightly-knit communities of users of the same political leanings. The two communities are sparsely connected.
Typed triangles obviously contain significantly more information than untyped triangles.
This includes not only information about the local properties but also about the global structure of the network as shown above.
Obviously, untyped network motifs are unable to provide such insights as they do not encode the types, attribute values, or class labels associated with a network motif.
They only reveal the structural information independent of any important external information associated with the node.

We also investigated typed 4-clique motifs.
Strikingly, only 4 of the 5 typed 4-clique motifs that arise from $2$ types actually occur in the graph.
In particular, the typed 4-clique motif with 2 right users and 2 left users does not even appear in the graph.
This typed motif might indicate collusion between individuals from different political parties or some other extremely rare anomalous activity.
The other typed 4-cliques that are extremely rare are the typed 4-clique motif with 3 right (left) users and a single left (right) user.

\begin{figure}[h!]
\centering
\subfigure{\includegraphics[width=0.48\linewidth]{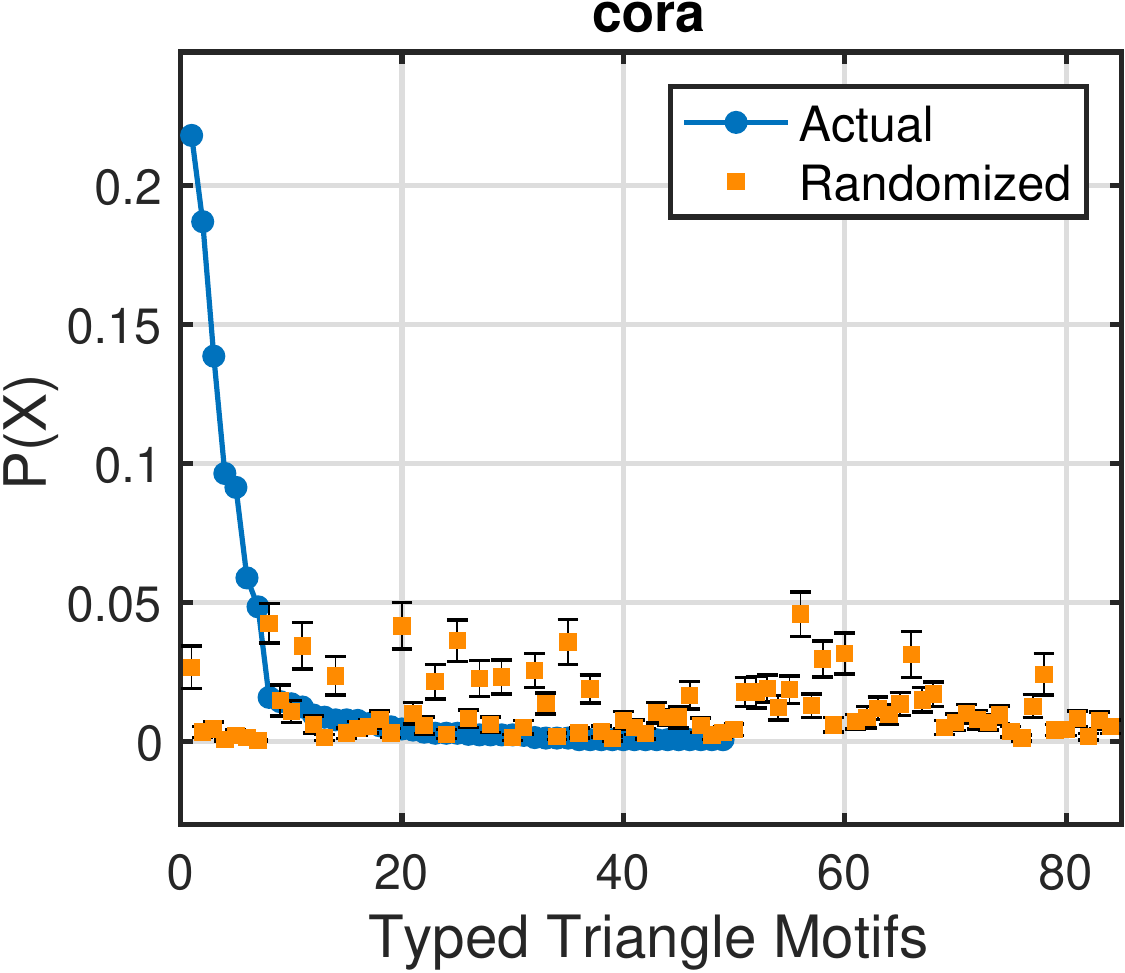}}
\hfill
\subfigure{\includegraphics[width=0.48\linewidth]{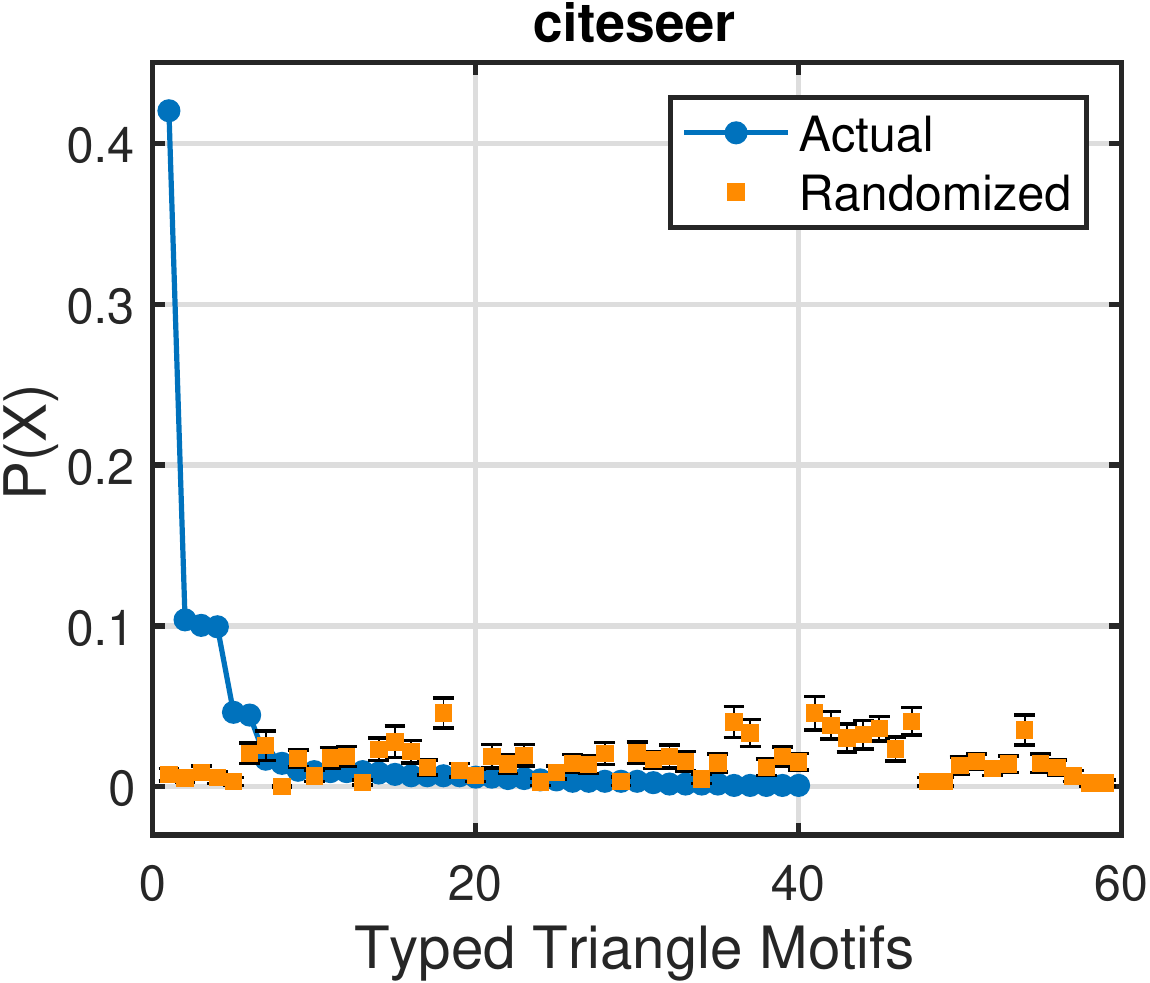}}

\vspace{-3mm}
\caption{
Comparing the actual typed triangle distribution to the randomized typed triangle distribution.
We compute 100 random permutations of the node types and run the approach on each permutation then average the resulting counts to obtain the mean randomized typed triangle distribution.
There are three key findings.
First, we observe a significant difference between the actual and randomized typed triangle distributions.
Second, many of the typed triangles that occur when the types are randomized, do not occur in the actual typed triangle distribution.
Third, we find the typed triangle distribution to be skewed (approximately power-lawed) as a few typed triangles occur very frequently while the vast majority have very few occurrences.
}
\label{fig:typed-tri-prob-dist-cora}
\end{figure}

\begin{figure*}[h!]
\centering
\subfigure{\includegraphics[width=0.32\linewidth]{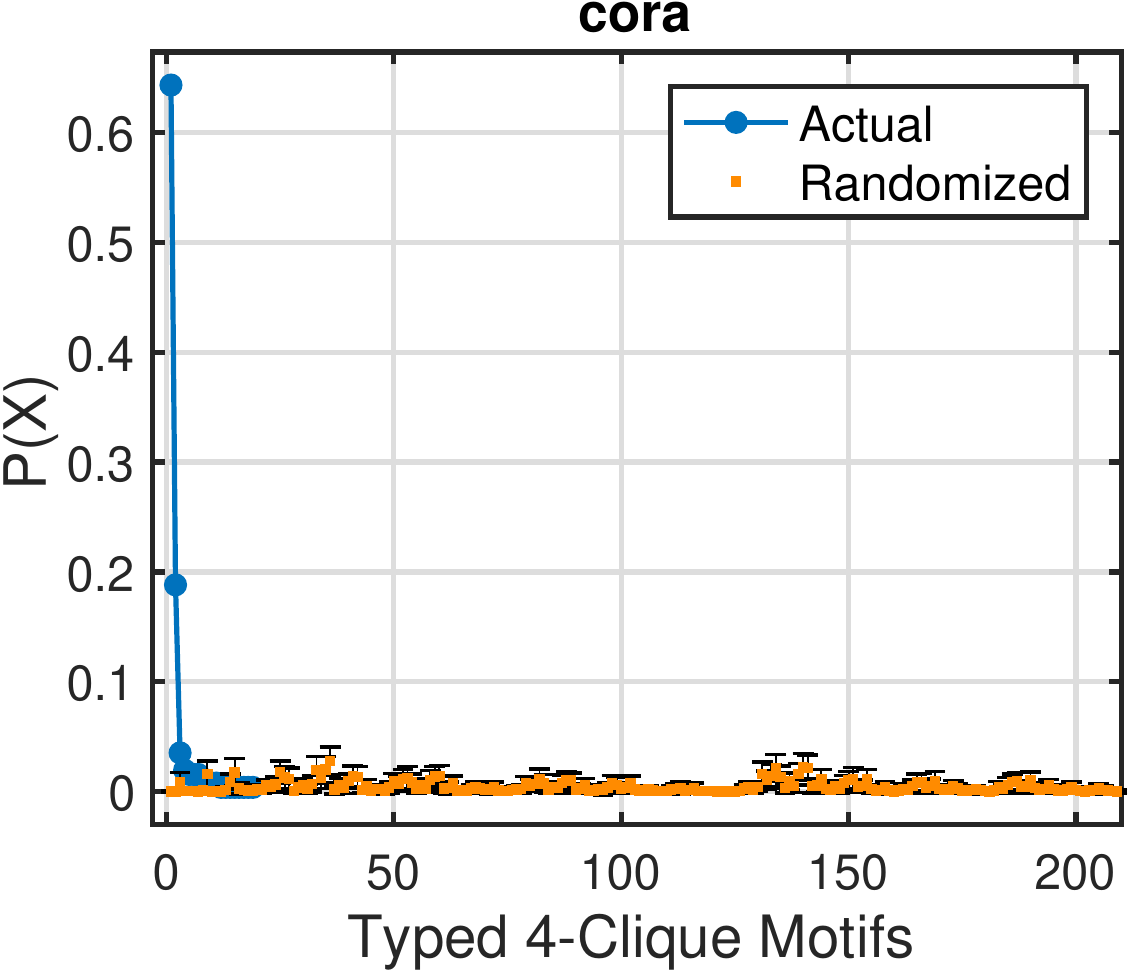}}
\hfill
\subfigure{\includegraphics[width=0.32\linewidth]{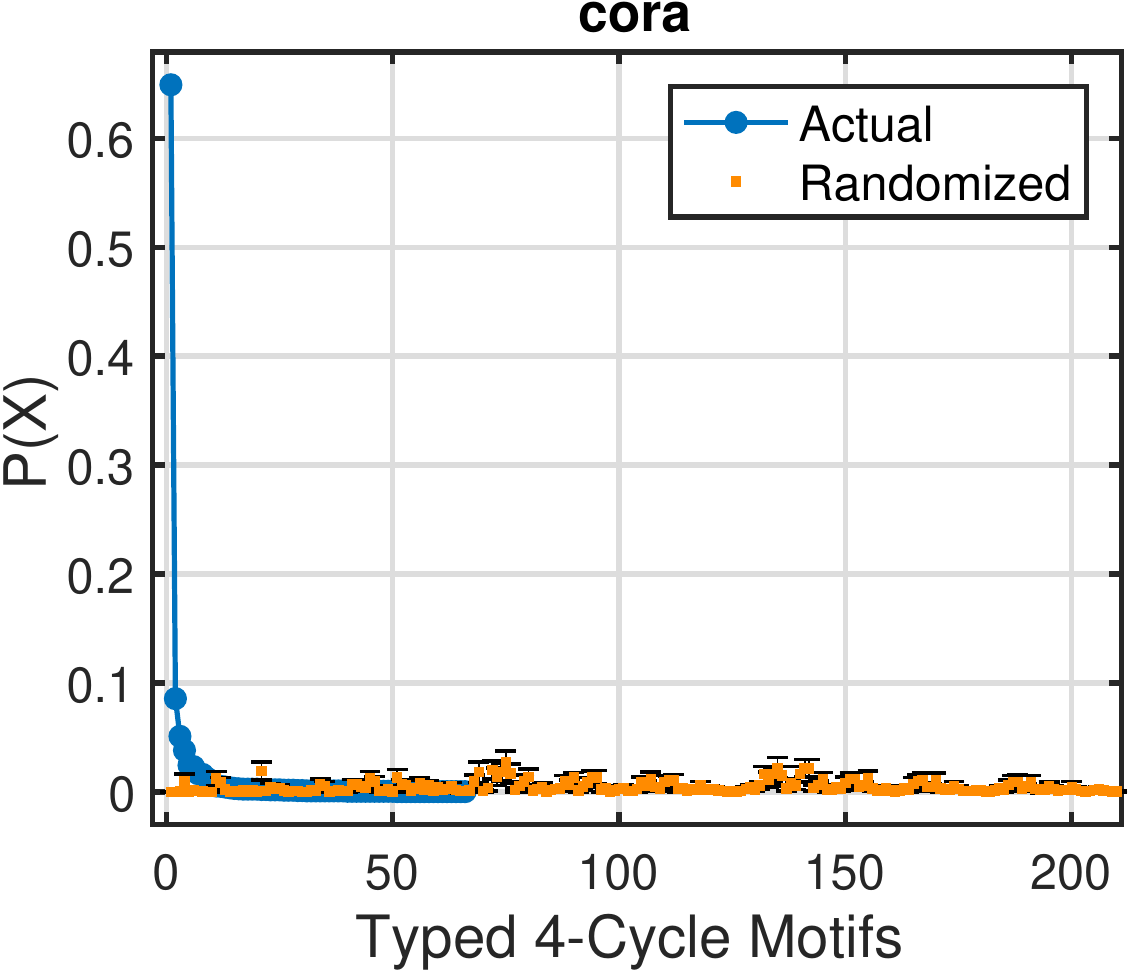}}
\hfill
\subfigure{\includegraphics[width=0.32\linewidth]{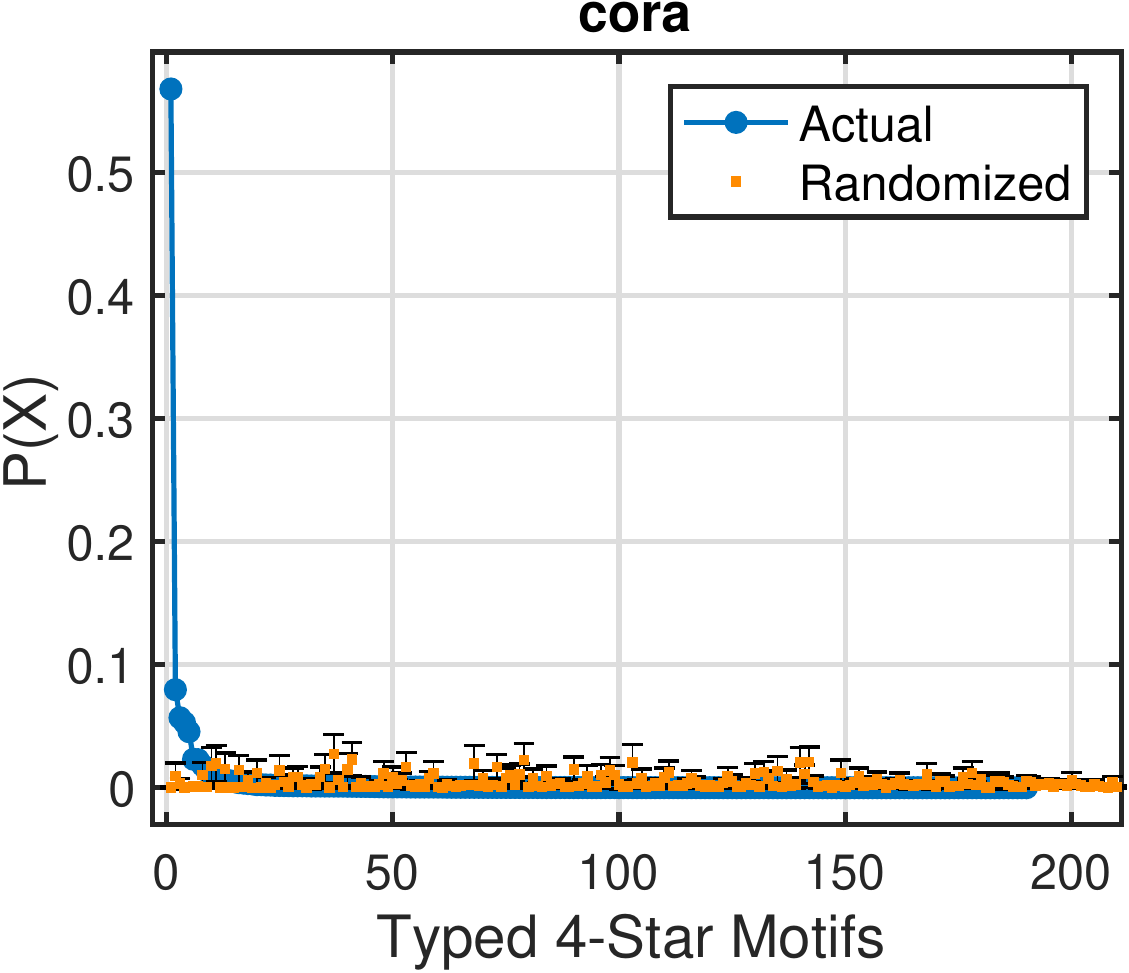}}

\subfigure{\includegraphics[width=0.32\linewidth]{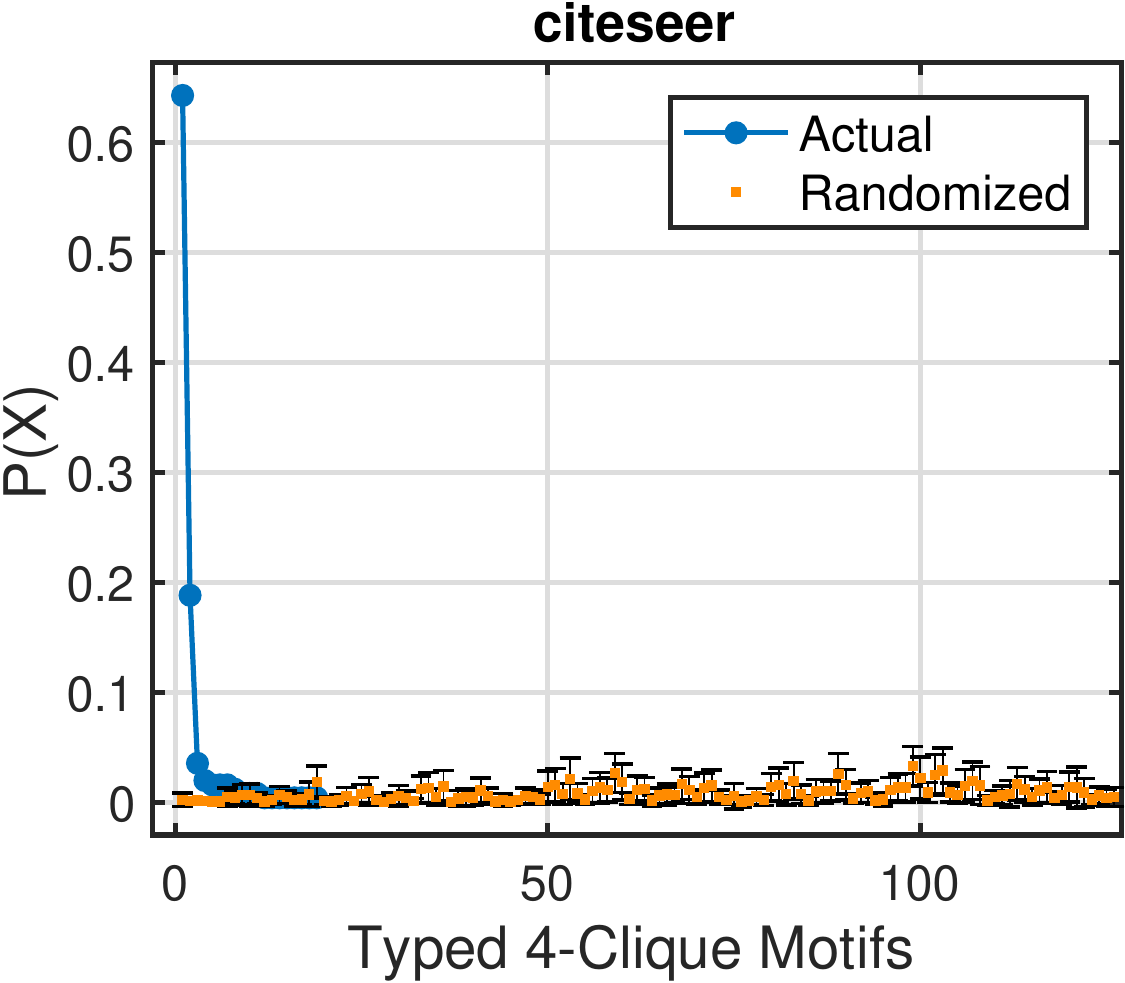}}
\hfill
\subfigure{\includegraphics[width=0.32\linewidth]{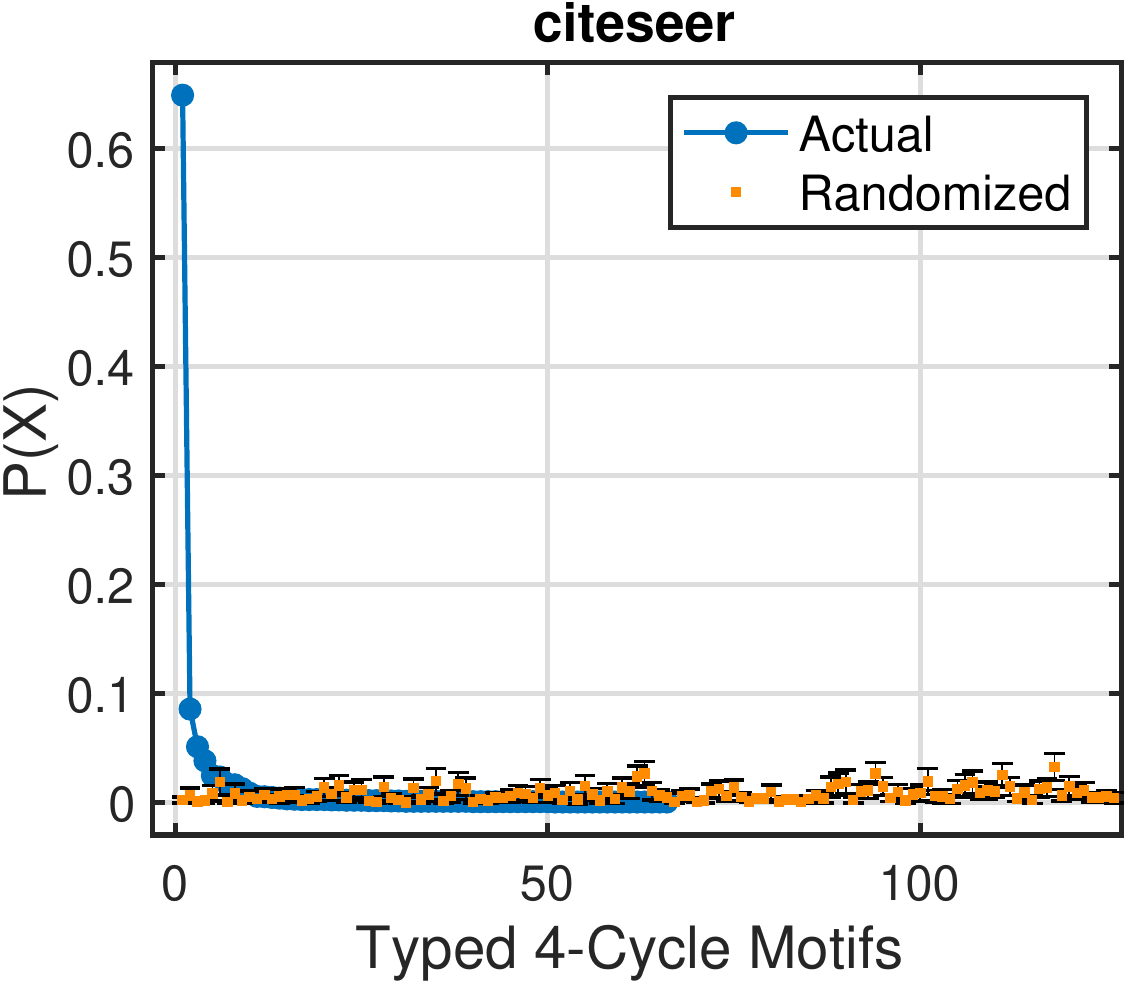}}
\hfill
\subfigure{\includegraphics[width=0.32\linewidth]{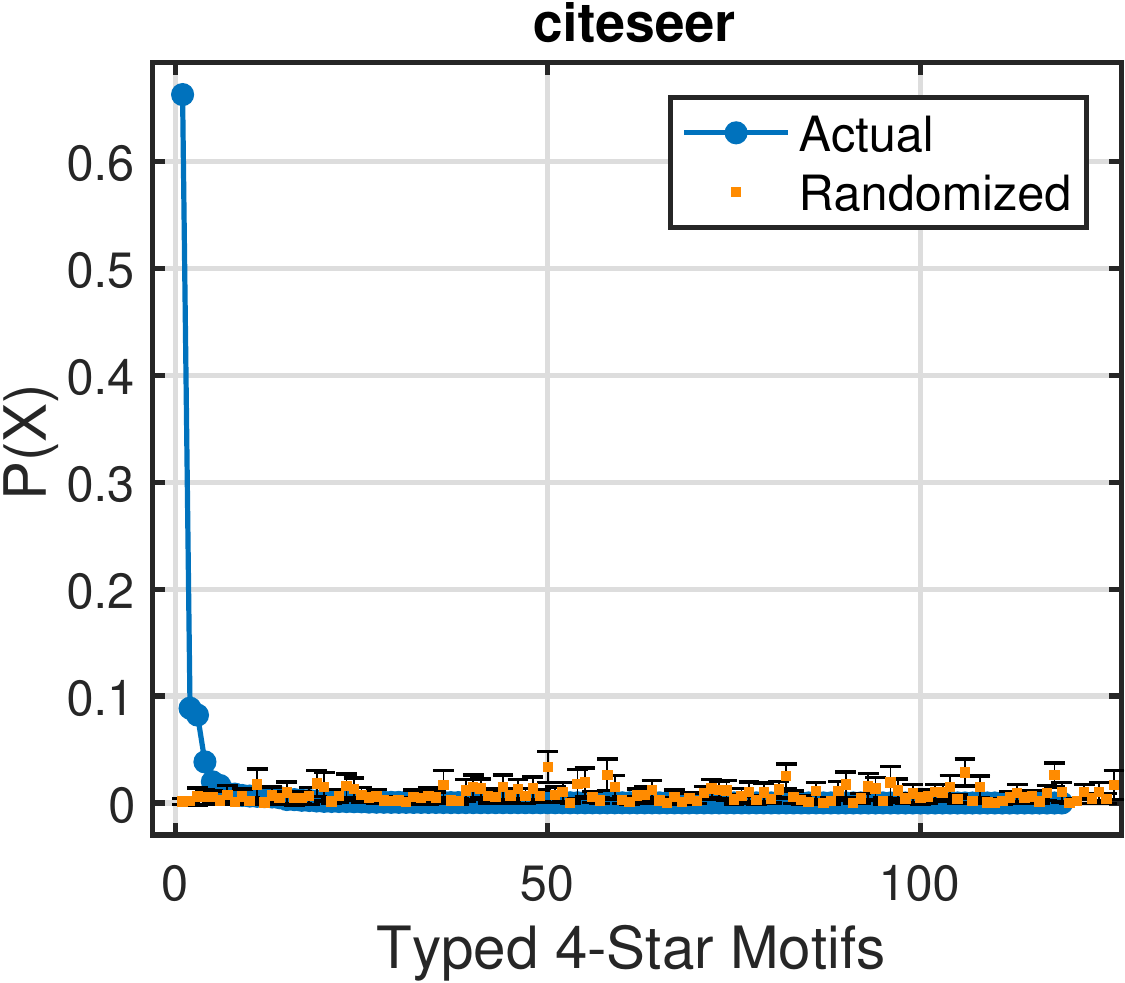}}

\vspace{-3mm}
\caption{
Comparing the actual \emph{typed} 4-clique, 4-cycle and 4-star motif distributions to the randomized typed distributions.
We compute 100 random permutations of the node types and run the approach on each permutation then average the resulting counts to obtain the mean randomized typed motif distribution.
There are three key findings.
First, we observe a significant difference between the actual and randomized typed motif distributions.
Second, many of the typed motifs that occur when the types are randomized, do not occur in the actual typed motif distribution.
Third, we find the typed motif distribution to be extremely skewed as a few typed motifs occur very frequently while the vast majority have very few occurrences (and many typed network motifs are even forbidden, in the sense that they do not occur at all in the graph).
}
\label{fig:typed-4-node-prob-dist-cora}
\end{figure*}

\subsubsection{Cora citation network}
The Cora citation network consists of 2708 scientific publications classified into one of seven types (class labels) that indicate the paper topic.
The citation network consists of 5429 links.
Using the proposed heterogeneous motifs, we find 129 typed 3-node motifs among the 168 possible typed 3-node motifs that could occur.
Notably, we observe the most frequent typed triangle motifs are of a single type.
Indeed, the first 7 typed triangle motifs with largest frequency in Figure~\ref{fig:typed-tri-prob-dist-cora} are of a single type.
Overall, these 7 typed triangle motifs account for 83.86\% of all typed triangle motifs.
This finding indicates \emph{strong homophily} among nodes.
For instance, the co-authors of a researcher are also typically co-authors, but more strongly, the co-authors are also all typically from the same research area.
Homophily has been widely studied in social networks and detecting it remains a challenging problem~\cite{LaFond2010}.
Unlike untyped motifs, typed motifs simultaneously capture the labeling and structural properties that lie at the heart of homophily~\cite{mcpherson2001homophily}.
Therefore, typed motifs provide a principled foundation for studying homophily in social networks~\cite{mcpherson2001homophily}.
In Figure~\ref{fig:typed-tri-prob-dist-cora}, we observe a large gap that clearly separates the 7 single-typed triangle motifs from the other typed triangle motifs with heterogeneous types.
Furthermore, only 49 out of the 84 possible typed triangle motifs (Table~\ref{table:typed-graphlets-example}) actually occur in the graph.
The 25 typed triangle motifs that do not occur are ``forbidden heterogeneous motifs" and can provide additional insights into the network and the processes governing the formation with respect to the types (research areas).

Figure~\ref{fig:typed-4-node-prob-dist-cora} investigates a variety of typed 4-node graphlet distributions (from most dense to least dense).
Strikingly, only 19 of the 210 possible typed 4-clique graphlets actually occur in $G$ when node types are randomly shuffled.
In the case of typed 4-node cycle graphlets, we observe 66 of the actual 210 possible typed 4-node cycle graphlets appear when node types are randomly shuffled.

\section{Related Work} \label{sec:related-work}
While the bulk of work on graphlets (network motifs) have focused on untyped/uncolored graphlets~\cite{rage,milo2002network,pgd,pgd-kais,shervashidze2009efficient11,ahmed16bigdata,zhang2013discovering,orca,hayes2013graphlet,rossi18tnnls}, there have been a few recent works for typed graphlets.\footnote{Typed graphlets are also called colored/heterogeneous graphlets or motifs.}
All of these works focus on the problem of counting typed graphlets for \emph{nodes} whereas this paper focuses on the problem of counting typed graphlets for \emph{edges} (or more generally, between a pair of nodes $i$ and $j$).
The earliest such work by Ribeiro~\etal~\cite{ribeiro2014discovering} proposed an approach called G-Tries for finding frequent and statistically significant colored motifs~\cite{fanmod,ribeiro2014discovering}.
This problem differs from the one we study in this work that focuses on deriving all such colored graphlets.
One recent work by Gu~\etal~\cite{gu2018heterAlignment} used a relaxed definition of colored graphlets for network alignment on very small heterogeneous networks.
This work focused mainly on the problem of network alignment for simple heterogeneous networks and not on the approach for deriving typed graphlets.
Nevertheless, the method GC used in that work differs from our approach in four significant ways.
First, while we leverage combinatorial relationships to derive a number of typed graphlets in $o(1)$ constant time, GC must enumerate all homogeneous graphlets in order to obtain their type/color configuration.
Therefore, our approach is significantly faster than GC as that approach requires a lot of extra work to compute the typed graphlets that we can derive in constant time.
Second, our approach is significantly more space-efficient and stores only the nonzero counts of the typed graphlets discovered at each edge.
Third, we parallelize our approach to handle large sparse networks.
As an aside, Gu~\etal~\cite{gu2018heterAlignment} claim the time complexity of their approach is equivalent to counting homogeneous graphlets.
This is only true if the homogeneous graphlet algorithm used enumerates all such homogeneous graphlets since this is the only way that step 2 of their approach could be performed, which requires them to obtain the colors of the nodes involved in the k-node graphlet found in step 1. 
However, there are much faster algorithms for homogeneous graphlets that avoid explicit enumeration of all such graphlets, \eg,~\cite{pgd,pgd-kais}.
In a similar fashion, our approach avoids enumerating all such graphlets and explicitly obtains the colors of the nodes involved in those graphlets (without actually enumerating or knowing the nodes involved in those graphlets), and therefore the computational complexity is actually equivalent to the best known homogeneous graphlet algorithm.
Other work by Carranza~\etal~\cite{higher-order-clustering-heter} used typed graphlets as a basis for higher-order spectral clustering on heterogeneous networks.

\section{Conclusion} \label{sec:conc}
\noindent
In this work, we generalized the notion of network motif to heterogeneous networks.
We proposed a fast, space-efficient, and parallel framework for counting all $k$-node typed graphlets.
We provide theoretical analysis of combinatorial arguments that capture the relationship between various typed graphlets.
Using these combinatorial relationships, the proposed approach is able to derive many typed graphlet counts directly in $o(1)$ constant time by leveraging the counts of a few typed graphlets.
Thus, the proposed approach avoids explicit enumeration of any nodes involved in those typed graphlets.
For every edge, we count a few typed graphlets and obtain the exact counts of the remaining ones in $o(1)$ constant time.
The time complexity of the proposed approaches matches that of the best untyped graphlet algorithm.
Empirically, our approach is shown to outperform the state-of-the-art in terms of runtime, space-efficiency, and scalability as it is able to handle large networks.
While existing methods take hours on small graphs with thousands of edges, our typed graphlet counting approach takes only seconds on networks with millions of edges.
Finally, unlike other approaches, the proposed approach is able to handle large-scale general heterogeneous networks with an arbitrary number of types and with millions or more edges 
while lending itself to an efficient and highly scalable (asynchronous \& lock-free) parallel implementation.

\balance
\bibliographystyle{ACM-Reference-Format}
\bibliography{paper}

\end{document}